\documentclass[final,hidelinks,onefignum,onetabnum]{siamart220329}

\newcommand{\wmin}{\ensuremath{w_{\min}}}
\newcommand{\wmax}{\ensuremath{w_{\max}}}

\usepackage[]{colortbl}



\usepackage{lipsum}
\usepackage{amsfonts}
\usepackage{graphicx}
\usepackage{epstopdf}
\usepackage{algorithmic}
\ifpdf
  \DeclareGraphicsExtensions{.eps,.pdf,.png,.jpg}
\else
  \DeclareGraphicsExtensions{.eps}
\fi

\newcommand{\abs}[1]{\left|#1\right|} 
\newcommand{\norm}[1]{\left\|#1\right\|}
\usepackage[colorinlistoftodos, textsize=tiny]{todonotes}
\usepackage{mathtools}
\usepackage{tabularx}
\usepackage{booktabs}
\usepackage{enumitem}

\usepackage{thmtools}
\usepackage{thm-restate}


\renewcommand{\d}{\mathrm{d}}
\renewcommand{\Pr}[1]{\mathrm{Pr}\left[#1\right]}

\newcommand{\Expected}[1]{\mathbb{E}\left[#1\right]}
\newcommand{\Var}[1]{\mathrm{Var}\left[#1\right]}
\newcommand{\vol}{\ensuremath{\nu}}
\newcommand{\maxvar}{\ensuremath{s}}

\newcommand{\cliqueprobk}{\ensuremath{q_k}}
\DeclareMathOperator{\IRG}{IRG}
\DeclareMathOperator{\GIRG}{GIRG}

\newcommand{\Event}[1]{\mathbf{E}_{#1}}

\newcommand{\F}[1]{\mathbf{#1}}

\newcommand{\weightsequence}[1]{\{w_i\}_i^{#1}}
\newcommand{\nsim}{\ensuremath{\not\sim}}

\makeatletter
    \DeclareRobustCommand{\bfseries}%
    {%
        \not@math@alphabet\bfseries\mathbf%
        \fontseries\bfdefault\selectfont%
        \boldmath%
    }
\makeatother

\usepackage{amsmath}
\usepackage{cleveref}

\usepackage{makecell}
\usepackage{dsfont}


\newsiamremark{remark}{Remark}
\newsiamremark{hypothesis}{Hypothesis}
\crefname{hypothesis}{Hypothesis}{Hypotheses}
\newsiamthm{claim}{Claim}

\headers{Cliques in High-Dimensional GIRGs}{Friedrich, Göbel, Katzmann and Schiller}

\title{Cliques in High-Dimensional Geometric Inhomogeneous Random Graphs\thanks{Submitted to the editors DATE.
\funding{Andreas Göbel was funded by the project PAGES (project No. 467516565) of the German Research Foundation (DFG).}}}

\author{Tobias Friedrich\thanks{Hasso Plattner Institute, University of Potsdam, Germany 
  (\email{tobias.friedrich@hpi.de}, \email{andreas.goebel@hpi.de}).}
\and Andreas Göbel\footnotemark[2]
\and Maximilian Katzmann\thanks{Karlsruhe Institute of Technology, Karlsruhe, Germany (\email{maximilian.katzmann@kit.edu}).}
\and Leon Schiller\thanks{ETH Zürich, Switzerland
(\email{leon.schiller@inf.ethz.ch}).}
}

\usepackage{amsopn}


\ifpdf
\hypersetup{
  pdftitle={Cliques in High-Dimensional Geometric Inhomogeneous Random Graphs},
  pdfauthor={T. Friedrich, A. Göbel, M. Katzmann, L. Schiller}
}
\fi


\externaldocument[][nocite]{ex_supplement}


\begin{document}

\maketitle

\begin{abstract}
  A recent trend in the context of graph theory is to bring
  theoretical analyses closer to empirical observations, by focusing
  the studies on random graph models that are used to represent
  practical instances.
  There, it was observed that geometric inhomogeneous random graphs
  (GIRGs) yield good representations of complex real-world networks,
  by expressing edge probabilities as a function that depends on
  (heterogeneous) vertex weights and distances in some underlying
  geometric space that the vertices are distributed in.
  While most of the parameters of the model are understood well, it
  was unclear how the dimensionality of the ground space affects the
  structure of the graphs.

  In this paper, we complement existing research into the dimension of
  geometric random graph models and the ongoing study of determining
  the dimensionality of real-world networks, by studying how the
  structure of GIRGs changes as the number of dimensions increases.
  We prove that, in the limit, GIRGs approach non-geometric
  inhomogeneous random graphs and present insights on how quickly the
  decay of the geometry impacts important graph structures.
  In particular, we study the expected number of cliques of a given
  size as well as the clique number and characterize phase
  transitions at which their behavior changes fundamentally.
  Finally, our insights help in better understanding previous results
  about the impact of the dimensionality on geometric random graphs.
\end{abstract}

\begin{keywords}
random graphs, geometry, dimensionality, cliques, clique number, scale-free networks
\end{keywords}

\begin{MSCcodes}
68Q25, 68R10, 68U05
\end{MSCcodes}

\section{Introduction}

Networks are a powerful tool to model all kinds of processes that we
interact with in our day-to-day lives.  From connections between
people in social networks, to the exchange of information on the
internet, and on to how our brains are wired, networks are everywhere.
Consequently, they have been in the focus of computer science for
decades.  There, one of the most fundamental techniques used to model
and study networks are \emph{random graph models}.  Such a model
defines a probability distribution over graphs, which is typically
done by specifying a random experiment on how to construct the graph.
By analyzing the rules of the experiment, we can then derive
structural and algorithmic properties of the resulting graphs.  If the
results match what we observe on real-world networks, i.e., if the
model represents the graphs we encounter in practice well, then we can
use it to make further predictions that help us understand real graphs
and utilize them more efficiently.

The quest of finding a good model starts several decades ago, with the
famous Erdős-Rényi (ER) random graphs~\cite{Erdos_Renyi,g-rg-59}.
There, all edges in the graph exist independently with the same
probability.  Due to its simplicity, this model has been studied
extensively.  However, because the degree distribution of the
resulting graphs is rather homogeneous and they lack clustering (due
to the independence of the edges), the model is not considered to
yield good representations of real graphs.  In fact, many networks we
encounter in practice feature a degree distribution that resembles a
power-law~\cite{asvw-h-20, scm-t-21, vhhk-s-19} and the clustering
coefficient (the probability for two neighbors of a vertex to be
adjacent) is rather high~\cite{n-c-01, ws-c-98}.  To overcome these
drawbacks, the initial random graph model has been adjusted in several
ways.

In \emph{inhomogeneous random graphs (IRGs)}, often referred to as
\emph{Chung-Lu random graphs}, each vertex is assigned a weight and
the probability for two vertices to be connected by an edge is
proportional to the product of the weights~\cite{acl-rgmplg-01,
  cl-adrgged-02, cl-ccrgg-02}.  As a result, the expected degrees of
the vertices in the resulting graphs match their weight.  While
assigning weights that follow a power-law distribution yields graphs
that are closer to the complex real-world networks, the edges are
still drawn independently, leading to vanishing clustering
coefficients.

A very natural approach to facilitate clustering in a graph model is
to introduce an underlying geometry.  This was done first in
\emph{random geometric graphs (RGGs)}, where vertices are distributed
uniformly at random in the Euclidean unit square and any two are
connected by an edge if their distance lies below a certain threshold,
i.e., the neighborhood of a vertex lives in a disk centered at that
vertex~\cite{p-rgg-03}.  Intuitively, two vertices that connect to a
common neighbor cannot be too far away from each other, increasing the
probability that they are connected by an edge themselves.  In fact,
random geometric graphs feature a non-vanishing clustering
coefficient~\cite{dc-r-02}.  However, since all neighborhood disks
have the same size, they all have roughly the same expected degree,
again, leading to a homogeneous degree distribution.

To get a random graph model that features a heterogeneous degree
distribution \emph{and} clustering, the two mentioned adjustments were
recently combined to obtain \emph{geometric inhomogeneous random
  graphs (GIRGs)}~\cite{Keusch_2018}.  There, vertices are assigned a
weight \emph{and} a position in some underlying geometric space and
the probability for two vertices to connected increases with the
product of the weights but decreases with increasing geometric
distance between them.  As a result, the generated graphs have a
non-vanishing clustering coefficient and, with the appropriate choice
of the weight sequence, they feature a power-law degree distribution.
Additionally, recent empirical observations indicate that GIRGs
represent real-world networks well with respect to certain structural
and algorithmic properties~\cite{bf-evacaga-22}.

We note that GIRGs are not the first model that exhibits a
heterogeneous degree distribution and clustering.  In fact,
\emph{hyperbolic random graphs (HRGs)}~\cite{kpk-h-10} feature these
properties as well and have been studied extensively before (see,
e.g.,~\cite{Blasius_Friedrich_Krohmer_2018, fm-l-18, fhms-c-21,
  fk-dhrg-18, gpp-rhg-12}).  However, in the pursuit of finding good
models to represent real-world networks, GIRGs introduce a parameter
that sets them apart from prior models: the choice of the underlying
geometric space and, more importantly, the dimensionality of that
space.

Unfortunately, this additional parameter that sets GIRGs apart from
previous models, has not gained much attention at all.  In fact, it
comes as a surprise that, while the underlying dimensionality of
real-world networks is actively researched~\cite{abs-d-22,
  bgk-dsnume-14, dkbh-d-11, gtar-p-21, lb-mleid-04} and there is a
large body of research examining the impact of the dimensionality on
different homogeneous graph models~\cite{dc-r-02,
  Devroye_Gyoergy_Lugosi_Udina_2011,
  Erba_Ariosto_Gherardi_Rotondo_2020} with some advancements being
made on hyperbolic random graphs~\cite{yr-hdhgcn-20}, the effects of
the dimension on the structure of GIRGs have only been studied
sparsely.  For example, while it is known that GIRGs exhibit a
clustering coefficient of $\Theta(1)$ for any fixed
dimension~\cite{Keusch_2018}, it is not known how the hidden constants
scale with the dimension.

In this paper, we initiate the study of the impact of the
dimensionality on GIRGs.  In particular, we investigate the influence
of the underlying geometry as the dimensionality increases, proving
that GIRGs converge to their non-geometric counterpart (IRGs) in the
limit.  With our results we are able to explain seemingly disagreeing
insights from prior research on the impact of dimensionality on
geometric graph models.  Moreover, by studying the clique structure of
GIRGs and its dependence on the dimension $d$, we are able to quantify
how quickly the underlying geometry vanishes.  In the following, we
discuss our results in greater detail.  We note that, while we give
general proof sketches for our results, the complete proofs are
deferred to the full version \cite{FGKS23Arxiv}.

\subsection{(Geometric) Inhomogeneous Random Graphs}

Before stating our results in greater detail, let us recall the
definitions of the two graph models we mainly work with throughout the
paper.

\subparagraph*{Inhomogeneous Random Graphs (IRGs)}  The model of
inhomogeneous random graphs was introduced by Chung and
Lu~\cite{acl-rgmplg-01, cl-adrgged-02, cl-ccrgg-02} and is a natural
generalization of the Erdős-Rényi model.  Starting with a vertex set
$V$ of $n$ vertices, each $v \in V$ is assigned a weight $w_v$.  Each
edge $\{u, v\} \in \binom{V}{2}$ is then independently present with
probability
\begin{align*}
  \Pr{u \sim v} = \min\left\{ 1, \frac{\lambda w_uw_v}{n} \right\},
\end{align*}
for some constant $\lambda > 0$ controlling the average degree of the
resulting graph.  Note that assigning the same weight to all vertices
yields the same connection probability as in Erdős-Rényi random
graphs.  For the sake of simplicity, we define
$\kappa_{uv} = \min\{ \lambda w_u w_v, n\}$ such that
$\Pr{u \sim v} = \kappa_{uv}/n$.  Additionally, for a set of vertices
$U_k = \{v_1, \ldots, v_k\}$ with weights $w_1, \ldots, w_k$, we
introduce the shorthand notation $\kappa_{ij}= \kappa_{v_i v_j}$ and
write $\{\kappa\}^{(k)} = \{ \kappa_{ij} \mid 1 \le i < j \le k \}$.

Throughout the paper, we mainly focus on inhomogeneous random graphs that
feature a power-law degree distribution in expectation, which is
obtained by sampling the weights accordingly.  More precisely, for
each $v \in V$, we sample a weight $w_v$ from the Pareto distribution
$\mathcal{P}$ with parameters $1 - \beta, w_0$ and distribution
function
\begin{align*}
  \Pr{w_v \le x} = 1 - \left(\frac{x}{w_0}\right)^{1-\beta}.
\end{align*}
Then the density of $w_v$ is
$\rho_{w_v}(x) = (\beta - 1)x^{-\beta}/w_0^{1-\beta}$.  Here,
$w_0 > 0$ is a constant that represents a lower bound on the weights
in the graph and $\beta$ denotes the power-law exponent of the
resulting degree distribution.  Throughout the paper, we assume
$\beta > 2$ such that a single weight has finite expectation (and thus
the average degree in the graph is constant), but possibly infinite
variance.  We denote a graph obtained by utilizing the above weight
distribution and connection probabilities with $\IRG(n, \beta, w_0)$.
For a fixed weight sequence $\{w\}_1^n$, we denote the corresponding
graph by $\IRG(\{w\}_1^n)$.

\subparagraph*{Geometric Inhomogeneous Random Graphs (GIRGs)}
Geometric inhomogeneous random graphs are an extension of IRGs, where
in addition to the weight, each vertex $v$ is also equipped with a
position $\F{x}_v$ in some geometric space and the probability for
edges to form depends on their weights and the distance in the
underlying space~\cite{Keusch_2018}.  While, in its raw form, the GIRG
framework is rather general, we align our paper with existing analysis
on
GIRGs~\cite{Blaesius_Friedrich_Katzmann_Meyer_Penschuck_Weyand_2019,
  kl-bpgirg-16, Michielan_Stegehuis_2022} and consider the
$d$-dimensional torus $\mathbb{T}^d$ equipped with $L_\infty$-norm as
the geometric ground space, whereby $\mathbb{T}^d$ can be described as the $d$-dimensional hypercube $[0,1]^d$ with opposite bounddaries identifed with each other.  
More precisely, in what we call the
\emph{standard} GIRG model, the positions $\F{x}$ of the vertices are
drawn independently and uniformly at random from $\mathbb{T}^d$,
according to the standard Lebesgue measure.  We denote the $i$-th
component of $\F{x}_v$ by~$\F{x}_{vi}$.  Additionally, the geometric
distance between two points $\F{x}_u$ and $\F{x}_v$, is given by
\begin{align*}
  d(\F{x}_u, \F{x}_v) = \|\F{x}_u - \F{x}_v\|_\infty = \max_{1 \le i \le d}\{|\F{x}_{ui} - \F{x}_{vi}|_C\},
\end{align*}
where $|\cdot|_C$ denotes the distance on the circle, i.e,
\begin{align*}
  |\F{x}_{ui} - \F{x}_{vi}|_C = \min\{|\F{x}_{ui} - \F{x}_{vi}|, 1 - |\F{x}_{ui} - \F{x}_{vi}|\}.
\end{align*}
In a standard GIRG, two vertices $u \neq v$ are adjacent if and only
if their distance $d(\F{x}_u, \F{x}_v)$ in the torus is less than or
equal to a \emph{connection threshold} $t_{uv}$, which is given by
\begin{align*}
  t_{uv} = \frac{1}{2}\left( \frac{\lambda w_uw_v}{n} \right)^{1/d} = \left( \frac{w_uw_v}{\tau n} \right)^{1/d},
\end{align*}
where $\tau = 2^d/\lambda$.  Using $L_\infty$ is motivated by the fact
that it is the most widely used metric in the literature because it is
arguably the most natural metric on the torus. In particular, it has
the ``nice'' property that the ball of radius $r$ is a cube and
``fits'' entirely into $\mathbb{T}^d$ for all $0 \le r \le 1$.

Note that, as a consequence of the above choice, the marginal
connection probability $\Pr{u \sim v}$ is the same as in the IRG
model, i.e., $\Pr{u \sim v} = \kappa_{uv}/n$.  However, while the
probability that any given edge is present is the same as in the IRG
model, the edges in the GIRG model are \emph{not} drawn independently.
We denote a graph obtained by the procedure described above with
$\GIRG(n, \beta, w_0, d)$.  As for IRGs, we write
$\GIRG(\{w\}_1^n, d)$ when considering standard GIRGs with a fixed
weight sequence $\{w\}_1^n$

\medskip

As mentioned above, the standard GIRG model is a commonly used
instance of the more general GIRG framework~\cite{Keusch_2018}.
There, different geometries and distance functions may be used.  For
example, instead of $L_\infty$-norm, any $L_p$-norm for
$1 \le p < \infty$ may be used. Then, the distance between two
vertices $u, v$ is measured as
\begin{align*}
  \|\F{x}_u - \F{x}_v\|_p \coloneqq \begin{cases}
                                      \left( \sum_{i=1}^d |\F{x}_{ui} - \F{x}_{vi}|^p \right)^{1/p} & \text{if } p < \infty\\
                                      \max_{1 \le i \le d} \{ |\F{x}_{ui} - \F{x}_{vi}| \} & \text{otherwise.}
                                    \end{cases}
\end{align*}
With this choice, the volume (Lebesgue measure) of the ball $B_p(r)$
of radius $r$ under $L_p$-norm is equal to the probability that a
vertex $u$ falls within distance at most $r$ of $v$ (if $r = o(1)$).
We denote this volume by $\vol(r)$.  We call the corresponding graphs
\emph{standard GIRGs with any $L_p$-norm} and note that some of our
results extend to this more general model.  Finally, whenever our
insights consider an even broader variant of the model (e.g., variable
ground spaces, distances functions, weight distributions), we say that
they hold for \emph{any GIRG} and mention the constraints explicitly.

\subsection{Asymptotic Equivalence}

Our first main observations is that large values of $d$ diminish the
influence of the underlying geometry until, at some point, our model
becomes strongly equivalent to its non-geometric counterpart, where
edges are sampled independently of each other. We prove that the
\emph{total variation distance} between the distribution over all
graphs of the two models tends to zero as $n$ is kept fixed and
$d \rightarrow \infty$.  We define the total variation distance of two
probability measures $P$ and $Q$ on the measurable space
$(\Omega, \mathcal{F})$ as

\begin{align*}
  \| P, Q \|_{\text{TV}} = \sup_{A \in \mathcal{F}} | P(A) - P(B) | = \frac{1}{2} \sum_{\omega \in \Omega} |P(\omega) - Q(\omega)|,
\end{align*}
where the second equality holds if $\Omega$ is countable.  In our
case, $\Omega$ is the set $\mathcal{G}(n)$ of all possible graphs on
$n$ vertices, and $P, Q$ are distributions over these graphs. If
$G_1, G_2$ are two random variables mapping to $\Omega$, we refer to
$\| G_1, G_2\|_{\text{TV}}$ as the total variation distance of the
induced probability measures by $G_1$ and $G_2$, respectively.
Informally, this measures the maximum difference in the probability
that any graph $G$ is sampled by $G_1$ and $G_2$.

\begin{restatable}[]{theorem}{asymptoticchunglueq}\label{thm:chunglueq}
  Let $\mathcal{G}(n)$ be the set of all graphs with $n$ vertices, let
  $\{w\}_1^n$ be a weight sequence, and consider
  $G_{\text{IRG}} = \IRG(\{w\}_1^n) \in \mathcal{G}(n)$ and a standard
  GIRG $G_\text{GIRG} = \GIRG(\{w\}_1^n, d) \in \mathcal{G}(n)$ with
  any $L_p$-norm.  Then,
  \begin{equation*}
    \lim_{d \rightarrow \infty} \|G_{\text{GIRG}}, G_{\text{IRG}}\|_{\text{TV}} = 0.
  \end{equation*}
\end{restatable}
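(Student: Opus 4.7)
My plan is to upgrade the conclusion to pointwise convergence $P_{\mathrm{GIRG}}(G) \to P_{\mathrm{IRG}}(G)$ for every $G \in \mathcal{G}(n)$; since $\mathcal{G}(n)$ is finite for fixed $n$, this immediately yields $\|G_{\mathrm{GIRG}},G_{\mathrm{IRG}}\|_{\mathrm{TV}} \to 0$. By inclusion--exclusion on edge indicators, it suffices to show that for every $S \subseteq \binom{V}{2}$ the joint probability $\pi_{\mathrm{GIRG}}(S) := \mathrm{Pr}[\text{all edges in }S\text{ are present in }G_{\mathrm{GIRG}}]$ converges to $\prod_{e \in S} p_e = \pi_{\mathrm{IRG}}(S)$.

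The main structural input is pairwise independence of edge indicators in any GIRG on the torus: for two edges $e_1 = \{u,v\}$ and $e_2 = \{u,w\}$ sharing a vertex, conditioning on $\F{x}_u$ and using translation invariance of the uniform law on $\mathbb{T}^d$ gives $\mathrm{Pr}[X_{e_1}{=}X_{e_2}{=}1 \mid \F{x}_u] = \vol(t_{e_1})\vol(t_{e_2}) = p_{e_1}p_{e_2}$ independently of $\F{x}_u$; edges sharing no vertex are independent by construction.

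For the $L_\infty$ case I would exploit the product structure $X_e = \prod_{i=1}^d X_e^{(i)}$, where $X_e^{(i)} = \mathds{1}[|\F{x}_{ui}-\F{x}_{vi}|_C \le t_{uv}]$ are independent across coordinates $i$. Writing $q_e = p_e^{1/d}$, $A_e^{(1)} = \{X_e^{(1)}{=}0\}$, and $r(S) = \mathrm{Pr}[\bigcap_{e \in S} X_e^{(1)} = 1]$, we have $\pi_{\mathrm{GIRG}}(S) = r(S)^d$. Expanding $r(S) - \prod_{e \in S} q_e$ by inclusion--exclusion over the events $A_e^{(1)}$, the $|T|{=}2$ contributions cancel by the per-coordinate version of the pairwise-independence argument, while every $|T| \ge 3$ contribution is bounded in absolute value by $\mathrm{Pr}[A_{e_1}^{(1)}]\mathrm{Pr}[A_{e_2}^{(1)}] = O(1/d^2)$, since $\mathrm{Pr}[A_e^{(1)}] = 1 - p_e^{1/d} = O(1/d)$. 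Hence $r(S) = \prod_e q_e + O(1/d^2)$, and $\pi_{\mathrm{GIRG}}(S) = r(S)^d = \prod_{e \in S} p_e \cdot (1 + O(1/d))$, giving the claim with explicit rate $O(1/d)$.

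The main obstacle is the general $L_p$-norm with $p < \infty$, where the ball no longer factors across coordinates. I would address this via a multivariate central limit theorem applied to the vector $(\|\F{x}_u-\F{x}_v\|_p^p)_{e \in S}$: each entry is a sum of $d$ i.i.d.\ summand vectors $(|\F{x}_{ui}-\F{x}_{vi}|_C^p)_{e \in S}$ whose covariance matrix is diagonal, by the coordinate-wise analogue of the pairwise-independence argument above. The multivariate CLT then yields joint convergence to a Gaussian with diagonal covariance---i.e., asymptotically independent components---and since the thresholds $t_e$ are calibrated so that the marginal probabilities equal $p_e$ exactly for every $d$, continuity of the limiting Gaussian CDF at the resulting quantiles gives $\pi_{\mathrm{GIRG}}(S) \to \prod_{e \in S} p_e$, as desired.
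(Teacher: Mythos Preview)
Your proposal is correct and, for $L_p$ with $p<\infty$, essentially identical to the paper's: both reduce to pointwise convergence on the finite space $\mathcal{G}(n)$, then apply a multivariate Berry--Esseen theorem to the vector of normalised $p$-th-power distances, using torus homogeneity to obtain a diagonal covariance matrix and hence asymptotic independence. The paper makes the threshold-convergence step (your ``continuity at the resulting quantiles'') explicit as a separate lemma, but the substance is the same.

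The $L_\infty$ case is where you genuinely diverge. The paper proves $\pi_{\mathrm{GIRG}}(S)\to\prod_e p_e$ by invoking its Theorem~\ref{thm:chung_lu_eq}, whose proof (Theorems~\ref{thm:prob-high-dim} and~\ref{thm:lowerboundhighdim}) requires a fairly delicate geometric analysis on the circle, tracking when the arcs $\overline{A}_i$ are disjoint and bounding the resulting probabilities via series expansions. Your argument bypasses all of this: you factor $\pi_{\mathrm{GIRG}}(S)=r(S)^d$ over coordinates, then observe that in the inclusion--exclusion expansion of $r(S)-\prod_e q_e$ the $|T|=2$ terms vanish by per-coordinate pairwise independence (a one-line consequence of torus homogeneity), while all $|T|\ge 3$ terms are bounded by a product of two factors $1-q_e=O(1/d)$. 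This yields $r(S)=\prod_e q_e+O(1/d^2)$ and hence $\pi_{\mathrm{GIRG}}(S)=\prod_e p_e\cdot(1+O(1/d))$ directly. Your route is shorter, fully self-contained, and delivers an explicit $O(1/d)$ rate for the $L_\infty$ case without any of the circle geometry; the paper's route has the compensating advantage that its underlying bounds (Theorem~\ref{thm:chung_lu_eq}) are reused elsewhere for the clique-count results in the regime $d=\omega(\log n)$.
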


We note that this theorem holds for arbitrary weight sequences that do
not necessarily follow a power law and for arbitrary $L_p$-norms used
to define distances in the ground space.  For $p \in [1, \infty)$, the
proof is based on the application of a multivariate central limit
theorem~\cite{Raic_2019}, in a similar way as used to prove a related
statement for \emph{spherical random geometric graphs (SRGGs)}, i.e.,
random geometric graphs with a hypersphere as ground
space~\cite{Devroye_Gyoergy_Lugosi_Udina_2011}.  Our proof generalizes
this argument to arbitrary $L_p$-norms and arbitrary weight sequences.
For the case of $L_\infty$-norm, we present a proof based on the inclusion-exclusion principle and the bounds we develop in the full version \cite[Section~4]{FGKS23Arxiv}.

Remarkably, while a similar behavior was previously established for
SRGGs, there exist works indicating that RGGs on the hypercube do not
converge to their non-geometric counterpart~\cite{dc-r-02,
  Erba_Ariosto_Gherardi_Rotondo_2020} as $d \rightarrow \infty$.  We
show that this apparent disagreement is due to the fact that the torus
is a homogeneous space while the hypercube is not. In fact, our proof
shows that GIRGs on the hypercube \emph{do} converge to a
non-geometric model in which edges are, however, not sampled
independently.  This lack of independence is because, on the
hypercube, there is a positive correlation between the distances from
two vertices to a given vertex, leading to a higher tendency to form
clusters, as was observed
experimentally~\cite{Erba_Ariosto_Gherardi_Rotondo_2020}.  Due to the
homogeneous nature of the torus, the same is not true for GIRGs and
the model converges to the plain IRG model with independent edges.

\subsection{Clique Structure}

To quantify for which dimensions $d$ the graphs in the GIRG model
start to behave similar to IRGs, we investigate the number and size of
cliques.  Previous results on SRGGs indicate that the dimension of the
underlying space heavily influences the clique structure of the
model~\cite{Avrachenkov_Bobu_2020,Devroye_Gyoergy_Lugosi_Udina_2011}.
However, it was not known how the size and the number of cliques
depends on~$d$ if we use the torus as our ground space, and how the
clique structure in high-dimensions behaves for inhomogeneous weights.

We give explicit bounds on the expected number of cliques of a given
size $k$, which we afterwards turn into bounds on the \emph{clique
  number} $\omega(G)$, i.e., the size of the largest clique in the
graph $G$.  While the expected number of cliques in the GIRG model was
previously studied by Michielan and
Stegehuis~\cite{Michielan_Stegehuis_2022} when the power-law exponent
of the degree distribution satisfies $\beta \in (2, 3)$, to the best
of our knowledge, the clique number of GIRGs remains unstudied even in
the case of constant (but arbitrary) dimensionality.  We close this
gap, reproduce the existing results, and extend them to the case
$\beta \ge 3$ and the case where $d$ can grow as a function of the
number of vertices $n$ in the graph.  Furthermore, our bounds for the
case $\beta \in (2, 3)$ are more explicit and complement the work of
Michielan and Stegehuis, who expressed the (rescaled) asymptotic
number of cliques as converging to a non-analytically solvable
integral.  Furthermore, we show that the clique structure in our model
eventually behaves asymptotically like that of an IRG if the dimension
is sufficiently large.  In summary, our main contributions are
outlined in Tables~\ref{tab:expectedcliques},
\ref{tab:expectedtriangles}, and~\cref{tab:clique-number}.

\begin{table}[t] 
    \centering
    \caption{Asymptotic behavior of the expected number of
      $k$-cliques. Results marked with * were
      previously known for constant $k$. For all depicted regimes, $K_k$ concentrates well around its expectation, i.e., $K_k/\Expected{K_k}$ converges in probability to $1$ if $k$ is sufficiently small: for cells marked in light-gray, this holds for $k = o(n^{(3-\beta)/4})$; for cells marked in dark-gray, it holds for $k = o(\log(n)/(\log\log(n) + d))$; for white cells, it holds for all $k$. (cf. \cref{thm:concentration})}\label{tab:expectedcliques}
    \begin{tabularx}{\textwidth}[t]{lccc}
    \toprule
    \multicolumn{4}{c}{$\Expected{K_k}$ for $k \ge 4$} \\
    \midrule
    \hspace*{3cm}& $d = \Theta(1)$ & \makecell{$d = o(\log(n))$}  & $d = \omega(\log(n))$ \\
    \midrule
    $2 < \beta < 3, k > \frac{2}{3-\beta}$ & \cellcolor{gray!25} $n^{\frac{k}{2}(3-\beta)}\Theta(k)^{-k}$* & \cellcolor{gray!25} $n^{\frac{k}{2}(3-\beta)}\Theta(k)^{-k}$ & \cellcolor{gray!25} $n^{\frac{k}{2}(3-\beta)}\Theta(k)^{-k}$ \\
    $2 < \beta < 3, k < \frac{2}{3-\beta}$ & \cellcolor{gray!50} $n\Theta(k)^{-k}$* & \cellcolor{gray!50} $ n  e^{-\Theta(1)dk} \Theta(k)^{-k} $ & \cellcolor{gray!25} $n^{\frac{k}{2}(3-\beta)}\Theta(k)^{-k} $ \\
    $\beta > 3$ & \cellcolor{gray!50} $n\Theta(k)^{-k}$ & \cellcolor{gray!50} $ n  e^{-\Theta(1)dk}\Theta(k)^{-k}$ &  $o(1)$ \\
    \midrule
    & \makecell{equivalent to\\ HRGs \cite{Blasius_Friedrich_Krohmer_2018}}  &                                        & \makecell{equivalent to \\IRGs \cite{Janson_Luczak_Norros_2010}}  \\
\bottomrule
    \end{tabularx} 
\end{table}

We observe that the structure of the cliques undergoes three phase
transitions in the size of the cliques $k$, the dimension $d$, and the
power-law exponent $\beta$.

\subparagraph*{Transition in $k$}  When $\beta \in (2, 3)$ and
$d \in o(\log(n))$, the first transition is at
$k = \frac{2}{3-\beta}$, as was previously observed for hyperbolic
random graphs~\cite{Blasius_Friedrich_Krohmer_2018} and for GIRGs of
constant dimensionality~\cite{Michielan_Stegehuis_2022}.  The latter
work explains this behavior by showing that for
$k < \frac{2}{3-\beta}$, the number of cliques is strongly dominated
by ``geometric'' cliques forming among vertices whose distance is of
order $n^{-1/d}$ regardless of their weight.  For
$k > \frac{2}{3-\beta}$, on the other hand, the number of cliques is
dominated by ``non-geometric'' cliques forming among vertices with
weights in the order of $\sqrt{n}$.  This behavior is in contrast to
the behavior of cliques in the IRG model, where this phase transition
does not exist and where the expected number of $k$ cliques is
$\Theta\left( n^{\frac{k}{2}(3-\beta)}\right)$ for all $k \ge 3$ (if
$\beta \in (2, 3)$)~\cite{Daly_Haig_Shneer_2020}.

\subparagraph*{Transition in $d$}  Still assuming $\beta \in (2, 3)$,
the second phase transition occurs as $d$ becomes superlogarithmic.
More precisely, we show that in the high-dimensional regime, where
$d = \omega(\log(n))$, the
phase transition in $k$ vanishes, as the expected number of cliques of
size $k \ge 4$ behaves asymptotically like its
counterpart in the IRG model. Nevertheless, we can still
differentiate the two models as long as $d = o(\log^{3/2}(n))$, by
counting triangles among low degree vertices as can be seen in \cref{tab:expectedtriangles}. 

The reason for this behavior is that the number of cliques in
the case $d = \omega(\log(n))$ is already dominated by cliques forming
among vertices of weight close to $\sqrt{n}$. For those, the
probability that a clique is formed already behaves like in an IRG
although, for vertices of small weight, said probability it is still
significantly larger as long as $d = o(\log(n)^{3/2})$.

Regarding the clique number, in the case $\beta > 3$, we observe a
similar phase transition in $d$.  For constant $d$, the clique number
of a GIRG is $\Theta(\log(n)/\log\log(n)) = \omega(1)$.  We find that
this asymptotic behavior remains unchanged if
$d = \mathcal{O}(\log\log(n))$.  However, if $d = \omega(\log\log(n))$
but $d = o(\log(n))$, the clique number scales as $\Theta(\log(n)/d)$,
which is still superconstant.  Additionally if $d = \omega(\log(n))$,
we see that, again, GIRGs show the same behavior as IRGs. That is,
there are asymptotically no cliques of size larger than 3.

\begin{table}[t] 
    \centering
    \caption{Asymptotic behavior of the expected number of
      triangles. The case $\beta = \infty$ refers to the case of
      constant weights. While in the case $\beta < 3$, the number of
      triangles already behaves like that of the IRG model if
      $d = \omega(\log(n))$, in the case $\beta > 3$, the number of
      triangles remains superconstant as long as
      $d =
      o\left(\log^{3/2}(n)\right)$. }\label{tab:expectedtriangles}
    \begin{tabularx}{\textwidth}[t]{lccc}
    \toprule
    \multicolumn{4}{c}{ \hspace*{2cm} Expected number of triangles $\Expected{K_3}$ } \\
    \midrule
    \hspace{3cm} & \makecell{$d = o(\log(n))$}  & $d = \omega(\log(n))$ & $d = \omega(\log^{2}(n))$ \\
    \midrule
    $2 < \beta < \frac{7}{3}$ & $n^{\frac{3}{2}(3-\beta)} \Theta\left(1 \right)$ & $n^{\frac{3}{2}(3-\beta)} \Theta\left(1 \right)$ & $n^{\frac{3}{2}(3-\beta)} \Theta\left(1 \right)$ \\
    $\frac{7}{3} < \beta < 3$ & $n e^{-\Theta(1)d}\Theta\left(1 \right) $ & $n^{\frac{3}{2}(3-\beta)} \Theta\left(1 \right) $ & $n^{\frac{3}{2}(3-\beta)} \Theta\left(1 \right)$ \\
    $\beta > 3$ & $n e^{-\Theta(1)d}\Theta\left(1 \right)$ & $ \Omega\left(\exp\left(  \frac{\ln^3(n)}{d^2} \right) \right) $ & $\Theta(1)$ \\
    \midrule
    $\beta = \infty$ & $ n e^{-\Theta(1)d}\Theta\left(1 \right)$ & $ \Theta\left(\exp\left(  \frac{\ln^3(n)}{d^2} \right) \right) $ & $\Theta(1)$ \\
    \bottomrule
    \end{tabularx} 
\end{table}

\subparagraph*{Transition in $\beta$}  The third phase transition in the high-dimensional case
occurs at $\beta = 3$, which is in line
with the fact that networks with a power-law exponent
$\beta \in (2, 3)$ contain with high probability (w.h.p., meaning with
probability $1 - O(1/n)$) a densely connected ``heavy core'' of
$\Theta\left(n^{\frac{1}{2}(3-\beta)} \right)$ vertices with weight
$\sqrt{n}$ or above, which vanishes if $\beta$ is larger than $3$.
This heavy core strongly dominates the number of cliques of sufficient
size and explains why the clique number is
$\Theta\left(n^{\frac{1}{2}(3-\beta)} \right)$ regardless of $d$ if
$\beta \in (2,3)$. As $\beta$ grows beyond $3$, the core disappears
and leaves only very small cliques. Accordingly for $\beta > 3$ IRGs
contain asymptotically almost surely (a.a.s., meaning with probability
$1 - o(1)$) no cliques of size greater than $3$. In contrast to that,
for GIRGs of dimension $d = o(\log(n))$ (and HRGs), the clique number
remains superconstant and so does the number of $k$-cliques for any
constant $k \ge 3$. If $d = \omega(\log(n))$, there are no cliques of
size greater than 3 like in an IRG. However, as noted before, GIRGs
feature many more triangles than IRGs as long as
$d = o(\log^{3/2}(n))$.

\begin{table}[t]
  \centering
  \caption{Asymptotic behavior of the clique number of $G$ for
    different values of $d$ in the GIRG model. The behavior of the
    first column is the same as in hyperbolic random graphs
    established in \cite{Blasius_Friedrich_Krohmer_2018}, and the
    behavior in the third column is the same as that of IRG graphs
    established in \cite{Janson_Luczak_Norros_2010}. All results
    hold a.a.s. and under $L_\infty$-norm.}\label{tab:clique-number}
  \begin{tabularx}{\textwidth}[t]{lccc}
  \toprule
  \multicolumn{4}{c}{$\omega(G)$}                                                                                                                                       \\
  \toprule
   \hspace{2cm}           & $d = \mathcal{O}(\log\log(n))$                           & $d = o(\log(n))$                       & $d = \omega(\log(n))$                               \\
  \midrule
  $\beta < 3$ & $\Theta\left( n^{(3-\beta)/2}\right)$                    & $\Theta\left( n^{(3-\beta)/2}\right)$  & $\Theta\left(n^{(3-\beta)/2}\right)$                \\
  $\beta = 3$ & $\Theta \left(\frac{\log(n)}{\log\log(n)}\right)$        & $\Omega\left(\frac{\log(n)}{d}\right)$ & $\mathcal{O}\left( 1 \right)$                            \\
  $\beta > 3$ & $\Theta \left(\frac{\log(n)}{\log\log(n)}\right)$        & $\Theta\left(\frac{\log(n)}{d}\right)$ & 
  $\le 3$                                                   \\
  \midrule
              & equivalent to HRGs \cite{Blasius_Friedrich_Krohmer_2018} &                                        & equivalent to IRGs \cite{Janson_Luczak_Norros_2010} \\
  \bottomrule
  \end{tabularx}
\end{table}%

\medskip

\paragraph*{Characterizing the typical Clique}

Our analysis also yields insights into where cliques typically form within the graph. Previoulsy known in this regard was that -- for constant $d, \beta \in (2,3)$ and constant $k$ -- cliques of size $k > \frac{2}{3-\beta}$ form dominantly among vertices of weight in the order of $\sqrt{n}$, whereas for $ k < \frac{2}{3 - \beta}$, they form among vertices of pairwise distance in the order of $n^{-1/d}$ as shown in \cite{Michielan_Stegehuis_2022}. We extend these results to the case where $k$ and $d$ are allowed to be superconstant, where $\beta > 3$ and we provide a characterization in terms of the weights of the vertices associated to a clique that extend the known results even for the previously studied parameter regimes.

To be more precise, we denote by $\wmin, \wmax$ the minimal and maximal vertex weights associated to a randomly chosen clique and study  for which weights $w$, $\wmin$ and $\wmax$ are arbitrarily likely to be in the order of $w$. To this end, we define the following.

\begin{definition} \label{def:M}
  For any $w\in \mathbb{R}$ and $\varepsilon > 0$ define
  \begin{align*}
    M_\varepsilon^{(+)}(w) \coloneqq \{ x \in \mathbb{R} \mid x \le w / \varepsilon \} \text{  and  } M_\varepsilon^{(-)}(w) \coloneqq \{ x \in \mathbb{R} \mid x \ge \varepsilon w \}.
  \end{align*} Furthermore, define \begin{align*}
    M_\varepsilon(w) \coloneqq M_\varepsilon^{(+)}(w) \cap M_\varepsilon^{(-)}(w).
  \end{align*}
\end{definition}  We proceed by studying for which $w$, we can make the conditional probabilities that $\wmin$ or $\wmax$ are in $M_\varepsilon^{(+)}(w), M_\varepsilon^{(-)}(w)$ or $M_\varepsilon(w)$ arbitrarily large by adjusting $\varepsilon$. 

\begin{table}[t]
  \centering
  \caption{Dominant regimes for the minimum/maximum vertex weight associated to a clique. An entry of $M_\varepsilon(w)$ (as defined in \cref{def:M}) means that for every $p \in (0,1)$, there is an $\varepsilon > 0$ such that $\Pr{\wmin \in M_\varepsilon(w) \mid U_k \text{ is clique}} \ge p$ (resp. $\Pr{\wmax \in M_\varepsilon(w) \mid U_k \text{ is clique}} \ge p$) where $U_k$ is a set of $k$ vertices chosen u.a.r.}\label{tab:typical-clique-min}
  \begin{tabularx}{\textwidth}[t]{lccc}
  \toprule
  \multicolumn{4}{c}{Dominant Regimes for $\wmin$}                                                                                                                                       \\
  \toprule
   \hspace{4.5cm}           & $d = \Theta(1)$                           & $d = o(\log(n))$                       & $d = \omega(\log(n))$                               \\
  \midrule
  $2 < \beta < 3, k > \frac{2}{3-\beta}$ & $M_\varepsilon(\sqrt{n})$                    & $M_\varepsilon(\sqrt{n})$  & $M_\varepsilon(\sqrt{n})$                \\
  $2 < \beta < 3, k < \frac{2}{3-\beta}$ & $M_\varepsilon(1)$                    & $M^{(+)}_\varepsilon(n^{o(1)})$  & $M_\varepsilon(\sqrt{n})$                \\
  $\beta > 3$ & $M_\varepsilon(1)$                    & $M^{(+)}_\varepsilon(n^{o(1)})$  & \\
  \bottomrule                
  \end{tabularx}

  \vspace{.3cm}

  \begin{tabularx}{\textwidth}[t]{lccc}
    \toprule
    \multicolumn{4}{c}{Dominant Regimes for $\wmax$}                                                                                                                                       \\
    \toprule
     \hspace{4cm}           & $d = \Theta(1)$                           & $d = o(\log(n))$                       & $d = \omega(\log(n))$                               \\
    \midrule
    $2 < \beta < 3, k > \frac{2}{3-\beta}$ & $M_\varepsilon(k^{\frac{1}{\beta-1}}\sqrt{n})$                    & $M_\varepsilon(k^{\frac{1}{\beta-1}}\sqrt{n})$  & $M_\varepsilon(k^{\frac{1}{\beta-1}}\sqrt{n})$                \\
    $2 < \beta < 3, k < \frac{2}{3-\beta}$ & $M_\varepsilon(k^{\frac{1}{\beta-2}})$                    & $M^{(+)}_\varepsilon(k^{\frac{1}{\beta-2}}n^{o(1)})$  & $M_\varepsilon(k^{\frac{1}{\beta-1}}\sqrt{n})$                \\
    $\beta > 3$ & $M_\varepsilon(k^{\frac{1}{\beta-2}})$                    & $M^{(+)}_\varepsilon(k^{\frac{1}{\beta-2}}n^{o(1)})$  & $-$                \\
    \bottomrule
    \end{tabularx}
\end{table}%

Our results are summarized in \cref{tab:typical-clique-min}.  The central result of these two tables is that assuming $d = o(\log(n))$, if $k < \frac{2}{3-\beta}$ or if $k$ is arbitrary and $\beta > 3$, then cliques dominantly form among vertices of very small weight, more precisely among vertices of weight at most $k^{\frac{1}{2-\beta}} n^{o(1)}$, which (if you take our results on the clique number into account) is $n^{o(1)}$ in total for cliques of all sizes that appear in the model with non-vanishing probability. On the other hand, if $d = \omega(\log(n))$ and $\beta \in (2,3)$, then cliques of all sizes dominantly form among very high-weight vertices, more precisely among vertices of weight at least in the order of $\sqrt{n}$. Again, this is the same behavior as in IRGs. We formalize this result in the following theorem.

\begin{theorem}
  Let $U_k$ be a set of $k$ randomly chosen vertices. If $\beta \in (2,3)$, $k < \frac{2}{3-\beta}$ and $d = o(\log(n))$ then there is a function $f(n) = e^{\Theta(1)d} = n^{o(1)}$ such that for all $p \in (0,1)$, there is an $\varepsilon > 0$ such that \begin{align*}
    \Pr{\wmax \le f(n) / \varepsilon \mid U_k \text{ is clique} } \ge p.
  \end{align*} If $\beta \in (2,3)$ and $d = \omega(\log(n))$, then for all (potentially superconstant) $k \ge 3$ and all $p \in (0,1)$, there is a $\varepsilon > 0$ such that \begin{align*}
    \Pr{ \wmin \ge \varepsilon \sqrt{n} \mid U_k \text{ is clique}} \ge p.
  \end{align*}
\end{theorem}

Moreover, it is worth noting that for $d, k$ constant, cliques of size $k < \frac{2}{3-\beta}$, if $\beta \in (2,3)$, and cliques of size $k \ge 3$, if $\beta > 3$, dominantly form among vertices of only constant weight. Additionally, we remark that the dependence of $\wmax$ on $k$ as given in \cref{tab:typical-clique-min} is the same as one would expect in a star centered at the vertex of minimal weight $v_{\min}$. That is, if the weight of $v_{\min}$ is much smaller than $\sqrt{n}$, then each neighbor $u$ of $v_{\min}$ has a weight that is essentially a sample of a Pareto distribution with exponent $\beta - 1$ instead of $\beta$, since conditioning on $u \sim v_{\min}$ imposes a bias towards a higher weight of $u$. Since there are $\Theta(k)$ neighbors, the maximum weight among these is essentially the maximum of $\Theta(k)$ independent samples from this distribution, which is typically of order $k^{\frac{1}{\beta - 2}}$. If $\wmin$ is already of order $\sqrt{n}$, the situation is similar, however, conditioning on $u \sim v_{\min}$  no longer imposes a bias towards a higher weight of $u$ as vertices with weight in this range are all connected with probability $\Omega(1)$. Thus, the weight of the neighbors of $v_{\min}$ continue to follow a Pareto distribution with exponent $\beta$ and the maximal weight among them is of order $k^{\frac{1}{\beta - 1}}$. Our results show that this known behavior for stars remains essentially true for cliques, that is, conditioning on having a clique does not induce a bias towards much larger weights than conditioning on having a star.

\paragraph*{Concentration Bounds} 

The above analysis does not only give insights into where cliques dominantly form, but also allows us to establish concentration bounds on the number of cliques in a similar way as done in \cite{Michielan_Stegehuis_2022}. More precisely, it allows us to establish that $K_k$ rescaled by its expectation converges in probability to $1$ for almost all the regimes we consider and almost all relevant sizes of $k$. We write $K_k/\Expected{K_k} \rightarrow_p 1$ to denote convergence in probability and formalize in our statement in the following theorem.
\begin{restatable}[]{theorem}{concentration}\label{thm:concentration}
  We have $K_k/\Expected{K_k} \rightarrow_p 1$, that is for all $\delta > 0$, \begin{align*}
    \Pr{ \left| \frac{K_k}{\Expected{K_k}} - 1 \right| \ge \delta } = o(1)
  \end{align*} if one of following conditions holds. \begin{enumerate}
    \item[(i)] $d = o(\log(n))$, $\beta \in (2,3)$, $k \neq \frac{2}{3-\beta}$, and $k = o(n^{(3-\beta)/4})$
    \item[(ii)] $d = \omega(\log(n))$, $\beta \in (2,3)$, $k = o(n^{(3-\beta)/4})$
    \item[(iii)] $d = o(\log(n))$, $\beta > 3$ and $k = o\left( \log(n)/(\log\log(n) + d) \right)$.
  \end{enumerate}
\end{restatable}
We remark that even for values of $k$ larger as the ones stated above, our results imply (slightly weaker) concentration bounds. General bounds on the variance of cliques are given in \Cref{sec:variance}

\medskip

\paragraph*{Proof Techniques}

The proofs of our results (i.e., the ones in the above tables) are
mainly based on bounds on the probability that a set of $k$ randomly
chosen vertices forms a clique. To obtain concentration bounds on the
number of cliques as needed for deriving bounds on the clique number,
we use the second moment method and Chernoff bounds.

For the case of $d = \omega(\log(n))$, many of our results are derived
from the following general insight. We show that for and all
$\beta > 2$, the probability that a set of vertices forms a clique
already behaves similar as in the IRG model if the weights of the
involved nodes are sufficiently large. For $d = \omega(\log(n)^2)$,
this holds in the entire graph, that is, regardless of the weights of
the involved vertices. In fact our statement holds even more
generally. That is, the described behavior not only applies to the
probability that a clique is formed but also to the probability that
any set of edges (or a superset thereof) is created.  

\begin{restatable}[]{theorem}{cliqueprobabilityhighdim}
  \label{thm:chung_lu_eq}
  Let $G$ be a standard GIRG and let $k \ge 3$ be a constant.
  Furthermore, let $U_k = \{v_1, \ldots, v_k\}$ be a set of vertices
  chosen uniformly at random and let
  $\{\kappa\}^{(k)} = \{\kappa_{ij} \mid 1 \le i, j \le k\}$ describe the
  pairwise product of weights of the vertices in $U_k$. Let $E(U_k)$
  denote the (random) set of edges formed among the vertices in
  $U_k$. Then, for any set of
  edges $\mathcal{A} \subseteq \binom{U_k}{2}$,
  
  \begin{align*}
    \Pr{E(U_k) \supseteq \mathcal{A} \mid \{\kappa\}^{(k)}} = \begin{cases} (1\pm o(1)) \prod_{\{i,j\}\in \mathcal{A}} \frac{\kappa_{ij}}{n} &\text{if } d = \omega(\log^2(n))\\
      (1\pm o(1)) \prod_{\{i,j\}\in \mathcal{A}} \left( \frac{\kappa_{ij}}{n} \right)^{1 \mp \mathcal{O}\left( \frac{\log(n)}{d} \right)} &\text{if } d = \omega(\log(n)).
    \end{cases}
  \end{align*}
\end{restatable}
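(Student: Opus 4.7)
The plan is to exploit the product structure of the $L_\infty$-norm together with the coordinate-wise independence of the positions to reduce the problem to a one-dimensional estimate, which is then raised to the power $d$. Since $d(\F{x}_i,\F{x}_j)\le t_{ij}$ is equivalent to $|\F{x}_{im}-\F{x}_{jm}|_C\le t_{ij}$ holding for every coordinate $m\in\{1,\ldots,d\}$, and the sequences $(\F{x}_{im})_m$ are i.i.d.\ uniform on $[0,1)$ across $m$, I obtain the factorization
\begin{equation*}
  \Pr{E(U_k)\supseteq\mathcal{A}\mid\{\kappa\}^{(k)}} = q^d,
\end{equation*}
where $q$ denotes the probability that $k$ i.i.d.\ uniform points $y_1,\ldots,y_k$ on the unit circle satisfy $|y_i-y_j|_C\le t_{ij}$ for every $\{i,j\}\in\mathcal{A}$. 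Writing $p_{ij}=2t_{ij}=(\kappa_{ij}/n)^{1/d}$ for the corresponding one-dimensional marginal, one has $p_{ij}^d=\kappa_{ij}/n$, so it suffices to compare $q$ with $\prod_{\{i,j\}\in\mathcal{A}}p_{ij}$ to sufficient accuracy.

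For the one-dimensional comparison, set $\epsilon_{ij}=1-p_{ij}$ and $B_{ij}=\{|y_i-y_j|_C>t_{ij}\}$; in the regime $d=\omega(\log n)$ every $\epsilon_{ij}=1-(\kappa_{ij}/n)^{1/d}=O(\log n/d)=o(1)$. Applying inclusion-exclusion to both $q$ and $\prod_{\{i,j\}\in\mathcal{A}}p_{ij}$ and subtracting yields
\begin{equation*}
  q-\prod_{\{i,j\}\in\mathcal{A}}p_{ij} = \sum_{S\subseteq\mathcal{A}}(-1)^{|S|}\left(\Pr{\bigcap_{\{i,j\}\in S}B_{ij}}-\prod_{\{i,j\}\in S}\epsilon_{ij}\right).
\end{equation*}
The key observation is that whenever the pair-set $S$ forms a \emph{forest} on $U_k$, sequential conditioning on the endpoints from the leaves inward shows $\Pr{\bigcap_{S}B_{ij}}=\prod_{S}\epsilon_{ij}$ exactly, so all such terms cancel. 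Only subsets containing a cycle contribute, and for those a short geometric argument on the circle shows that $\Pr{\bigcap_{S}B_{ij}}\ll\prod_{S}\epsilon_{ij}$: if two points $y_i,y_j$ are each forced into the (tiny) antipodal arc of a common $y_\ell$, then $|y_i-y_j|_C\le(\epsilon_{i\ell}+\epsilon_{j\ell})/2\ll t_{ij}\approx 1/2$, contradicting $B_{ij}$. Bounding each remaining term crudely by $\prod_{S}\epsilon_{ij}$ and summing over the $O(1)$ (for constant $k$) non-forest subsets of size at least three gives $|q-\prod p_{ij}|=O(\epsilon_{\max}^3)=O((\log n/d)^3)$.

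Raising this estimate to the $d$-th power and using $\prod p_{ij}=\Theta(1)$,
\begin{equation*}
  \frac{q^d}{\prod_{\{i,j\}\in\mathcal{A}}(\kappa_{ij}/n)} = \left(\frac{q}{\prod p_{ij}}\right)^{d} = \exp\!\left(O\!\left(\frac{\log^3 n}{d^2}\right)\right).
\end{equation*}
If $d=\omega(\log^2 n)$, this factor is $1+o(1)$, yielding the first claimed equality. For the weaker regime $d=\omega(\log n)$, one uses that $\log^3 n/d^2=o(\log^2 n/d)$ together with the fact that shifting the exponent by $\pm O(\log n/d)$ multiplies $\prod(\kappa_{ij}/n)$ by $\exp(\pm\Theta(\log^2 n/d))$ (as long as some $\kappa_{ij}$ is bounded away from $n$; the saturated case trivially gives both sides equal to $1$). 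Absorbing the exponential error into this shift leaves a $(1\pm o(1))$ remainder, which is the second formula.

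The main obstacle I anticipate is proving the sharp $O(\epsilon^3)$ bound in the second step: weaker $O(\epsilon)$ or $O(\epsilon^2)$ estimates give errors of $n^{O(1)}$ or $\exp(\Omega(\log^2 n/d))$ after exponentiation, neither of which is $1+o(1)$ in the $\omega(\log^2 n)$ regime. Obtaining the cubic order requires both the exact cancellation of the first- and second-order inclusion-exclusion terms -- pair-sets of size one and two (an isolated pair, two disjoint pairs, or two pairs sharing a single vertex) always induce independent events under the circular uniform distribution -- and the geometric fact that three mutually ``far'' constraints among three points on the unit circle become incompatible once every $\epsilon_{ij}$ is sufficiently small.
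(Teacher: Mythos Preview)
Your approach is genuinely different from the paper's and, once a gap is closed, gives a sharper error than the paper actually needs. The paper does not compare $q$ to $\prod p_{ij}$ via inclusion--exclusion; instead it proves separate one-dimensional upper and lower bounds for $q$ by placing the points $y_2,\ldots,y_k$ sequentially (Theorems~\ref{thm:prob-high-dim} and~\ref{thm:lowerboundhighdim}), obtaining inequalities of the shape $q\ge 1-\sum_{\{i,j\}}\epsilon_{ij}$ and $q\le 1-(\kappa_0/n)^{r/d}\sum_{\{i,j\}}\epsilon_{ij}$, and then Taylor-expands the logarithms (Lemmas~\ref{lem:lnpupperbound} and~\ref{lem:lowerboundbound}) to conclude $q^d=\prod(\kappa_{ij}/n)^{1\pm O(\log n/d)}$. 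This yields a ratio of $\exp(O(\log^2 n/d))$, which is already $1+o(1)$ once $d=\omega(\log^2 n)$. Your cubic estimate $q-\prod p_{ij}=O(\epsilon^3)$ gives the tighter ratio $\exp(O(\log^3 n/d^2))$, which is what the paper later extracts separately for triangles in \cref{lem:highdimlowerbound}; so your route unifies those two computations at the cost of a more delicate combinatorial argument.

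The gap is in the sentence ``for those a short geometric argument on the circle shows that $\Pr{\bigcap_{S}B_{ij}}\ll\prod_{S}\epsilon_{ij}$.'' Your antipodal-arc contradiction only handles subsets $S$ that contain a \emph{triangle}: if $\{i,\ell\},\{j,\ell\},\{i,j\}\in S$ then indeed $B_{i\ell}\cap B_{j\ell}$ forces $|y_i-y_j|_C\le(\epsilon_{i\ell}+\epsilon_{j\ell})/2$, which is incompatible with $B_{ij}$. But for a triangle-free non-forest such as a $4$-cycle $S=\{12,23,34,14\}$ the events are \emph{not} contradictory; the alternating antipodal structure is consistent, and in fact $\Pr{\bigcap_S B_{ij}}=\Theta(\epsilon^3)\gg\prod_S\epsilon_{ij}=\Theta(\epsilon^4)$. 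So ``bounding each remaining term crudely by $\prod_S\epsilon_{ij}$'' fails here. The fix is easy and fits your framework: for any non-forest $S$, bound $\Pr{\bigcap_S B_{ij}}$ from above by $\Pr{\bigcap_T B_{ij}}=\prod_T\epsilon_{ij}$ where $T\subseteq S$ is a spanning forest of $S$. If $S$ contains a triangle, $\Pr{\bigcap_S B_{ij}}=0$; otherwise the shortest cycle in $S$ has length $\ge 4$, so that cycle's component has at least four vertices, and any spanning forest of $S$ has at least three edges. Either way both $\Pr{\bigcap_S B_{ij}}$ and $\prod_S\epsilon_{ij}$ are $O(\epsilon_{\max}^{3})$, and since there are only $2^{\binom{k}{2}}=O(1)$ such subsets, your $O(\epsilon_{\max}^{3})$ conclusion stands.

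One further remark: in the last step you should track $\epsilon_{\max}=\max_{\{i,j\}\in\mathcal{A}}(1-(\kappa_{ij}/n)^{1/d})\sim \log(n/\kappa_{\min})/d$ rather than the worst-case $\log n/d$. This is what makes the absorption into the exponent shift uniform in the weights: the multiplicative error $\exp(O(\log^3(n/\kappa_{\min})/d^2))$ is dominated by $(\kappa_{\min}/n)^{-c\log n/d}=\exp\bigl(c(\log n/d)\log(n/\kappa_{\min})\bigr)$ precisely because $\log^2(n/\kappa_{\min})/d=o(\log n)$ when $d=\omega(\log n)$, regardless of how close $\kappa_{\min}$ is to $n$. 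With that adjustment the ``saturated case'' caveat is unnecessary.
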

For the proof we derive elementary bounds on the probability of the
described events and use series expansions to investigate their
asymptotic behavior.  Remarkably, in contrast to our bounds for the
case $d = o(\log(n))$, the high-dimensional case requires us to pay
closer attention to the topology of the torus.

We leverage the above theorem 
to prove that GIRGs eventually become equivalent to IRGs with respect
to the total variation distance.  \Cref{thm:chung_lu_eq} already
implies that the expected number of cliques in a GIRG is
asymptotically the same as in an IRG for all $k \ge 3$ and all
$\beta > 2$ if $d = \omega(\log^2(n))$. However, we are able to show
that the expected number of cliques for $\beta \in (2,3)$ actually
already behaves like that of an IRG if $d = \omega(\log(n))$. The
reason for this is that the clique probability among high-weight
vertices starts to behave like that of an IRG earlier than it is the
case for low-weight vertices and cliques forming among these
high-weight vertices already dominate the number of cliques. Moreover,
the clique number behaves like that of an IRG if $d = \omega(\log(n))$
for all $\beta > 2$. However, the number of triangles among vertices
of constant weight asymptotically exceeds that of an IRG as long as
$d = o(\log^{3/2}(n))$, which we prove 
by deriving even sharper bounds on the expected number of
triangles. Accordingly, convergence with respect to the total
variation distance cannot occur before this point (this holds for all
$\beta > 2$).

In contrast to this, for the low-dimensional case (where
$d = o(\log(n))$), the underlying geometry still induces strongly
notable effects regarding the number of sufficiently small cliques for
all $\beta > 2$.  However, even here, the expected number of such
cliques decays exponentially in $dk$.  The main difficulty in showing
this is that we have to handle the case of inhomogeneous weights,
which significantly influence the probability that a set of $k$
vertices chosen uniformly at random forms a clique.  To this end, we
prove the following theorem that bounds the probability that a clique
among $k$ vertices is formed if the ratio of the maximal and minimal
weight is at most $c^d$. Note that the vertices forming a star is necessary for a clique to form. For this reason we consider the event $\Event{\text{star} }^c$ of the vertices forming a star centered at the lowest weight vertex. The theorem generalizes a result of
Decreusefond et
al.~\cite{Decreusefond_Ferraz_Randriambololona_Vergne_2014}.

\begin{restatable}{theorem}{sharpcliquebounds}\label{thm:sharpcliquebounds}
  Let $G$ be a standard GIRG and consider $k \ge 3$.  Furthermore, let
  $U_k = \{ v_1, v_2, \ldots, v_k \}$ be a set of vertices chosen
  uniformly at random and assume without loss of generality that
  $w_1 \le \ldots \le w_k$.  Let $\Event{\text{star} }^c$ be the event
  that $v_1$ connects to all vertices in $U_k \setminus \{v_1\}$ and
  that $w_k \le c^d w_1$ for some constant $c \ge 1$ with
  $c^2 \left( w_1^2/(\tau n) \right)^{1/d} \le 1/4$.  Then, the
  probability that $U_k$ is a clique conditioned on
  $\Event{\text{star}}^c$ fulfills
  \begin{align*}
    \left(\frac{1}{2}\right)^{d(k-1)}k^d \le \Pr{U_k \text{ is clique} \mid \Event{\text{star}}^c} \le c^{d(k-2)} \left(\frac{1}{2}\right)^{d(k-1)} k^d.
  \end{align*} 
\end{restatable}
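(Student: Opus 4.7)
The argument rests on two structural features of the standard GIRG model: translation invariance of the torus $\mathbb{T}^d$ and the Cartesian-product structure of $L_\infty$-balls. By translation invariance we may assume $\F{x}_1 = 0$. Given $\Event{\text{star}}^c$, the positions $\F{x}_2, \ldots, \F{x}_k$ are conditionally independent, with $\F{x}_i$ uniform on the box $B_\infty(0, t_{1i})$. The hypothesis $c^2(w_1^2/(\tau n))^{1/d} \le 1/4$ implies $t_{ij} \le 1/4$ for every pair $i,j$, so each relevant ball embeds into $\mathbb{T}^d$ without wrap-around and the toroidal distance equals the coordinate-wise Euclidean distance. Since $B_\infty(0, t_{1i}) = [-t_{1i}, t_{1i}]^d$ and the clique event is the conjunction over coordinates of the constraints $|\F{x}_{i\ell} - \F{x}_{j\ell}| \le t_{ij}$, we obtain
\begin{equation*}
    \Pr{U_k \text{ is clique} \mid \Event{\text{star}}^c} \;=\; p^d,
\end{equation*}
where $p$ is the probability that independent $Y_i \sim U[-t_{1i}, t_{1i}]$ (for $i = 2, \ldots, k$) satisfy $|Y_i - Y_j| \le t_{ij}$ for all $2 \le i < j \le k$.

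Next I normalize the one-dimensional problem. Setting $r_i = w_i^{1/d}$ and $u_i = Y_i/t_{1i}$ gives independent $u_i \sim U[-1,1]$, and with $\alpha_i := r_1/r_i \in [1/c, 1]$ the clique constraint becomes $|\alpha_j u_i - \alpha_i u_j| \le 1$ for all $2 \le i < j \le k$; denote the resulting probability by $q(\alpha)$. In the equal-weight case $\alpha_i \equiv \alpha$ it reduces to the range condition $\max_i u_i - \min_i u_i \le 1/\alpha$ for $k-1$ iid uniforms on $[-1,1]$, and the standard order-statistics computation yields
\begin{equation*}
    q(\alpha, \ldots, \alpha) \;=\; \frac{(1/\alpha)^{k-2}\bigl(2(k-1)\alpha - (k-2)\bigr)}{2^{k-1}\,\alpha}\quad\text{whenever } 1/\alpha \le 2,
\end{equation*}
and $q = 1$ otherwise. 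Substituting $\alpha = 1$ gives exactly $k/2^{k-1}$, while substituting $\alpha = 1/c$ with $c \ge 1$ gives a value at most $c^{k-2} k/2^{k-1}$.

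The crux is to reduce the general $q(\alpha)$ to these two equal-weight extremes by showing that $q$ is non-increasing in each $\alpha_i$ on the admissible cone $1/c \le \alpha_k \le \ldots \le \alpha_2 \le 1$. Intuitively, decreasing $\alpha_i$ (increasing $w_i$) slackens every constraint involving $i$, since the strip $\{(u_i, u_j) : |\alpha_j u_i - \alpha_i u_j| \le 1\}$ widens in the $u_j$-direction. To formalize this, the substitution $v_i = u_i/\alpha_i$ turns the support of $v_i$ into $[-1/\alpha_i, 1/\alpha_i]$ (with density $\alpha_i/2$) and the pairwise constraint into the axis-aligned $|v_i - v_j| \le 1/(\alpha_i\alpha_j)$, so decreasing any $\alpha_i$ simultaneously enlarges the relevant support and every pairwise tolerance containing the index $i$. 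A coupling argument in these coordinates then yields monotonicity, which combined with the equal-weight computation gives $k/2^{k-1} \le q(\alpha) \le c^{k-2} k/2^{k-1}$; raising to the $d$-th power proves the theorem.

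The main obstacle is the rigorous justification of monotonicity: a naive coupling keeping the $u_i$ unchanged fails because the strip $|\alpha_j u_i - \alpha_i u_j| \le 1$ rotates in the $(u_i, u_j)$-plane as $\alpha_i$ varies, so the underlying events are not set-theoretically nested. The $v$-substitution replaces the rotating strips with axis-aligned ones at the cost of having to argue simultaneously over all pairs; this reduces to showing that an axis-aligned polytope whose supporting half-spaces and bounding-box faces all shift outward has strictly larger normalized volume. Should the direct coupling prove delicate, an alternative is to expand $q$ by inclusion--exclusion over violated constraints and bound each boundary integral by an explicit $c$-dependent factor, which would still recover the claimed $c^{k-2}$ upper bound.
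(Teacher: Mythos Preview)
Your overall architecture coincides with the paper's: factor across coordinates, establish monotonicity of the conditional clique probability in each non-center weight, and then evaluate at the two equal-weight endpoints $w_u=w_1$ and $w_u=c^dw_1$. Your order-statistics formula for $q(\alpha,\ldots,\alpha)$ agrees with the paper's explicit computation (they integrate over the position of the rightmost point instead of invoking the range distribution, but the result is the same), and your endpoint evaluations $q(1,\ldots,1)=k/2^{k-1}$ and $q(1/c,\ldots,1/c)\le c^{k-2}k/2^{k-1}$ are correct.

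The only genuine gap is the step you yourself single out. The sentence ``this reduces to showing that an axis-aligned polytope whose supporting half-spaces and bounding-box faces all shift outward has strictly larger normalized volume'' is not a valid general principle: when both numerator and denominator of a volume ratio grow, the ratio can move either way. What makes monotonicity go through here is precisely the hypothesis that $v_1$ has \emph{minimal} weight, i.e.\ $\alpha_j\le 1$ for every $j\ge 2$, and you should use it explicitly. In your $v$-coordinates the correct coupling when replacing $\alpha_i$ by $\alpha_i/s$ (with $s>1$) is the rescaling $v_i'=s\,v_i$, which keeps $v_i'$ uniform on the enlarged support $[-s/\alpha_i,s/\alpha_i]$ and leaves all other $v_j$ fixed. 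If the old constraint $|v_i-v_j|\le 1/(\alpha_i\alpha_j)$ held, then since $|v_i|\le 1/\alpha_i$,
\[
|s\,v_i-v_j|\;\le\;|v_i-v_j|+(s-1)|v_i|\;\le\;\frac{1}{\alpha_i\alpha_j}+\frac{s-1}{\alpha_i}\;\le\;\frac{1}{\alpha_i\alpha_j}+\frac{s-1}{\alpha_i\alpha_j}\;=\;\frac{s}{\alpha_i\alpha_j},
\]
where the penultimate inequality is exactly $\alpha_j\le 1$. Constraints not involving index $i$ are untouched, so the new clique event contains the old one pointwise and $q$ can only increase. This is verbatim the paper's argument translated into your coordinates: in the unnormalized positions they substitute $x_2=\xi^{1/d}y$ and use the chain $d_i'\le d_i+(\xi^{1/d}-1)t_2\le t_{2i}+(\xi^{1/d}-1)t_{2i}=\xi^{1/d}t_{2i}$, invoking $t_2\le t_{2i}$, which is again just ``$w_1$ is minimal''. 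Once you write this line, the monotonicity is complete and the inclusion--exclusion fallback is unnecessary.
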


Building on the variant by Decreusefond et
al.~\cite{Decreusefond_Ferraz_Randriambololona_Vergne_2014}, we
provide an alternative proof of the original statement, showing that
the clique probability conditioned on the event
$\Event{\text{star}}^c$ is monotonous in the weight of all other
vertices.  Remarkably, this only holds if we condition on the event
that the center of our star is of minimal weight among the vertices in
$U_k$.

We apply \cref{thm:sharpcliquebounds} to bound the clique probability
in the whole graph (where the ratio of the maximum and minimum weight
of vertices in $U_k$ is not necessarily bounded). 
Afterwards, 
we additionally use Chernoff bounds and the second moment method to
bound the clique number.

\subsection{Relation to Previous Analyses}

In the following, we discuss how our results compare to insights
obtained on similar graph models that (apart from not considering
weighted vertices) mainly differ in the considered ground space.  We
note that, in the following, we consider GIRGs with uniform weights in
order to obtain a valid comparison.

\subparagraph*{Random Geometric Graphs on the Sphere}  Our results
indicate that the GIRG model on the torus behaves similarly to the
model of Spherical Random Geometric Graphs (SRGGs) in the
high-dimensional case. In this model, vertices are distributed on the
surface of a $d-1$ dimensional sphere and an edge is present whenever
the Euclidean distance between two points (measured by their inner
product) falls below a given threshold. Analogously to the behavior of
GIRGs, when keeping $n$ fixed and considering increasing
$d \rightarrow \infty$, this model converges to its non-geometric
counterpart, which in their case is the Erdős–Rényi
model~\cite{Devroye_Gyoergy_Lugosi_Udina_2011}.  It is further shown
that the clique number converges to that of an Erdős–Rényi graph (up
to a factor of $1 + o(1)$) if $d = \omega(\log^3(n))$.

Although the overall behavior of SRGGs is similar to that of GIRGs,
the magnitude of $d$ in comparison to $n$ at which non-geometric
features become dominant seems to differ. In fact, it is shown in
\cite[proof of Theorem 3]{Bubeck_Ding_Eldan_Racz_2015} that the
expected number of triangles in sparse SRGGs still grows with $n$ as
long as $d = o(\log^3(n))$, whereas its expectation is constant in the
non-geometric, sparse case (as for Erdős–Rényi graphs). On the other
hand, in the GIRG model, we show that the expected number of triangles
in the sparse case converges to the same (constant) value as that of
the non-geometric model if only $d = \omega(\log^{3/2}(n))$.  This
indicates that, in the high-dimensional regime, differences in the
nature of the underlying geometry result in notably different
behavior, although in the case of constant dimensionality, the models
are often assumed to behave very similarly.

\subparagraph*{Random Geometric Graphs on the Hypercube}
\label{apx:torusvscube}

The work of Dall and Christensen~\cite{dc-r-02} and the recent work of
Erba et al.~\cite{Erba_Ariosto_Gherardi_Rotondo_2020} show that RGGs
on the hypercube do \emph{not} converge to Erdős–Rényi graphs as $n$
is fixed and $d \rightarrow \infty$. However, our results imply that
this is the case for RGGs on the torus. These apparent disagreements
are despite the fact that Erba et al. use a similar central limit
theorem for conducting their calculations and
simulations~\cite{Erba_Ariosto_Gherardi_Rotondo_2020}.

The tools established in our paper yield an explanation for this
behavior.  Our proof of \cref{thm:chunglueq} relies on the fact that,
for independent zero-mean variables $Z_1, \dots, Z_d$, the covariance
matrix of the random vector $Z = \sum_{i=1}^d Z_i$ is the identity
matrix.  This, in turn, is based on the fact that the torus is a
\emph{homogeneous space}, which implies that the probability measure
of a ball of radius $r$ (proportional to its Lebesgue measure or
volume, respectively) is the same, regardless of where this ball is
centered.  It follows that the random variables $Z_{(u,v)}$ and
$Z_{(u, s)}$, denoting the normalized distances from $u$ to $s$ and
$v$, respectively, are independent.  As a result their covariance is
$0$ although both ``depend'' on the position of $u$.

For the hypercube, this is not the case. Although one may analogously
define the distance of two vertices as a sum of independent, zero-mean
random vectors over all dimensions just like we do in this paper, the
random variables $Z_{(u,v)}$ and $Z_{(u, s)}$ do \emph{not} have a
covariance of $0$.

\subsection{Conjectures \& Future Work}

While making the first steps towards understanding GIRGs and sparse
RGGs on the torus in high dimensions, we encountered several questions
whose investigation does not fit into the scope of this paper.  In the
following, we give a brief overview of our conjectures and possible
starting points for future work.

\paragraph*{Noisy GIRGs}

It would be interesting to extend our results to the \emph{temperate} version of GIRGs, where the threshold is softened using a \emph{temperature} parameter.  That is, while the probability for an edge to exist still decreases with increasing distance, we can now have longer edges and shorter non-edges with certain probabilities.  The motivation of this variant of GIRGs is based on the fact that real data is often noisy as well, leading to an even better representation of real-world
graphs. In this regard, we remark that most of our proof techniques for the case of constant dimension carry over quite directly to temperate GIRGs. However, when considering non-constant dimension, having an additional temperature parameter seems to complicate things by a lot, which is the reason why we concentrate on threshold GIRGs in this work. Nevertheless, we note that both temperature and dimensionality affect the influence of the underlying geometry, so it would be interesting to further investigate whether a sufficiently high temperature has additional impact on how quickly GIRGs converge to IRGs.

\paragraph*{Testing Thresholds for Detecting Underlying Geometry}

Another crucial question is under which circumstances the underlying
geometry of our model remains detectable by means of statistical
testing, and when (i.e., for which values of $d$) our model converges
in total variation distance to its non-geometric counterpart.  A large
body of work has already been devoted to this question for SRGGs~\cite{Devroye_Gyoergy_Lugosi_Udina_2011,
  Bubeck_Ding_Eldan_Racz_2015, Brennan_Bresler_Nagaraj_2020,
  Liu_Racz_2021, Liu_Mohanty_Schramm_Yang_2021} and recently also for
random intersection graphs~\cite{Brennan_Bresler_Nagaraj_2020}.  While
the question when these graphs lose their geometry in the dense case
is already largely answered, it remains open for the sparse case
(where the marginal connection probability is proportional to $1/n$)
and progress has only been made
recently~\cite{Brennan_Bresler_Nagaraj_2020,
  Liu_Mohanty_Schramm_Yang_2021}.  It would be interesting to study this question for our model, both for the case of
constant and for the case of inhomogeneous weights. Our work indicates that GIRGs and RGGs on the torus might lose their geometry earlier than SRGGs as the number of triangles is in expectation already the same as in an
Erdős-Rényi graph if $d = \omega(\log^{3/2}(n))$, while for SRGGs
this only happens if $d = \omega(\log^3(n))$~\cite{Bubeck_Ding_Eldan_Racz_2015}.

Furthermore, it remains to investigate dense RGGs on the torus in this regard, where the marginal connection probability of any pair of vertices is constant and does not decrease with $n$. For dense SRGGs, an analysis
of the high-dimensional case has shown that the underlying geometry
remains detectable as long as $d = o(n^3)$ while for sparse SRGGs it is conjectured that the respective threshold is only at $d = \log(n)^3$. In the dense case, this is accomplished by counting so-called \emph{signed triangles}~\cite{Bubeck_Ding_Eldan_Racz_2015}. Although for the sparse case, signed triangles have no advantage over ordinary triangles, they are much more powerful in the dense case and might prove useful for analyzing dense RGGs on the torus as well. Additionally, as GIRGs contain both very sparse and very dense parts, it is an interesing question whether inhomogeneous weights actually result in different testing thresholds somewhere between that of the dense and the sparse case.

\section{Preliminaries}

We let $G = (V, E)$ be a (random) graph on $n$ vertices.  We let
$\binom{U}{2}$ be the set of all \emph{possible} edges among vertices
of a subset $U \subseteq V$ and denote the \emph{actual} set of edges
between them by $E(U) = E \cap \binom{U}{2}$.  A $k$-clique in $G$ is
a complete induced subgraph on~$k$ vertices of $G$.  We let $K_k$
denote the random variable representing the number of $k$-cliques in
$G$.  We typically use $U_k = \{v_1, \ldots, v_k\}$ to denote a set of
$k$ vertices chosen independently and uniformly at random from the
graph.  For the sake of brevity, we simple write that $U_k$ is a set
of \emph{random vertices} to denote that $U_k$ is obtained that way.
Further, we write $w_1, \ldots, w_k$ for the weights of the vertices
in $U_k$.  The probability that $U_k$ forms a clique is denoted by
$\cliqueprobk$.  Additionally, $\Event{\text{star}}$ is the event
that~$U_k$ is a star with center $v_1$.

We use standard Landau notation to describe the asymptotic behavior of
functions for sufficiently large $n$. That is, for functions $f,g$, we
write $f(n) = \mathcal{O}(g(n))$ if there is a constant $c > 0$ such
that for all sufficiently large $n$, $f(n) \le c g(n)$. Similarly, we
write $f(n) = \Omega(g(n))$ if $f(n) \ge c g(n)$ for sufficiently
large $n$. If both statements are true, we write
$f(n) = \Theta(g(n))$. Regarding our study of the clustering
coefficient, some results make a statement about the asymptotic
behavior of a function with respect to a sufficiently large $d$. These
are marked by
$\mathcal{O}_d(\cdot), \Omega_d(\cdot), \Theta_d(\cdot)$,
respectively.

\subsection{Spherical Random Geometric Graphs (SRGGs)}

In this model, $n$ vertices are distributed uniformly on the
$d$-dimensional unit sphere $\mathcal{S}^{d-1}$ and vertices $u,v$
connected whenever their $L_2$-distance is below the connection
threshold $t_{uv}$, which is again chosen such that the connection
probability of $u,v$ is fixed. This model thus differs from the GIRG
model in its ground space (sphere instead of torus) and the fact that
it uses homogeneous weights, i.e., the marginal connection probability
between each pair of vertices is the same. We mainly use this model as
a comparison, since its behavior in high-dimensions was extensively
studied previously~\cite{Avrachenkov_Bobu_2020,
  Bubeck_Ding_Eldan_Racz_2015, Devroye_Gyoergy_Lugosi_Udina_2011}.

\subsection{Useful Bounds and Concentration Inequalities}

Throughout this paper, we use the following approximation of the
binomial coefficient.
\begin{lemma}\label{lem:binomapprox}
  For all $n \ge 1$ and all $1 \le k \le \frac{1}{2}n$, we
  have \begin{align*} \binom{n}{k} = n^k \Theta(k)^{-k}.
\end{align*} That is, there a are constants $c_1, c_2 > 0$ such that for all $n \ge 1$, \begin{align*}
    n^k \left( c_1 k \right)^{-k} \le \binom{n}{k} \le n^k \left( c_2 k \right)^{-k}.
\end{align*}
\end{lemma}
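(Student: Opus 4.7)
The plan is to establish the two bounds separately and identify explicit constants $c_1, c_2$. Both bounds follow from essentially standard manipulations of the factorial form $\binom{n}{k} = n!/(k!(n-k)!)$, so the main ``obstacle'' is really just selecting the cleanest route; no deep machinery is required.

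For the upper bound, I would start from the elementary inequality $\binom{n}{k} \le n^k / k!$, which holds because the numerator of $\binom{n}{k} = n(n-1)\cdots(n-k+1)/k!$ is at most $n^k$. Combining this with the standard Stirling-type lower bound $k! \ge (k/e)^k$ yields
\begin{equation*}
  \binom{n}{k} \;\le\; \frac{n^k}{(k/e)^k} \;=\; n^k \left(\frac{k}{e}\right)^{-k},
\end{equation*}
so the choice $c_2 = 1/e$ works for every $n \ge 1$ and every $1 \le k \le n$ (in particular, without needing the hypothesis $k \le n/2$).

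For the lower bound, I would write $\binom{n}{k} = \prod_{i=0}^{k-1} \frac{n-i}{k-i}$ and observe that, whenever $n \ge k$, each factor satisfies $\frac{n-i}{k-i} \ge \frac{n}{k}$; this equivalent to $k(n-i) \ge n(k-i)$, which simplifies to $(n-k)i \ge 0$. Taking the product over $i = 0, \ldots, k-1$ gives
\begin{equation*}
  \binom{n}{k} \;\ge\; \left(\frac{n}{k}\right)^k \;=\; n^k \, k^{-k},
\end{equation*}
so $c_1 = 1$ works. Together, the two estimates yield
\begin{equation*}
  n^k (c_1 k)^{-k} \;\le\; \binom{n}{k} \;\le\; n^k (c_2 k)^{-k}
\end{equation*}
with $c_1 = 1$ and $c_2 = 1/e$, which is exactly the asserted $\binom{n}{k} = n^k \, \Theta(k)^{-k}$ in the regime $1 \le k \le n/2$ (and in fact the argument works for all $1 \le k \le n$). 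The hypothesis $k \le n/2$ is therefore not actually needed for the bounds, but stating the lemma in this form matches how it is applied later in the paper, where we always pick $k$ much smaller than $n$. Since both inequalities reduce to one-line calculations, there is no serious obstacle in the proof; the only care required is noting that Stirling's inequality is used in its elementary form $k! \ge (k/e)^k$, which avoids invoking the full asymptotic expansion.
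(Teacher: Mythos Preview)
Your proof is correct. The upper bound is identical to the paper's: both use $\binom{n}{k}\le n^k/k!$ together with the elementary Stirling-type bound $k!\ge (k/e)^k$ to obtain $c_2=1/e$.

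For the lower bound you take a slightly different route than the paper. You use the standard factor-by-factor inequality $\frac{n-i}{k-i}\ge \frac{n}{k}$ (valid for all $0\le i\le k-1$ whenever $n\ge k$) to conclude $\binom{n}{k}\ge (n/k)^k$, giving $c_1=1$ without ever invoking the hypothesis $k\le n/2$. The paper instead bounds $\binom{n}{k}\ge \bigl(\frac{n-k}{k}\bigr)^k$ and then uses $k\le n/2$ to argue $(n-k)^k\ge (cn)^k$ for any $c<1/2$, which yields $c_1=2$. Your argument is thus marginally cleaner: it produces a tighter constant and shows that the restriction $k\le n/2$ is not actually needed for the lemma, confirming your closing remark.
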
 \begin{proof}
We start with the upper bound and immediately get that for all $n, k$, \begin{align*}
    \binom{n}{k} \le \frac{n^k}{k!}.
\end{align*} From Stirling's approximation, we get for all $k \ge 1$ that 
\begin{align*}
    \sqrt{2\pi k} \left( \frac{k}{e} \right)^k e^{\frac{1}{12k + 1}} \le k!.
\end{align*} Because $k \ge 1$, the left side is lower bounded by $\left( \frac{k}{e} \right)^k$ and hence, \begin{align*}
    \binom{n}{k} \le n^k \left( e^{-1}k \right)^{-k}.
\end{align*}

For the lower bound, we observe that
\begin{align*}
  \binom{n}{k} \ge \left( \frac{(n-k)}{k} \right)^k = (n-k)^k k^{-k}.
\end{align*}
We claim that there is a constant $c > 0$ such that
$(n - k)^k \ge (cn)^{k}$, which is equivalent to
$1 - \frac{k}{n} \ge c$. As $k \le \frac{1}{2}n$, this inequality is
true for all $c < \frac{1}{2}$, which finishes the proof.
\end{proof}

We use the following well-known concentration bounds.
\begin{theorem}[Theorem 2.2 in \cite{Keusch_2018}, Chernoff-Hoeffding Bound]\label{thm:chernoff-hoeffding}
  For $1 \le i \le k$, let $X_i$ be independent random variables
  taking values in $[0,1]$, and let $X \coloneqq \sum_{i=1}^k
  X_i$. Then, for all $0 < \varepsilon < 1$,
  \begin{enumerate}[label=(\roman*),leftmargin=2\parindent]
  \item $\Pr{X > (1 + \varepsilon)\Expected{X}} \le \exp \left(-\frac{\varepsilon^2}{3}\Expected{X} \right)$.
  \item $\Pr{X < (1 - \varepsilon)\Expected{X}} \le \exp \left(-\frac{\varepsilon^2}{2}\Expected{X} \right)$.
  \item $\Pr{X \ge t} \le 2^{-t}$ for all $t \ge 2e\Expected{X}$.
  \end{enumerate} 
\end{theorem}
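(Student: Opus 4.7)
The plan is to use the standard exponential moment method (Chernoff's technique) for all three parts. For part (i), I would start from Markov's inequality applied to $e^{tX}$ for a parameter $t>0$: $\Pr{X>(1+\varepsilon)\mu} \le e^{-t(1+\varepsilon)\mu}\,\Expected{e^{tX}}$, where $\mu=\Expected{X}$. Independence of the $X_i$ factors the moment generating function into $\prod_i \Expected{e^{tX_i}}$, and because $X_i\in[0,1]$, convexity of $e^{tx}$ gives $e^{tX_i}\le 1-X_i + X_i e^t$, hence $\Expected{e^{tX_i}}\le 1+\Expected{X_i}(e^t-1)\le \exp(\Expected{X_i}(e^t-1))$. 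Multiplying over $i$ yields $\Expected{e^{tX}}\le \exp(\mu(e^t-1))$. Substituting the optimal choice $t=\ln(1+\varepsilon)$ produces the classical bound $\bigl(e^\varepsilon/(1+\varepsilon)^{1+\varepsilon}\bigr)^\mu$, and then a short calculus estimate (using $\ln(1+\varepsilon)\ge \varepsilon-\varepsilon^2/2$ and $\varepsilon<1$) reduces this to $\exp(-\varepsilon^2\mu/3)$.

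For part (ii), I would apply Markov's inequality to $e^{-tX}$ with $t>0$, using the analogous bound $\Expected{e^{-tX_i}}\le \exp(\Expected{X_i}(e^{-t}-1))$. The optimal parameter is $t=-\ln(1-\varepsilon)$, producing $\bigl(e^{-\varepsilon}/(1-\varepsilon)^{1-\varepsilon}\bigr)^\mu$. A tighter expansion of $\ln(1-\varepsilon)$ (together with the fact that the function decreases more rapidly on the lower tail) gives the sharper exponent $\varepsilon^2\mu/2$.

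For part (iii), I would revisit the upper-tail inequality $\Pr{X\ge t}\le \exp(-t\ln(1+\varepsilon)\mu/\mu + \mu(e^{s}-1))$ but parametrise differently: choose $s=\ln 2$ directly, so $\Pr{X\ge t}\le 2^{-t}\,\exp(\mu(e^{\ln 2}-1)) = 2^{-t}\exp(\mu)$. The condition $t\ge 2e\mu$ is used to absorb the $\exp(\mu)$ factor into $2^{-t}$: since $2^{-t/2}\exp(\mu)\le 1$ whenever $t\ge \mu\cdot 2/\ln 2 \approx 2.88\mu$, and $2e>2/\ln 2$, the remaining half of the exponent $2^{-t/2}$ still dominates (one then tunes the estimate so that a clean $2^{-t}$ bound holds; a slightly different value of $s$ can give the constants exactly as stated).

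The main obstacle is bookkeeping rather than conceptual: the three parts require different choices of the parameter $t$ and slightly different inequalities to convert the optimised bound $(e^{\pm\varepsilon}/(1\pm\varepsilon)^{1\pm\varepsilon})^\mu$ into the advertised explicit forms $e^{-\varepsilon^2\mu/3}$, $e^{-\varepsilon^2\mu/2}$, and $2^{-t}$. In particular, the constant $3$ (as opposed to $2$) in part (i) is not tight from the raw expansion and must be justified either by a careful Taylor remainder for $\varepsilon\in(0,1)$ or by invoking the standard inequality $(1+\varepsilon)\ln(1+\varepsilon)-\varepsilon\ge \varepsilon^2/3$ valid on $(0,1)$; once that estimate is in hand, the rest of the derivation is mechanical.
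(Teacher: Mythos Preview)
The paper does not prove this theorem; it is quoted verbatim as Theorem~2.2 from~\cite{Keusch_2018} and used as a black box, so there is no ``paper's own proof'' to compare against. Your exponential-moment approach is the standard textbook derivation and is correct in outline.

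One small point on part~(iii): your sketch with $s=\ln 2$ does not quite land on the stated bound without further manipulation, as you yourself note. The clean way is to optimise $s$: choosing $s=\ln(t/\mu)$ in $\Pr{X\ge t}\le e^{-st}\exp(\mu(e^s-1))$ gives $\Pr{X\ge t}\le (e\mu/t)^t e^{-\mu}\le (e\mu/t)^t$, and when $t\ge 2e\mu$ the base satisfies $e\mu/t\le 1/2$, yielding $2^{-t}$ directly. This avoids the absorption step you sketch and makes the role of the constant $2e$ transparent.
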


\section{Cliques in the Low-Dimensional Regime}

We start by proving the results from \cref{tab:expectedcliques},
\cref{tab:expectedtriangles}, and \cref{tab:clique-number} for the
case $d = o(\log(n))$. We remind the reader that the results in this
section hold for the standard GIRG model but remark that our bounds up
to (and including) \cref{sec:lowerboundspk} are also applicable if
norms other than $L_\infty$ are used.

\subsection{Bounds on the Clique Probability}

Recall that we denote by $K_k$ the random variable that represents the
number of cliques of size $k$ in $G$ and that $\cliqueprobk$ is the
probability that a set of $k$ vertices chosen uniformly at random
forms a clique. Then the expectation of $K_k$ is
\begin{equation*}
  \Expected{K_k} = \binom{n}{k} \cliqueprobk.
\end{equation*}
In the following, we derive upper and lower bounds on
$\cliqueprobk$. Our bounds here are very general and remain valid
regardless of how the dimension scales with $n$ and which $L_p$-norm
is used. One may also easily extend them to the non-threshold version
of the weight sampling model. Although our bounds are asymptotically
tight for constant $d$, they become less meaningful if~$d$ scales with
$n$. We therefore derive sharper bounds in \cref{sec:highdim} for the
case $d = \omega(\log(n))$.
 
\paragraph{An Upper Bound on $\cliqueprobk$}\label{sec:upperboundspk} 

In this section, we derive an upper bound on $\cliqueprobk$ by
considering the event that a set of~$k$ random vertices forms a star
centered around the vertex of minimal weight. As this is necessary to
form a clique, it gives us an upper bound on $\cliqueprobk$ that is
very general and independent of $d$. To get sharper upper bounds, we
combine this technique with \cref{thm:sharpcliquebounds} in
Section~\ref{sec:expectationbounds}.

Recall that $U_k = \{v_1, \ldots, v_k\}$ is a set of $k$ random
vertices with (random) weights $w_1, \ldots, w_k$.  In the following,
we assume without loss of generality that $v_1$ is of minimal weight
among all vertices in $U_k$. We start by analyzing how the minimal
weight $w_1$ is distributed.
\begin{lemma}\label{lem:distributions}
  Let $G$ be any GIRG with a power-law weight distribution with
  exponent $\beta > 2$.  Furthermore, let $U_k = \{v_1, \ldots, v_k\}$
  be a set of $k$ random vertices and assume that $v_1$ is of minimal
  weight $w_1$ among $U_k$.  Then, $w_1$ is distributed according to
  the density function
  \begin{align*}
    \rho_{w_1}(x) = \frac{(\beta-1)k}{w_0^{(1-\beta)k}} \cdot x^{(1-\beta)k - 1}
  \end{align*}
  in the interval $[w_0, \infty]$.  Conditioned on the weight $w_1$,
  the weight $w_i$ for all $2 \le i \le k$ is distributed
  independently as
  \begin{align*}
    \rho_{w_i \mid w_1}(x) = \frac{\beta-1}{w_1^{1-\beta}}x^{-\beta}.
  \end{align*} 
\end{lemma}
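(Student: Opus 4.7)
The plan is to recognize this as a standard order-statistics computation for i.i.d.\ Pareto variables, combined with a symmetry/exchangeability argument that justifies the ``without loss of generality'' assumption. Since the vertices in $U_k$ are chosen uniformly at random, their weights $\tilde w_1,\dots,\tilde w_k$ are i.i.d.\ draws from $\mathcal{P}$. Relabelling so that $v_1$ is the vertex of minimum weight amounts to looking at the order statistics $(w_{(1)}, w_{(2)}, \dots, w_{(k)})$ and identifying $w_1$ with $w_{(1)}$ and, by exchangeability, identifying the joint conditional law of $(w_2,\dots,w_k)$ given $w_1 = w$ with the law of $k-1$ independent Pareto variables each conditioned to exceed $w$.

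First I would compute the density of $w_1 = \min_i \tilde w_i$. From the distribution function of $\mathcal{P}$ the survival function is $\Pr{\tilde w_i > x} = (x/w_0)^{1-\beta}$ for $x \ge w_0$, and by independence
\begin{equation*}
  \Pr{w_1 > x} \;=\; \prod_{i=1}^{k}\Pr{\tilde w_i > x} \;=\; \left(\frac{x}{w_0}\right)^{(1-\beta)k}.
\end{equation*}
Differentiating $1 - \Pr{w_1 > x}$ with respect to $x$ and using $-(1-\beta)k = (\beta-1)k$ yields exactly $\rho_{w_1}(x) = \frac{(\beta-1)k}{w_0^{(1-\beta)k}} x^{(1-\beta)k-1}$ on $[w_0,\infty)$.

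Next I would derive the conditional density of $w_i$ for $i \ge 2$ given $w_1 = w$. Because the $\tilde w_i$ are independent, conditioning on the event ``$v_1$ is of minimum weight and $w_1 = w$'' is equivalent to conditioning each remaining $\tilde w_i$ on $\tilde w_i \ge w$; moreover, these conditioned variables remain mutually independent. The truncated density is obtained by dividing the original Pareto density by the survival probability at $w$:
\begin{equation*}
  \rho_{w_i \mid w_1}(x) \;=\; \frac{\rho(x)}{\Pr{\tilde w_i \ge w_1}} \;=\; \frac{(\beta-1)\,w_0^{\beta-1}\,x^{-\beta}}{(w_1/w_0)^{1-\beta}} \;=\; \frac{\beta-1}{w_1^{1-\beta}}\,x^{-\beta}
\end{equation*}
for $x \ge w_1$, which is the claimed formula.

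There is no serious obstacle here: the only subtlety is making the exchangeability step explicit, namely that the assumption ``$v_1$ has minimal weight'' does not alter the law of the unordered weight multiset, only the labelling. Once this is stated, the two computations above are direct and the lemma follows. I would therefore open the proof by spelling out this exchangeability reduction, then carry out the two short density calculations in sequence.
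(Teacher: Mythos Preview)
Your proposal is correct and follows essentially the same route as the paper: compute the survival function of the minimum of $k$ i.i.d.\ Pareto weights, differentiate to obtain $\rho_{w_1}$, and then normalize the Pareto density by the survival probability at $w_1$ to obtain the conditional density $\rho_{w_i\mid w_1}$. The only difference is that you spell out the exchangeability justification for the relabelling, which the paper leaves implicit.
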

\begin{proof}
    Recall that the weight $w$ of each vertex is independently sampled from the Pareto distribution such that \begin{align}\label{eq:prob-pareto}
        \Pr{w \le x} = 1 - \left( \frac{x}{w_0} \right)^{1-\beta}.
    \end{align}
    Accordingly, the probability that the minimal weight $w_1$ is at most $x$ is \begin{align*}
        \Pr{w_1 \le x} = 1-\Pr{w \ge x}^k = 1 - \left( \frac{x}{w_0} \right)^{k(1-\beta)}.
    \end{align*}
    To find the density function of $w_1$, we differentiate this term and get \begin{align*}
        \rho_{w_1}(x) = \frac{ \d \Pr{w_v \le x} }{\d x} = \frac{\d}{\d x} \left(1- \left(\frac{x}{w_0}\right)^{(1-\beta)k}\right) \nonumber
        =  \frac{(\beta-1)k}{w_0^{(1-\beta)k}} \cdot x^{(1-\beta)k - 1}.
    \end{align*} The conditional density function $\rho_{w_i \mid w_1}(x)$ of $w_i$ is then \begin{align*}
        \rho_{w_i \mid w_1}(x) = \frac{\rho_w(x)}{\int_{w_1}^\infty \rho_w(x)\d x} = \frac{x^{-\beta}}{\int_{w_1}^\infty x^{-\beta}\d x} = \frac{\beta-1}{w_1^{1-\beta}} x^{-\beta}
    \end{align*} where $\rho_w(x) = \frac{\beta-1}{w_0^{1-\beta}}x^{-\beta}$ is the (unconditional) density function of a single weight.
\end{proof}

We proceed by bounding the probability of the event
$\Event{\text{star}}$ that $U_k$ is a star with center~$v_1$. We start
with the following lemma.

\begin{lemma}\label{lem:starbound}
  Let $G$ be any GIRG with a power-law weight distribution with
  exponent $\beta > 2$, let $U_k = \{v_1, \ldots, v_k\}$ be a set of
  $k$ random vertices, and assume that $v_1$ is of minimal weight
  $w_1$ among $U_k$.  Furthermore, let $\Event{\text{star}}$ be the
  event that $U_k$ is a star with center $v_1$ and let
  $w_-, w_+ \ge w_0$. Then, \begin{align*} \Pr{\Event{\text{star}}
      \cap w_- \le w_1 \le w_+} \le \mathcal{C}
    \left(\frac{\lambda}{n}\right)^{k-1} \left( \frac{\beta -1}{\beta
        -2} \right)^{k-1} \left( w_+^{k(3-\beta)-2} - w_-^{k(3-\beta)
        - 2}\right)
    \end{align*} 
    with \begin{align*}
        \mathcal{C} \coloneqq \frac{(\beta-1)k w_0^{-(1-\beta)k}}{(3-\beta)k-2}.
    \end{align*}
\end{lemma}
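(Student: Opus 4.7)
The plan is to decompose the event $\Event{\text{star}} \cap \{w_- \le w_1 \le w_+\}$ by conditioning on the weights, use the marginal connection probabilities to bound the probability of the star, and then integrate the weights out against the joint density given by \cref{lem:distributions}. The crucial first step is to observe that, conditioned on the position $\F{x}_{v_1}$, the events $\{v_1 \sim v_i\}$ for $i = 2, \ldots, k$ depend only on the pairwise disjoint random positions $\F{x}_{v_i}$ and are therefore conditionally independent. Moreover, because the torus is homogeneous and the $L_p$-ball volume is translation invariant, $\Pr{v_1 \sim v_i \mid \F{x}_{v_1}, w_1, w_i}$ equals $\min\{1, \lambda w_1 w_i / n\}$ regardless of $\F{x}_{v_1}$. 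Averaging over $\F{x}_{v_1}$ preserves the product, giving
\begin{align*}
\Pr{\Event{\text{star}} \mid w_1, \ldots, w_k} = \prod_{i=2}^k \min\!\left\{1, \frac{\lambda w_1 w_i}{n}\right\} \le \prod_{i=2}^k \frac{\lambda w_1 w_i}{n}.
\end{align*}

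Next I would integrate out the weights using \cref{lem:distributions}. Writing $y$ for the minimal weight $w_1$ and $x_i$ for $w_i$ with $i \ge 2$, and inserting the densities $\rho_{w_1}(y)$ and $\rho_{w_i \mid w_1}(x)$, the desired probability is bounded above by
\begin{align*}
\int_{w_-}^{w_+} \rho_{w_1}(y) \left( \int_y^\infty \frac{\lambda y x}{n}\, \rho_{w_i \mid w_1 = y}(x) \,\d x \right)^{\!k-1} \d y,
\end{align*}
where the inner factor appears to the power $k-1$ since $w_2,\ldots,w_k$ are i.i.d.\ given $w_1$. The inner integral is elementary: substituting $\rho_{w_i \mid w_1}(x) = (\beta-1) y^{\beta-1} x^{-\beta}$ and using $\beta > 2$ so that $\int_y^\infty x^{1-\beta} \,\d x = y^{2-\beta}/(\beta - 2)$, it evaluates to $\frac{\lambda(\beta-1)}{n(\beta-2)}\, y^2$.

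For the outer integral, after plugging the above and the explicit form $\rho_{w_1}(y) = \frac{(\beta-1)k}{w_0^{(1-\beta)k}}\, y^{(1-\beta)k - 1}$ back in, the integrand becomes a constant multiple of $y^{(1-\beta)k - 1 + 2(k-1)} = y^{k(3-\beta) - 3}$. Integrating over $[w_-, w_+]$ yields the factor $\frac{w_+^{k(3-\beta)-2} - w_-^{k(3-\beta)-2}}{k(3-\beta) - 2}$, and collecting the remaining prefactors produces exactly the constant $\mathcal{C}$ together with $(\lambda/n)^{k-1}$ and $\bigl((\beta-1)/(\beta-2)\bigr)^{k-1}$ as claimed. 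Note that when $k(3-\beta) - 2 < 0$, both numerator and denominator change sign together so the bound stays positive.

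The only delicate point is the conditional independence step, which relies crucially on the torus being homogeneous so that the marginal $\Pr{v_1 \sim v_i \mid \F{x}_{v_1}}$ does not depend on $\F{x}_{v_1}$; on a non-homogeneous space such as the hypercube the simple product bound would fail and would have to be replaced by a correlation-based argument. Beyond this, everything is a routine chain of integrals, and the $k(3-\beta)=2$ boundary case can be dispensed with either by an implicit assumption or by treating it via a direct logarithmic integration.
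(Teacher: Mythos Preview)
Your proof is correct and follows essentially the same route as the paper: bound the star probability by the product of marginal edge probabilities $\prod_{i\ge 2}\lambda w_1 w_i/n$, integrate out $w_2,\ldots,w_k$ against the conditional density from \cref{lem:distributions} to obtain $\bigl(\tfrac{\beta-1}{\beta-2}\,\tfrac{\lambda}{n}\,w_1^2\bigr)^{k-1}$, and then integrate $w_1^{k(3-\beta)-3}$ against $\rho_{w_1}$ over $[w_-,w_+]$. The paper writes down the nested integral directly without spelling out the conditional-independence justification; your remark that homogeneity of the torus is what makes the product bound exact (rather than merely an inequality in one direction) is a valid refinement, though note the lemma is stated for ``any GIRG'' and the paper silently relies on this.
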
 \begin{proof}
    Define $P \coloneqq \Pr{\Event{\text{star}} \cap w_- \le w_1 \le w_+}$. As the marginal connection probability of two vertices $u,v$ with weights $w_u, w_v$ is $\min\{\lambda w_uw_v, 1\}$ we get  
    \begin{align*}
        P &\le \int_{w_-}^{w_+}\! \int_{w_1}^\infty\!\!\! \dots\! \int_{w_1}^\infty \frac{\lambda^{k-1} w_1^{k-1}w_2\ldots w_k}{ n^{k-1}} \rho_{w_1}(w_1)\rho_{w_2\mid w_1}(w_2) \ldots \rho_{w_k\mid w_1}(w_k) \d w_{k} \ldots \d w_1\\
        &= \left(\frac{\lambda}{n}\right)^{k-1} \int_{w_-}^{w_+} w_{1}^{k-1} \rho_{w_1}(w_{1}) \left( \int_{w_1}^{\infty} w_2 \rho_{w_2 \mid w_1}(w_2) \d w_2 \right)^{k-1} \d w_{1}.
    \end{align*}
    By \cref{lem:distributions}, we have \begin{align}
        \int_{w_1}^{\infty} w_2 \rho_{w_2\mid w_1}(w_2) \d w_2 = \frac{\beta-1}{w_1^{1-\beta}} \int_{w_1}^{\infty} w_2^{1-\beta} \d w_2 = \frac{\beta -1}{\beta -2} w_1,
    \end{align} 
    and therefore, our expression for $P$ simplifies to \begin{align*}
    P &\le \left(\frac{\lambda}{n}\right)^{k-1} \int_{w_-}^{w_+} w_{1}^{k-1} \rho_{w_1}(w_{1}) \left( \frac{\beta -1}{\beta -2} w_1 \right)^{k-1} \d w_{1}\\
    &= \left(\frac{\lambda}{n}\right)^{k-1} \left( \frac{\beta -1}{\beta -2} \right)^{k-1} \frac{(\beta-1)k}{w_0^{(1-\beta)k}} \int_{w_-}^{w_+} w_{1}^{k(3-\beta) - 3} \d w_1\\
    &= \mathcal{C} \left(\frac{\lambda}{n}\right)^{k-1} \left( \frac{\beta -1}{\beta -2} \right)^{k-1} \left( w_+^{k(3-\beta)-2} - w_-^{k(3-\beta) - 2}
    \right) .
    \end{align*}
    as desired. 
\end{proof}

\begin{corollary}\label{cor:upperbound}
  Let $G$ be any GIRG with a power-law weight distribution with
  exponent $\beta > 2$, let $U_k = \{v_1, \ldots, v_k\}$ be a set of
  $k$ random vertices, and assume that $v_1$ is of minimal weight
  $w_1$ among $U_k$.  Furthermore, let $\Event{\text{star}}$ be the
  event that $U_k$ is a star with center $v_1$.  Then,
  \begin{align*}
    \cliqueprobk \le \Pr{\Event{\text{star}}} \le \begin{cases}
                                                    \Theta(1)^k n^{\frac{k}{2}(1-\beta)} & \text{if } k > \frac{2}{3-\beta} \text{ and } 2 < \beta < 3 \\
                                                    \Theta(1)^k n^{1-k} & \text{otherwise. }
                                                  \end{cases}
  \end{align*}
\end{corollary}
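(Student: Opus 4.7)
The first inequality $\cliqueprobk \le \Pr{\Event{\text{star}}}$ is immediate: forming a clique on $U_k$ requires in particular the star centered at $v_1$, so this domination holds irrespective of the geometry or norm.

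To bound $\Pr{\Event{\text{star}}}$, the plan is to integrate out $w_1$ using the distribution from \cref{lem:distributions}, but to split the range of integration at the threshold $w^\star \coloneqq \sqrt{n/\lambda}$. This split is forced by the fact that the derivation in \cref{lem:starbound} uses the un-clipped expression $\lambda w_1 w_i / n$ for the marginal connection probability, which is only valid below one; equivalently, the connection threshold $t_{v_1 v_i} = \tfrac{1}{2}(\lambda w_1 w_i / n)^{1/d}$ saturates at $1/2$ precisely at $w_1 w_i = n/\lambda$, above which the corresponding $L_\infty$-ball covers the entire torus, so the edge becomes certain.

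In the \emph{low-weight regime} $w_1 \le w^\star$, I would apply \cref{lem:starbound} with $w_- = w_0$ and $w_+ = w^\star$. The constants $\mathcal{C}$ and $\bigl(\tfrac{\beta-1}{\beta-2}\bigr)^{k-1}$ together contribute a factor $\Theta(1)^k$, leaving a contribution of order
\[
    \Theta(1)^k \cdot n^{-(k-1)} \cdot \bigl|(w^\star)^{k(3-\beta)-2} - w_0^{k(3-\beta)-2}\bigr|.
\]
If $\beta \in (2,3)$ and $k > 2/(3-\beta)$, the exponent $k(3-\beta)-2$ is positive, the upper endpoint dominates, and a short calculation reduces the exponent of $n$ to $k(1-\beta)/2$. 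Otherwise, the exponent is negative (or, at $\beta \ge 3$, sufficiently so), the lower endpoint $w_0^{k(3-\beta)-2} = \Theta(1)^k$ dominates, and the contribution collapses to $\Theta(1)^k n^{1-k}$. In the \emph{high-weight regime} $w_1 \ge w^\star$, every $w_i \ge w_1$ yields $w_1 w_i \ge n/\lambda$, so each edge incident to $v_1$ is present with probability one; hence $\Pr{\Event{\text{star}} \mid w_1 \ge w^\star} = 1$, and I only need to bound $\Pr{w_1 \ge w^\star} = (w^\star/w_0)^{k(1-\beta)} = \Theta(1)^k n^{k(1-\beta)/2}$, read off directly from the density of $w_1$ in \cref{lem:distributions}.

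Combining the two pieces then settles both cases of the statement: in the first case, both contributions are of order $n^{k(1-\beta)/2}$; in the ``otherwise'' case, comparing $1-k$ with $k(1-\beta)/2$ reduces to the sign of $k(\beta-3)+2$, which is positive exactly when $\beta > 3$ or $k < 2/(3-\beta)$, so the $n^{1-k}$ contribution from the low-weight regime dominates. The only mildly delicate point is the boundary exponent $k(3-\beta) = 2$, where $\mathcal{C}$ becomes singular; since this affects at most a single integer $k$ for any fixed $\beta$, it can be handled either by a perturbation argument or by integrating the logarithmic antiderivative directly, with the resulting $\log n$ factor absorbed into $\Theta(1)^k$. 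The main conceptual step is recognizing the need to split at $w^\star$ so that the cap on the marginal probability is treated correctly; everything else is bookkeeping.
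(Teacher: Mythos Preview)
Your proposal is correct and follows essentially the same approach as the paper: split at $w^\star = \sqrt{n/\lambda}$, apply \cref{lem:starbound} on the low-weight part, use the tail of the minimum-weight distribution from \cref{lem:distributions} on the high-weight part, and compare the resulting exponents via the sign of $k(3-\beta)-2$. One minor quibble: at the boundary $k(3-\beta)=2$ the logarithmic factor cannot literally be absorbed into $\Theta(1)^k$ for fixed $k$, but the paper does not treat this non-generic case either.
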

\begin{proof}
    Set $w_+ = \sqrt{n/\lambda}$ and observe that \begin{align*}
        \Pr{\Event{\text{star}}} = \Pr{\Event{\text{star}} \cap w_0 \le w_1 \le w_+ } + \Pr{\Event{\text{star}} \cap w_1 \ge w_+}.
    \end{align*} Note that, if $w_1 \ge w_+ = \sqrt{n/\lambda}$, the formation of a clique (and thus a star) is guaranteed and hence, \begin{align*}
        \Pr{\Event{\text{star}} \cap w_1 \ge w_+} = \Pr{w_1 \ge w_+} = \left(\frac{w_+}{w_0}\right)^{k(1-\beta)} = \Theta(1)^k n^{\frac{k}{2}(1-\beta)}.
    \end{align*} To bound $\Pr{\Event{\text{star}} \cap w_0 \le w_1 \le w_+ }$, we use \cref{lem:starbound} and observe that 
    \[\mathcal{C} = \frac{(\beta-1)k w_0^{-(1-\beta)k}}{(3-\beta)k-2}\]
    is positive if and only if $k > \frac{2}{3-\beta}$ and $2 < \beta < 3$. In this case \cref{lem:starbound} implies that $\Pr{\Event{\text{star}}} \le \Theta(1)^k n^{\frac{k}{2}(1-\beta)}$ as desired. Otherwise, we get $\Pr{\Event{\text{star}} \cap w_0 \le w_1 \le w_+ } \le \Theta(1)^k n^{1-k}$, which dominates $\Pr{\Event{\text{star}} \cap w_1 \ge w_+} = \Theta(1)^k n^{\frac{k}{2}(1-\beta)}$.
\end{proof}

The above lemma shows that there is a phase transition at
$k = \frac{2}{3-\beta}$ if $2 < \beta < 3$ as previously observed by
Michielan and Stegehuis \cite{Michielan_Stegehuis_2022}. We remark
that our bound is independent of the geometry and also works in the
temperate variant of the model, i.e., in the non-threshold case.

\paragraph{A Lower Bound on $\cliqueprobk$}\label{sec:lowerboundspk}
To obtain a matching lower bound, we employ a similar strategy that
yields the following lemma.
\begin{lemma}\label{lem:starupperbound}
  Let $G = \GIRG(n, \beta, w_0, d)$ be a standard GIRG with any
  $L_p$-norm and let $w_0 \le w_+ \le \sqrt{n/\lambda}$.  Then,
  \begin{align*}
    \Pr{(U_k \text{ is clique}) \cap (w_0 \le w_1 \le w_+) } \ge 2^{-d(k-1)} \mathcal{C} \left(\frac{\lambda}{n}\right)^{k-1}\!\! \left( w_+^{k(3-\beta)-2} - w_0^{k(3-\beta) - 2} \right).
  \end{align*}
  with
  \begin{align*}
    \mathcal{C} \coloneqq \frac{(\beta-1)k w_0^{-(1-\beta)k}}{(3-\beta)k-2}.
  \end{align*}
\end{lemma}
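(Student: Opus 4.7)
The approach mirrors the proof of \cref{lem:starbound} but replaces its marginal edge-probability upper bound with a matching lower bound obtained by restricting the positions of $v_2,\dots,v_k$ to a configuration that forces all $\binom{k}{2}$ edges to be present. Throughout I assume without loss of generality that $w_1 \le w_2 \le \cdots \le w_k$.

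The geometric core of the argument is the following sufficient condition for $U_k$ to be a clique: for every $j \ge 2$, the vertex $v_j$ lies within distance $t_{1j}/2$ of $v_1$. Indeed, for $2 \le i < j \le k$ the triangle inequality in $L_\infty$ gives
\begin{align*}
  d(\F{x}_{v_i}, \F{x}_{v_j}) \le \frac{t_{1i}+t_{1j}}{2} = \frac{w_1^{1/d}\bigl(w_i^{1/d}+w_j^{1/d}\bigr)}{2(\tau n)^{1/d}},
\end{align*}
and this is at most $t_{ij} = (w_i w_j / (\tau n))^{1/d}$ iff $w_1^{1/d}$ does not exceed the harmonic mean of $w_i^{1/d}$ and $w_j^{1/d}$. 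Since the harmonic mean of two positive numbers is at least their minimum, and $w_1 \le \min(w_i,w_j)$, this inequality holds.

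To compute the probability of this configuration, note that for $w_1 \le w_+ \le \sqrt{n/\lambda}$ and $w_j \le 2^d n/(\lambda w_1)$ the radius $t_{1j}/2$ is at most $1/2$, so the $L_\infty$-ball of radius $t_{1j}/2$ around $v_1$'s position is a cube of side $t_{1j}$ that fits inside $\mathbb{T}^d$ and has Lebesgue measure $t_{1j}^d = \lambda w_1 w_j/(2^d n)$. Since the positions of $v_2,\ldots,v_k$ are i.i.d.\ uniform given $v_1$'s position, we conclude
\begin{align*}
  \Pr{U_k \text{ is a clique} \mid w_1,\ldots,w_k} \ge \prod_{j=2}^k \frac{\lambda w_1 w_j}{2^d n}.
\end{align*}
Integrating this bound against the joint density from \cref{lem:distributions} reproduces the calculation in the proof of \cref{lem:starbound} verbatim, except for the additional factor of $2^{-d(k-1)}$; the inner weight integrals $\int_{w_1}^\infty w_j \rho_{w_j\mid w_1}(w_j)\,dw_j = \tfrac{\beta-1}{\beta-2}w_1$ even produce an extra multiplicative factor of $(\tfrac{\beta-1}{\beta-2})^{k-1} \ge 1$, which I simply drop to arrive at the claimed lower bound.

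The main technical obstacle is that for $w_j > 2^d n/(\lambda w_1)$ the ball of radius $t_{1j}/2$ wraps around the torus, so the formula $\lambda w_1 w_j/(2^d n)$ overestimates its measure and the per-vertex lower bound above ceases to hold. I handle this by truncating the inner integral at $w_j = 2^d n/(\lambda w_1)$; using $w_1 \le \sqrt{n/\lambda}$, the discarded tail has conditional mass at most $(w_1^2\lambda/(2^d n))^{\beta-1} \le 2^{-d(\beta-1)}$, so the loss is negligible and comfortably absorbed by the $(\tfrac{\beta-1}{\beta-2})^{k-1}$ slack noted above.
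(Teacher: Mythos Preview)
Your geometric core—place each $v_j$ within $t_{1j}/2$ of $v_1$ and use the triangle inequality—is exactly the paper's idea. The gap is in how you handle the integration over $w_2,\dots,w_k$.

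First, your truncation estimate uses the wrong quantity. What enters the inner integral is not the tail \emph{mass} $\Pr{w_j > M \mid w_1} = (w_1^2\lambda/(2^d n))^{\beta-1}$ but the tail contribution to $\int w_j\,\rho_{w_j\mid w_1}(w_j)\,dw_j$, which is $\tfrac{\beta-1}{\beta-2}\,w_1\cdot(w_1^2\lambda/(2^d n))^{\beta-2}$. The exponent is $\beta-2$, not $\beta-1$, so the loss is larger than you claim.

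Second, and more seriously, the ``comfortably absorbed'' claim is false in general. After truncation the per-vertex factor becomes $\tfrac{2^{-d}\lambda w_1^2}{n}\cdot\tfrac{\beta-1}{\beta-2}\bigl(1-(w_1^2\lambda/(2^d n))^{\beta-2}\bigr)$, and you need the bracketed factor to be at least $1$. At $w_1$ near $\sqrt{n/\lambda}$ this reduces to $\tfrac{\beta-1}{\beta-2}(1-2^{-d(\beta-2)})\ge 1$, i.e.\ $2^{d(\beta-2)}\ge \beta-1$. For $d=1$ and $\beta=2.5$ the left side is $\sqrt{2}\approx 1.41$ and the right side is $1.5$, so the inequality fails. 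Since the lemma must hold for $w_+$ up to $\sqrt{n/\lambda}$ and all $k$, and since for $k>\tfrac{2}{3-\beta}$ the outer integral is dominated by $w_1$ near the upper endpoint, your bound can fall strictly below the stated one.

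The paper sidesteps all of this with a one-line observation: since $w_j\ge w_1$ and $w_1\le\sqrt{n/\lambda}$, one has $\min\{1,\,2^{-d}\lambda w_1 w_j/n\}\ge 2^{-d}\lambda w_1^2/n$ for every $j$. This lower bound is already independent of $w_2,\dots,w_k$, so no inner integration, no truncation, and no slack are needed; integrating the single variable $w_1$ gives the claimed expression exactly.
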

\begin{proof}
    To get a lower bound on $\Pr{U_k \text{ is clique} \cap w_0 \le w_1 \le w_+ }$, we consider the event that every $u \in U_k \setminus \{ v_1 \}$ is placed at a distance of at most $t_{uv_1}/2$ from $v_1$. Then, by the triangle inequality, for any $u, v \in U_k \setminus \{ v_1 \}$, we may bound the distance $d(u,v)$ as \begin{align*}
        d(u,v) \le \frac{1}{2}t_{uv_1}+\frac{1}{2}t_{vv_1} \le t_{uv}
    \end{align*} because $w_1 \le w_v, w_u$. Hence, $u$ and $v$ are adjacent. The probability that a random vertex $u$ is placed at distance of at most $t_{uv}/2$ of $v$ is equal to the volume $\vol(r)$ of the ball of radius $r = t_{uv}/2$ (but at most $1$), i.e., \begin{align*}
        \min\left\{1, \vol(t_{uv}/2)\right\} = \min\left\{1, 2^{-d}\vol(t_{uv})\right\} = \min\left\{1, 2^{-d} \lambda \frac{w_uw_v}{n} \right\}.
    \end{align*} We remark that it is easy to verify that the above term is also a valid lower bound for the probability of the described event if we are working with some $L_p$-norm where $p < \infty$. Conditioned on a value of $w_1$ smaller than $\sqrt{n/\lambda}$, the probability that a vertex $u \in U_k\setminus \{v_1\}$ is placed within distance $t_{uv_1}/2$ of $v_1$ is thus at least $2^{-d}\lambda w_1^2/n$. Thus, \begin{align*}
        \Pr{(U_k \text{ is clique}) \cap (w_0 \le w_1 \le w_+)} &\ge \int_{w_0}^{w_+} \frac{2^{-d(k-1)}\lambda^{k-1} w_1^{2(k-1)}}{n^{k-1}} \rho_{w_1}(w_1) \d w_{1}\\
        &= \left(\frac{2^{-d}\lambda}{n}\right)^{k-1} \frac{(\beta-1)k}{w_0^{(1-\beta)k}} \int_{w_0}^{w_+} w_1^{k(3-\beta) - 3}\d w_1\\
        &= 2^{-d(k-1)} \mathcal{C} \left(\frac{\lambda}{n}\right)^{k-1}\!\! \left( w_+^{k(3-\beta)-2} - w_0^{k(3-\beta) - 2}
        \right),
    \end{align*}
    where the first equality is due to \cref{lem:distributions}.
\end{proof}

\begin{corollary}\label{cor:lowerbound}
  Let $G = \GIRG(n, \beta, w_0, d)$ be a standard GIRG with any
  $L_p$-norm.  Then,
  \begin{align*}
    \cliqueprobk \ge \begin{cases}
                       \Theta(1)^k n^{\frac{k}{2}(1-\beta)} & \text{ if } k > \frac{2}{3-\beta} \text{ and } 2 < \beta < 3 \\
                       \Theta(1)^k 2^{-dk} n^{1-k} & \text{otherwise. }
                     \end{cases}
  \end{align*}
\end{corollary}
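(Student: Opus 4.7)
The plan is to mirror the structure of the matching upper bound (\cref{cor:upperbound}), using the geometric lower bound from \cref{lem:starupperbound} in place of \cref{lem:starbound}, and splitting on the sign of the exponent $k(3-\beta)-2$. Throughout, I would fix the threshold $w_+ = \sqrt{n/\lambda}$, which separates weights for which an edge is formed deterministically from those for which the connection is only probabilistic. Note that once both endpoints have weight at least $\sqrt{n/\lambda}$, the uncapped marginal probability $\lambda w_u w_v/n$ exceeds $1$, so the threshold $t_{uv}$ saturates at its maximum value and the edge is present almost surely.

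For the first case, $k > 2/(3-\beta)$ and $2 < \beta < 3$, I would obtain the bound directly from this deterministic-connection observation: $\Pr{U_k \text{ is clique}} \ge \Pr{w_1 \ge w_+}$, and since $w_1$ is the minimum of $k$ independent Pareto weights, \cref{lem:distributions} yields $\Pr{w_1 \ge \sqrt{n/\lambda}} = (\sqrt{n/\lambda}/w_0)^{k(1-\beta)} = \Theta(1)^k \, n^{k(1-\beta)/2}$, matching the claim. This part does not use \cref{lem:starupperbound} at all and is independent of $d$, mirroring the upper-bound argument.

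For the complementary case (either $\beta \ge 3$, or $2 < \beta < 3$ with $k \le 2/(3-\beta)$), I would apply \cref{lem:starupperbound} with the same choice $w_+ = \sqrt{n/\lambda}$. The delicate point is a sign analysis: in this regime the exponent $k(3-\beta)-2$ is non-positive, so $w_+^{k(3-\beta)-2} \to 0$ as $n \to \infty$ while $w_0^{k(3-\beta)-2} = \Theta(1)^k$; simultaneously, the prefactor $\mathcal{C}$ is negative. The product $\mathcal{C}\bigl(w_+^{k(3-\beta)-2} - w_0^{k(3-\beta)-2}\bigr)$ is therefore a positive quantity of order $\Theta(1)^k$. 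Multiplying by $(\lambda/n)^{k-1}\, 2^{-d(k-1)}$ gives a bound of order $\Theta(1)^k \, 2^{-d(k-1)} \, n^{1-k}$, and absorbing the extra factor $2^d$ into the $\Theta(1)^k$ only weakens the bound to the claimed $\Theta(1)^k \, 2^{-dk} \, n^{1-k}$.

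The main obstacle will be the sign/limit bookkeeping in the second case — making explicit that the apparently negative bracket $(w_+^{k(3-\beta)-2} - w_0^{k(3-\beta)-2})$ combines with the negative $\mathcal{C}$ to produce a positive $\Theta(1)^k$ constant in the limit, and that the vanishing $w_+^{k(3-\beta)-2}$ term can safely be dropped. A minor consistency check is that the weaker bound $\Theta(1)^k \, 2^{-dk}\, n^{1-k}$ from the second case is indeed subdominant to $\Theta(1)^k\, n^{k(1-\beta)/2}$ exactly in the regime of the first case, which reduces to $k(3-\beta)/2 > 1$, i.e., the case hypothesis itself, so the two branches glue together consistently.
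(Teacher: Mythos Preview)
Your proposal is correct and follows essentially the same approach as the paper, which simply says the proof is identical to that of \cref{cor:upperbound} with \cref{lem:starupperbound} in place of \cref{lem:starbound}. In particular, your observation that the first case must come from the deterministic-connection event $\{w_1 \ge \sqrt{n/\lambda}\}$ (rather than from \cref{lem:starupperbound}, which would introduce an unwanted $2^{-d(k-1)}$ factor) is exactly what the paper's proof does implicitly by mirroring the split in \cref{cor:upperbound}, and your sign analysis in the second case is the correct way to unpack that mirrored argument.
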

\begin{proof}
    The proof is identical to that of \cref{cor:upperbound} with \cref{lem:starupperbound} instead of \cref{lem:starbound}.
\end{proof}

Hence, the asymptotic behavior of our lower bound for $\cliqueprobk$ is the same as that of the upper bound up to a factor of $2^{-d(k-1)} \Theta(1)^k$. We remark that our bounds are easily adaptable to the non-threshold version of the GIRG model as here, it is still guaranteed that a pair of vertices placed within its respective connection threshold is adjacent with a constant probability.

\subsubsection{A Sharper Upper Bound on $q_k$ for Bounded Weights}\label{sec:betterupperboundforboundedweights}

Our upper and lower bounds for the cases $k < \frac{2}{3-\beta}$ and
$\beta \ge 3$ still differ by a factor that is exponential in $d$. In
this section, we prove \cref{thm:sharpcliquebounds}, which we restate
for the sake of readability, and thereby narrow this gap down under
the assumption that the weights of the vertices in $U_k$ are
bounded. While this condition is not always met with high probability
if we choose $U_k$ at random from all vertices, we show how to
leverage it to obtain a better bound on the number of cliques in the
entire graph in the \Cref{sec:expectationbounds}.

\sharpcliquebounds*

Recall that $\tau = 2^d/\lambda$ is a parameter controlling the average degree by influencing the connection threshold. 
We require the condition $c^2 \left( w_1^2/(\tau n) \right)^{1/d} \le 1/4$ to ensure that the maximal connection threshold of any pair of vertices in $U_k$ is so small that we can measure the distance between two neighbors of a given vertex as we would in $\mathbb{R}^d$, i.e., without having to take the topology of the torus into account. We remark that this condition is asymptotically fulfilled as long as $d = o(\log(n))$ and $w_1 = \mathcal{O}(n^{1/2-\varepsilon})$ for any $\varepsilon > 0$.

In the following, we let $\weightsequence{k} = \{w_1, \ldots, w_k\}$ be the sequence of weights of the vertices in $U_k = \{v_1, \ldots, v_k\}$ whereby we assume without loss of generality that $w_1 \le \ldots \le w_k$. We denote by $U_k \text{ is star}$ the event that $U_k$ is a star centered at the vertex of minimum weight (which is $v_1$). In order to prove \Cref{thm:sharpcliquebounds}, we start by showing that $\Pr{U_k \text{ is clique} \mid U_k \text{ is star}, \weightsequence{k}}$ is monotonically increasing in $w_i$ for all $2 \le i \le k$. We remark that this property only holds if we condition on having a star centered at the vertex of minimal weight in $U_k$. With this statement, we may subsequently assume all $u \in U_k\setminus\{v_1\}$ to have a weight of $w_{1}$ and $c^d  w_{1}$ for deriving a lower and upper bound on $\Pr{U_k \text{ is clique} \mid \Event{star}^{c}}$, respectively.

\begin{lemma}\label{lem:monotonicity}
  Let $G = \GIRG(n, \beta, w_0, d)$ be a standard GIRG and denote by $U_k$ is star the event that $U_k$ forms a star centered at the vertex of minimal weight in $U_k$. Let further $U_k = \{v_1, \ldots, v_k\}$ be a set of $k$ random vertices with $w_1 \le \dots \le w_k$.  Then, the conditional probability $\Pr{U_k \text{ is clique} \mid U_k \text{ is star}, \weightsequence{k}}$ is monotonically increasing in $w_2, \ldots, w_k$.
\end{lemma}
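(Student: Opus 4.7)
The plan is to prove monotonicity in each variable $w_i$, $i \ge 2$, separately. Fix such an $i$, fix the remaining weights, and consider $w_i' > w_i$; set $\lambda \coloneqq (w_i'/w_i)^{1/d} \ge 1$. By the translation invariance of the torus I assume $v_1 = 0$, so that conditional on $\weightsequence{k}$ and $\Event{\text{star}}^c$, the positions $v_2,\ldots,v_k$ are independent with $v_j$ uniform on $B_{1j} = [-t_{1j}, t_{1j}]^d$ (the side condition $w_k \le c^d w_1$ in $\Event{\text{star}}^c$ is a property of the weight sequence and plays no further role). Integrating $v_i$ out first,
\begin{align*}
\Pr{U_k \text{ is clique} \mid \Event{\text{star}}^c, \weightsequence{k}} = \int Q_{w}(v_{-i}) \, \d\nu(v_{-i}),
\end{align*}
where $v_{-i} = (v_j)_{j \ne 1, i}$,
\begin{align*}
Q_{w}(v_{-i}) = \frac{1}{|B_{1i}|} \int_{B_{1i}} \prod_{l \ne 1, i} 1_{\|v_i - v_l\|_\infty \le t_{il}} \, \d v_i,
\end{align*}
and $\d\nu(v_{-i}) = \prod_{l \ne 1, i} \frac{1_{v_l \in B_{1l}}}{|B_{1l}|} \cdot \prod_{j < l, \, j, l \ne 1, i} 1_{\|v_j - v_l\|_\infty \le t_{jl}} \prod_l \d v_l$ is a positive measure independent of $w_i$.

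The first key step is a scaling identity. Since both $t_{1i}$ and every $t_{il}$ ($l \ne i$) scale by $\lambda$ while all other boxes and thresholds are unchanged, the substitution $v_i = \lambda y$ in the defining integral of $Q$ yields $Q_{w'}(v_{-i}) = Q_w(v_{-i}/\lambda)$. The lemma therefore reduces to
\begin{align*}
\int Q_w(v_{-i}/\lambda) \, \d\nu \ge \int Q_w(v_{-i}) \, \d\nu \quad \text{for all } \lambda \ge 1.
\end{align*}
Under the $L_\infty$-norm, both $\nu$ and $Q_w$ decompose coordinate-wise: $\nu = \bigotimes_{m=1}^d \nu^{(m)}$ and $Q_w(v_{-i}) = \prod_{m=1}^d Q_w^{(m)}(v_{-i}^{(m)})$. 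It therefore suffices to prove the inequality in one dimension.

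The crux is a convex-geometric observation. In one dimension (suppressing the coordinate index),
\begin{align*}
Q_w^{(m)}(v) = \frac{1}{2 t_{1i}} \max\!\left(0,\; \min\bigl(t_{1i}, \min_l (v_l + t_{il})\bigr) - \max\bigl(-t_{1i}, \max_l (v_l - t_{il})\bigr)\right).
\end{align*}
The right endpoint of the interval is a minimum of affine functions (concave in $v$), the left endpoint is a maximum of affine functions (convex in $v$), so $Q_w^{(m)}$ has the form $\max(0, \text{concave})$; its super-level sets $K_t \coloneqq \{Q_w^{(m)} \ge t\}$ are convex for every $t > 0$. A short calculation using $\min(a,b)=-\max(-a,-b)$ shows $Q_w^{(m)}(-v) = Q_w^{(m)}(v)$, so each $K_t$ is symmetric about the origin and hence star-shaped around $0$. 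Thus, for $\lambda \ge 1$, $K_t \subseteq \lambda K_t$, and so $\nu^{(m)}(\lambda K_t) \ge \nu^{(m)}(K_t)$. Combining this with $\{v : Q_w^{(m)}(v/\lambda) \ge t\} = \lambda K_t$ and the layer-cake formula,
\begin{align*}
\int Q_w^{(m)}(v/\lambda) \, \d\nu^{(m)}(v) = \int_0^\infty \nu^{(m)}(\lambda K_t) \, \d t \ge \int_0^\infty \nu^{(m)}(K_t) \, \d t = \int Q_w^{(m)}(v) \, \d\nu^{(m)}(v).
\end{align*}
Taking the product over the $d$ coordinates closes the argument.

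The main obstacle I anticipate is checking that the super-level sets of $Q_w^{(m)}$ are both convex and symmetric about the origin: the symmetry relies on having centered things at $v_1 = 0$ together with the symmetry of the boxes $[-t_{1l}, t_{1l}]$ and of the pairwise constraints under $v \mapsto -v$, while the convexity requires recognizing the $\max(0, B - b)$ structure with $B$ concave and $b$ convex. Once these are in place, the layer-cake step is essentially automatic from the star-shapedness of symmetric convex sets; note in particular that the assumption $v_1$ has \emph{minimum} weight is not actively used in this direction of the monotonicity, but it is what makes the conditioning event $\Event{\text{star}}^c$ meaningful in the subsequent application.
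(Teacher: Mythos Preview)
Your proof is correct. Both you and the paper exploit the $L_\infty$ factorisation to reduce to one dimension and both use the substitution that rescales $v_i$ by $\lambda=(w_i'/w_i)^{1/d}$, but from there the arguments diverge. The paper keeps all coordinates $(y,x_3,\dots,x_k)$ as integration variables and proves the \emph{pointwise} implication $\mathds{1}(y,\dots)=1\Rightarrow\tilde{\mathds{1}}(\lambda y,\dots)=1$; this step needs $t_{1i}\le t_{il}$, i.e.\ that $v_1$ has minimum weight, and is the origin of the paper's remark that the monotonicity ``only holds'' under that hypothesis. You instead integrate $v_i$ out first, obtain the function $Q_w^{(m)}$, and use that its super-level sets are convex and centrally symmetric (hence star-shaped at $0$) together with the layer-cake formula. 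Your route is a bit more abstract but buys something real: the convex-geometric argument never invokes $w_1\le w_l$, so it shows the monotonicity in $w_i$ without that ordering assumption and clarifies that the paper's remark concerns its particular proof technique rather than the phenomenon itself. The paper's argument, in turn, is more elementary in that it avoids any appeal to convexity or layer-cake and stays at the level of indicator comparisons.
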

\begin{proof}
    For any $1 \le i, j \le k$, denote by $t_{ij}$ the connection threshold $t_{v_iv_j} = \left(\frac{w_iw_{j}}{\tau n}\right)^{1/d}$. In the following, we abbreviate $t_{i1}$ by $t_i$ for $2 \le i \le k$. Note that, since we assume the use of $L_\infty$-norm and condition on $\Event{\text{star}}^c$, the vertex $v_i$ is uniformly distributed in the cube of radius $t_{i}$ around $v_1$ for all $2 \le i \le k$. Thus, all components of $\F{x}_{v_i}$ are independent and uniformly distributed random variables in the interval $[-t_i, t_i]$ (we choose our coordinate system such that $\F{x}_{v+1}$ is the origin). The probability that $U_k$ is a clique conditioned on $\Event{\text{star}}^c$ is hence equal to the probability that the distance between each pair of points is below their respective connection threshold in every dimension. If we denote by $p$ the probability that this event occurs in one fixed dimension, we get that $\Pr{U_k \text{ is clique} \mid \Event{\text{star}}^c, \weightsequence{k}} = p^d$ because all dimensions are independent. Therefore, it suffices to show the desired monotonicity only for $p$. In the following, we therefore only consider one fixed dimension and denote the coordinate of the vertex $v_i$ in this dimension with $x_i$.

    Note that the probability $p$ is equal to \begin{align*}
        p = \int_{-t_2}^{t_2}\int_{-t_3}^{t_3}\cdots \int_{-t_{k}}^{t_{k}} \rho(x_2)\ldots\rho(x_{k}) \mathds{1}(x_2, \ldots, x_{k}) \d x_k \ldots \d x_{2},
    \end{align*}
    where $\rho(x_i) = \frac{1}{2t_i}$ is the density function of $x_i$ as $x_i$ is uniformly distributed in the interval $[-t_i, t_i]$, and $\mathds{1}(x_2, \ldots, x_{k})$ is an indicator function that is $1$ if and only if for all $2 \le i,j \le k$, we have $|x_i - x_j| \le t_{ij}$. We show that $p$ is monotonically increasing in $w_i$ for all $2 \le i \le k$. For this, assume without loss of generality that we increase the weight of $v_2$ by a factor $\xi > 1$. This weight change increases the threshold $t_2$ by a factor of $\xi^{1/d}$ and we denote the connection threshold between $v_i$ and $v_j$ after the weight change by $\tilde{t}_{ij}$. The connection probability $\tilde{p}$ after this weight increases is \begin{align*}
        \tilde{p} = \int_{-\xi^{1/d} t_2}^{\xi^{1/d} t_2}\int_{-t_3}^{t_3}\cdots \int_{-t_{k}}^{t_{k}} \tilde{\rho}(x_2) \rho(x_3)\ldots\rho(x_{k}) \tilde{\mathds{1}}( x_2, \ldots, x_{k}) \d x_k \ldots \d x_{2},
    \end{align*}
    where $\tilde{\rho}(x_2) = \frac{1}{2\xi^{1/d}t_2} = \rho(x_2)/\xi^{1/d}$, and where $\tilde{\mathds{1}}$ is defined like $\mathds{1}$ with the only difference that it uses the new weight of $v_2$, i.e., $\tilde{\mathds{1}}$ is 1 if and only if $|x_i - x_j| \le \tilde{t}_{ij}$ for all $2 \le i,j \le k$.
    Substituting $x_2 = \xi^{1/d} y$, we get \begin{align*}
        \tilde{p} &= \xi^{1/d}\int_{-t_2}^{t_2}\int_{-t_3}^{t_3}\cdots \int_{-t_{k}}^{t_{k}} \tilde{\rho}(\xi^{1/d}y) \rho(x_3) \ldots\rho(x_{k}) \tilde{\mathds{1}}(\xi^{1/d} y, \ldots, x_{k}) \d y \ldots \d x_{k}\\
        &= \int_{-t_2}^{t_2}\int_{-t_3}^{t_3}\cdots \int_{-t_{k}}^{t_{k}} \rho(y) \rho(x_3) \ldots\rho(x_{k}) \tilde{\mathds{1}}(\xi^{1/d} y, \ldots, x_{k}) \d y \ldots \d x_{k}.
    \end{align*}
    
    We claim that $\tilde{p}\geq p$, which we show by proving that $\mathds{1}(y, \ldots, x_{k}) = 1$ implies $\tilde{\mathds{1}}(\xi^{1/d} y, \ldots, x_{k}) = 1$. For this, assume that $y, x_3, \ldots, x_k$ are such that $\mathds{1}(y, \ldots, x_{k}) = 1$. Note that it suffices to show that for all $3 \le i \le k$, if $|x_i - y| \le t_{2i}$, then $|x_i - \xi^{1/d} y| \le \xi^{1/d}t_{2i}$. More formally, we have to show that $d_i \coloneqq |x_i - y| \le t_{2i}$ implies $d_i' \coloneqq |x_i - \xi^{1/d} y| \le \xi^{1/d} t_{2i}$. 
    
    We note that $|y - \xi^{1/d}y| \le \xi^{1/d} t_2 - t_2 = t_2(\xi^{1/d} - 1)$, as $|y|$ is at most $t_2$. Hence, the distance between $v_i$ and $v_2$ increases by at most $t_2(\xi^{1/d} - 1)$ as well. Furthermore, recall that $t_2 \le t_{2i}$ as we assume $w_i \ge w_{1}$.  Accordingly, \begin{align*}
        d_i' &\le d_i + (\xi^{1/d} - 1) t_{2}\\
               &\le t_{2i} + (\xi^{1/d} - 1) t_{2}\\
               &\le t_{2i} + (\xi^{1/d} - 1) t_{2i}\\
               &= \xi^{1/d} t_{2i},
    \end{align*} which finishes the proof.
\end{proof}

Before proceeding with the proof of \Cref{thm:sharpcliquebounds}, we remark that the above statement implies that the entire clique probability (conditional on a given weight sequence) is monotonically increasing in the involved weights. This will be useful in the next section.

\begin{corollary}
  Let $\weightsequence{w} = \{w_1, \ldots, w_k\}$ be a weight sequence with $w_1 \le \ldots \le w_k$. 
  Then, the  probability $\Pr{U_k \text{ is clique} \mid \weightsequence{w}}$ is monotonically increasing in $w_1, \ldots, w_k$.
\end{corollary}\begin{proof}
  Recall that ``$U_k \text{ is star}$'' denotes the event that $U_k$ is a star centered at the vertex of minimal weight in $U_k$ and note that \begin{align*}
    \Pr{U_k \text{ is clique} \mid \weightsequence{w}} = \Pr{U_k \text{ is clique} \mid U_k \text{ is star}, \weightsequence{w}}\Pr{U_k \text{ is star} \mid \weightsequence{w}}.
  \end{align*}
  Hence, if we increase any of the weights $w_2, \ldots, w_k$, then both of the above terms on the right hand side are increasing. If we increase $w_1$, then note that \begin{align*}
    &\Pr{U_k \text{ is clique} \mid \weightsequence{w}} \\
    & \hspace{1cm} = \Pr{U_k \text{ is clique} \mid U_k \setminus \{ v_1 \} \text{ is clique}, \weightsequence{w}} \Pr{U_k \setminus \{v_1\} \text{ is clique} \mid \weightsequence{w}}.
  \end{align*} Here, the second factor remains the same if we change $w_1$ and the first factor can only increase if we increase $w_1$ as -- no matter how $v_2, \ldots, v_k$ arrange to form a clique -- increasing $w_1$ only increases the probability that $v_1$ is adjacent to all of them.
\end{proof}

We go on with calculating $\Pr{U_k \text{ is clique} \mid \Event{\text{star}}^c}$ under the assumption of uniform weights, which afterwards implies \Cref{thm:sharpcliquebounds}.
\begin{lemma}\label{lem:uniformbound}
  Let $G = \GIRG(n, \beta, w_0, d)$ be a standard GIRG, let $U_k$ and
  $\Event{\text{star}}^c$ be defined as in
  \cref{thm:sharpcliquebounds}, and assume that all vertices in
  $u \in U_k \setminus \{v_1\}$ have weight $w_u \ge w_{1}$.
  Then,
  \begin{align*}
    \Pr{U_k \text{ is clique} \mid \Event{\text{star}}^c} = \left(\frac{w_u}{w_1}\right)^{k-2} \left(2^{-(k-1)} \left( (2 - k) \left( \frac{w_u}{w_1} \right)^{1/d} + 2(k-1) \right)\right)^d.
  \end{align*}
\end{lemma}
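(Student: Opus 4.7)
The plan is to reduce the clique probability to a question about the range of i.i.d.\ uniform random variables in one dimension, and then invoke the classical formula for its distribution.

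First I would use the setup already established in the proof of \cref{lem:monotonicity}: conditioning on $\Event{\text{star}}^c$ and working under $L_\infty$-norm, each $v_i$ for $i \ge 2$ is uniformly distributed in the axis-aligned cube of radius $t_i = (w_1 w_i / (\tau n))^{1/d}$ around $v_1$. The hypothesis $c^2(w_1^2/(\tau n))^{1/d} \le 1/4$ ensures that all relevant thresholds are at most $1/4$, so these cubes and all relevant pairwise distance comparisons fit in the torus without wrap-around and we may work in $\mathbb{R}^d$. Because $L_\infty$-balls are products of intervals and each coordinate of each $v_i$ is independently uniform on $[-t_i, t_i]$, the clique event factors across dimensions, giving $\Pr{U_k \text{ clique} \mid \Event{\text{star}}^c} = p^d$ where $p$ is the per-dimension probability that every pair $v_i, v_j$ with $i, j \ge 2$ has coordinate difference at most $t_{ij}$.

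Now I would specialize to the uniform-weight assumption $w_2 = \cdots = w_k = w_u$. Then all $t_i$ equal a common value $t := (w_1 w_u/(\tau n))^{1/d}$ and all pairwise thresholds $t_{ij}$ for $i,j \ge 2$ equal $t_u := (w_u^2/(\tau n))^{1/d}$. In a single dimension, $x_2, \dots, x_k$ are therefore i.i.d.\ uniform on $[-t, t]$, and $p$ is exactly the probability that the range $\max_i x_i - \min_i x_i$ of these $m = k-1$ variables is at most $t_u$. Rescaling to $[0,1]$, this becomes $\Pr{R \le \alpha}$ with $\alpha = t_u/(2t) = (w_u/w_1)^{1/d}/2$, which lies in $[1/2, 1]$ under the hypotheses of \cref{thm:sharpcliquebounds}.

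At this point I would apply the classical identity for the range of $m$ i.i.d.\ $U[0,1]$ variables,
\begin{equation*}
  \Pr{R \le \alpha} = m\, \alpha^{m-1} - (m-1)\,\alpha^m \qquad (0 \le \alpha \le 1),
\end{equation*}
obtained by a one-line computation from the joint density of the minimum and maximum. With $m = k - 1$ this yields $p = \alpha^{k-2}\bigl[(k-1) - (k-2)\alpha\bigr]$. Substituting $\alpha = (w_u/w_1)^{1/d}/2$, collecting the factor $\alpha^{k-2} = (w_u/w_1)^{(k-2)/d}\, 2^{-(k-2)}$, and raising to the $d$-th power (which removes the $1/d$ exponents and produces the outer $(w_u/w_1)^{k-2}$ prefactor as well as the $2^{-(k-1)}$ inside the bracket after pulling one further factor of $1/2$) gives exactly the claimed expression.

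The only genuinely delicate point is the product-measure reduction: it relies on $L_\infty$-balls being Cartesian products and on the threshold condition preventing toroidal wrap, both of which are already justified above. The remainder is bookkeeping on the range CDF and an algebraic rearrangement, so I would not anticipate further obstacles.
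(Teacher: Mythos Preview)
Your proposal is correct and follows essentially the same approach as the paper: reduce to a single coordinate, recognize the per-dimension clique event as the event that the range of $k-1$ i.i.d.\ uniforms is at most $t_u/(2t_0)$, and raise to the $d$-th power. The only cosmetic difference is that you invoke the classical range CDF $\Pr{R\le\alpha}=m\alpha^{m-1}-(m-1)\alpha^m$ directly, whereas the paper re-derives it by conditioning on the position of the rightmost coordinate $x_{\max}$; the resulting expression and the constraint $\alpha\le 1$ (i.e., $t_u\le 2t_0$) are identical in both arguments.
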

\begin{proof}
    Again, we only consider one fixed dimension and denote the coordinate of a vertex $u \in U_k$ in this dimension by $x_u$. Note that by \cref{lem:monotonicity}, we may assume that the vertices $v_2, \ldots, v_k$ all have the same weight. We further refer to the connection threshold between any $u \in U_k \setminus \{v_1\}$ and $v_1$ as $t_0$ and to the connection threshold between two vertices in $U_k \setminus \{v_1\}$ as $t_u$. This enables us to set the origin of our coordinate system such that $x_u$ takes values in $[0, 2t_0]$ for all $u \in U_k$, i.e., such that $x_{v_1} = t_0$. Recall that for all $u \in U_k \setminus \{v_1\}$, $x_u$ is a uniformly distributed random variable. 

    We refer to the event that the pairwise distance in the coordinates of all $u,v \in U_k \setminus \{v_1\}$ in our fixed dimension is below the connection threshold $t_{uv}$ as $U_k$ being a \emph{1-D clique}. We calculate $p \coloneqq \Pr{U_k \text{ is 1-D clique} \mid \Event{\text{star}}^c}$ by integrating over the conditional probability $\Pr{U_k \text{ is 1-D clique} \mid \Event{\text{star}}^c, x_{max}}$ where $x_{max}$ is the coordinate of the rightmost vertex (the one with largest coordinate) in $U_k \setminus \{v_1\}$. Note that $U_k$ is a 1-D clique if and only if we have $|x_u - x_{max}| \le t_u$ for all $u \in U_k\setminus \{v_1\}$. We further note that \begin{align*}
            \Pr{U_k \text{ is 1-D clique} \mid \Event{\text{star}}^c, x_{max}} = \begin{cases}
                1 & \text{if } x_{max} \le t_u\\
                \left(\frac{t_u}{x_{max}}\right)^{k-2} & \text{otherwise}.
            \end{cases}
    \end{align*}
    It remains to derive the distribution of $x_{max}$. For this, we derive its density function as \begin{align*}
        \rho(x_{max}) &= \frac{\d \Pr{x_{max} \le x} }{\d x} \\
        &= \frac{\d }{\d x} \left( \Pr{y \le x}^{k-1} \right) \\
        &= \frac{\d }{\d x} \left( \left(\frac{x}{2t_0}\right)^{k-1} \right)\\
        &= \frac{k-1}{(2t_0)^{k-1}}x^{k-2}.
    \end{align*}
    With this, we may deduce \begin{align*}
        \Pr{U_k \text{ is 1-D clique} \mid \Event{\text{star}}^c} &= \int_{0}^{2t_0} \rho(x_{max}) \Pr{U_k \text{ is 1-D clique} \mid x_{max}} \d x_{max}\\
        &= \frac{k-1}{(2t_0)^{k-1}} \left( \int_{0}^{t_u} x^{k-2} \d x + t_u^{k-2} \int_{t_u}^{2t_0} 1 \d x \right)\\
        &= 2^{-(k-1)}\!\left(  \left(\frac{t_u}{t_0}\right)^{k-1} + (k-1)\frac{t_u^{k-2}}{t_0^{k-1}} (2t_0 - t_u) \right) \\
        &= 2^{-(k-1)}\!\left(\!\left(\frac{t_u}{t_0}\right)^{k-1}\!\!\! \mkern-10mu + 2(k-1)\!\left(\frac{t_u}{t_0}\right)^{k-2}\!\!\! \mkern-10mu - (k-1)\!\left(\frac{t_u}{t_0}\right)^{k-1} \right)\\
        &= 2^{-(k-1)}\! \left(\frac{w_u}{w_1}\right)^{\frac{k-2}{d}} \left( (2 - k) \left(\frac{w_u}{w_1}\right)^{1/d} + 2(k-1) \right).
    \end{align*}
    Since $\Pr{U_k \text{ is clique} \mid \Event{\text{star}}^c} = \Pr{U_k \text{ is 1-D clique} \mid \Event{\text{star}}^c}^d$, this finishes the proof.
\end{proof} 

\begin{proof}[Proof of \cref{thm:sharpcliquebounds}]
    We use the monotonicity of $\Pr{U_k \text{ is clique} \mid \Event{\text{star}}^c}$ obtained by \cref{lem:uniformbound}. 
    Due to this monotonicity, it is sufficient to assume that $w_u = w_1$ for all $u \in U_k$ for deriving the lower bound. Hence, we have $w_u/w_1 = 1$ and the expression in \cref{lem:uniformbound} simplifies to \begin{align*}
        2^{-d(k-1)}k^d.
    \end{align*}

    For the upper bound, we instead assume $w_u = c^d w_1$ implying that $ (w_u/w_1)^{1/d} = c$. \cref{lem:uniformbound} implies that \begin{align*}
        \Pr{U_k \text{ is clique} \mid \Event{\text{star}}^c} &\le c^{d(k-2)} 2^{-d(k-1)} \left((2-k)c +2(k - 1)\right)^d\\
        &\le c^{d(k-2)} 2^{-d(k-1)} k^d = \frac{1}{c^d} \left(\frac{c}{2}\right)^{d(k-1)} k^d,
    \end{align*} where we used that $(2-k)c +2(k - 1) \le k$ for all $c \ge 1$ and $k \ge 2$.
\end{proof}

\subsection{Characterizing Cliques by Vertex Weights}\label{sec:cliquesbyvertexweightslowdim}

After establishing bounds on the clique probability in the whole graph, we now turn to characterizing the clique probability in specific parts of the graph in order to prove the statements in \cref{tab:typical-clique-min}. Note that the proofs for the regime $k < \frac{2}{3-\beta}$ and $d = \omega(\log(n))$ are in \cref{sec:clique-weight-high-dim}, all the rest is proven here. 

Recall that $\wmin$ and $\wmax$ are the minimum and maximum weight among $U_k$. Furthermore we assume that $U_k = \{v_1, \ldots, v_k\}$ with associated weights $w_1 \le w_2 \le \ldots \le w_k$. Note that $w_1 = \wmin, w_k = \wmax$. We start by showing that cliques of size $k > \frac{2}{3-\beta}$ dominantly form within the heavy core.

\begin{lemma}\label{lem:sqrtnboundsufficientlylargek}
  Let $k > \frac{2}{3-\beta}$, $\beta \in (2,3)$. Then for any $p \in (0,1)$, there is an $\varepsilon > 0$ such that \begin{align*}
    \Pr{\wmin \in M_\varepsilon(\sqrt{n}) \mid U_k \text{ is clique }} \ge p.
  \end{align*} 
\end{lemma} \begin{proof}
  In the first part, we show that $ \Pr{\wmin < \varepsilon \sqrt{n} \mid U_k \text{ is clique }} \le 1-p$ for some $\varepsilon > 0$. To this end, recall that $U_k = \{ v_1, v_2, \ldots, v_k \}$ is a set of $k$ random vertices with weights $w_1, \ldots, w_k$ whereby we assume without loss of generality that $w_1 \le w_2 \le \cdots \le w_k$. We have \begin{align*}
    \Pr{\wmin < \varepsilon w \mid U_k \text{ is clique }} &= \frac{\Pr{(\wmin < \varepsilon w) \cap (U_k \text{ is clique })}}{\Pr{U_k \text{ is clique }}}
  \end{align*} By \Cref{cor:lowerbound}, we get that there is a constant $c_1 > 0$ such that $\Pr{U_k \text{ is clique}} \ge c_1^k n^{\frac{k}{2}(1-\beta)}$. By \Cref{lem:starbound}, where we set $w_-=w_0$, we have \begin{align*}
    \begin{split}
    &\Pr{(\wmin < \varepsilon w) \cap (U_k \text{ is clique})} \\ 
    &  \hspace{1cm}\le \frac{(\beta - 1)kw_0^{-(1-\beta)}}{(3-\beta)k - 2} \left( \frac{\lambda (\beta - 1)w_0^{-(1-\beta)}}{\beta - 2} \right)^{k-1}n^{1-k} (\varepsilon w)^{k(3-\beta) - 2}\\
    &  \hspace{1cm} \le c_2^k n^{1-k} (\varepsilon w)^{k(3-\beta) - 2}
    \end{split}
  \end{align*} for some constant $c_2 > 0$ as $\frac{(\beta - 1)k}{(3-\beta)k - 2}$ is at most a constant for all (potentially superconstant) $k > \frac{2}{3-\beta}$. Hence, \begin{align*}
    \Pr{\wmin < \varepsilon w \mid U_k \text{ is clique }} &\le \frac{c_2^k n^{1-k} (\varepsilon w)^{k(3-\beta) - 2}}{c_1^k n^{\frac{k}{2}(1-\beta)}} \\ &= \varepsilon^{k(3-\beta)-2}\left(\frac{c_2}{c_1}\right)^k \frac{n^{1-k}w^{k(3-\beta)-2}}{n^{\frac{k}{2}(1-\beta)}}.
  \end{align*} Setting $w = \sqrt{n}$ yields \begin{align*}
    \Pr{\wmin < \varepsilon \sqrt{n} \mid U_k \text{ is clique }} \le \varepsilon^{k(3-\beta)-2}\left(\frac{c_2}{c_1}\right)^k = \varepsilon^{-2}\left(\frac{c_2\varepsilon^{3-\beta}}{c_1}\right)^k.
  \end{align*} If $k$ is a constant, we can see that (since $k(3-\beta)-2 > 0$ by our assumption on $k$), choosing an $\varepsilon > 0$ small enough the above probability is at most $1 - p$ as desired. For $k = \omega(1)$, choosing any $\varepsilon < \left( c_1/c_2 \right)^{\frac{1}{3-\beta}}$ yields that the above probability is $o(1)$ and thus shows that our statement holds for sufficiently large $n$.

  It remains to show that also $ \Pr{\wmin > \sqrt{n}/\varepsilon \mid U_k \text{ is clique }} \le 1-p$ for some $\varepsilon > 0$. Here, it suffices to observe that for $\varepsilon < \lambda$ implies that a clique is formed if $\wmin > \sqrt{n}/\varepsilon$. Thus, if $\varepsilon < \lambda$, \begin{align*}
    \Pr{\wmin > \sqrt{n}/\varepsilon \mid U_k \text{ is clique }} &\le \frac{\Pr{\wmin > \sqrt{n}/\varepsilon \cap U_k \text{ is clique }}}{\Pr{U_k \text{ is clique}}} \\
    &\le \frac{\Pr{\wmin \ge \sqrt{n}/\varepsilon}}{\Pr{\wmin \ge \sqrt{n}/\lambda}}\\
    &= (\lambda / \varepsilon)^{k(1-\beta)},
  \end{align*} which approaches $0$ as $\varepsilon \rightarrow 0$.
\end{proof}

The next lemma proves the claimed bounds for $\wmax$ based on the previous result. Note that this also proves what we want for the entire regime $d = \omega(\log(n)), \beta \in (2,3)$ after we establish suitable bounds for $\wmin$ in \cref{sec:clique-weight-high-dim}.

\begin{lemma}
  Assume that for every $p \in (0,1)$ there is some $\delta > 0$ such that \begin{align*}
    \Pr{\wmin \in M_\delta(\sqrt{n}) \mid U_k \text{ is clique }} \ge p.
  \end{align*}
  Then, for every $p \in (0,1)$ there is also some $\varepsilon > 0$ such that \begin{align*}
    \Pr{\wmax \in M_\varepsilon( \sqrt{n} k^{\frac{1}{\beta-1}}) \mid U_k \text{ is clique }} \ge p.
  \end{align*}
\end{lemma} \begin{proof}
  Bound \begin{align*}
    &\Pr{\wmax > \sqrt{n} k^{\frac{1}{\beta-1}} / \varepsilon \mid U_k \text{ is clique }}\\ 
    & \hspace{1cm} \le \Pr{\wmax > \sqrt{n} k^{\frac{1}{\beta-1}} / \varepsilon \mid U_k \text{ is clique } \cap \wmin \ge \delta \sqrt{n}} \\
    &\hspace{1.5cm} + \Pr{\wmin \le \delta \sqrt{n} \mid U_k \text{ is clique}}.
  \end{align*} From the assumption in our statement, we know that the second term is upper bounded by some function $f(\delta)$ that approaches $0$ as $\delta \rightarrow 0$. We bound the first term as follows \begin{align*}
    &\Pr{\wmax > \sqrt{n} k^{\frac{1}{\beta-1}} / \varepsilon \mid U_k \text{ is clique } \cap \wmin \ge \delta \sqrt{n}} \\
    & \hspace{1cm} \le \sum_{i=2}^k \Pr{w_i > \sqrt{n} k^{\frac{1}{\beta-1}} / \varepsilon \mid U_k \text{ is clique } \cap \wmin \ge \delta \sqrt{n}} \\
    & \hspace{1cm} \le k \cdot \frac{\Pr{w_2 > \sqrt{n} k^{\frac{1}{\beta-1}} / \varepsilon \cap U_k  \text{ is clique } \mid \wmin \ge \delta \sqrt{n} }}{\Pr{U_k  \text{ is clique } \mid \wmin \ge \delta \sqrt{n} }}.
  \end{align*}
  Conditional on $\wmin \ge \delta \sqrt{n}$, the $U_k$ is a clique if $U_{k} \setminus \{v_2\}$ is a clique and if $w_2$ is at least $\sqrt{n} / (\lambda \delta)$ because then the connection probability of every pair of vertices is $1$. Hence,
  \begin{align*}
    &\Pr{\wmax > \sqrt{n} k^{\frac{1}{\beta-1}} / \varepsilon \mid U_k \text{ is clique } \cap \wmin \ge \delta \sqrt{n}} \\
      & \hspace{1cm} \le k \cdot \frac{\Pr{U_k \setminus \{v_2\} \text{ is clique } \cap w_2 > \sqrt{n} k^{\frac{1}{\beta-1}} / \varepsilon \mid \wmin \ge \delta \sqrt{n} }}{\Pr{U_k \setminus \{v_2\} \text{ is clique } \cap w_2 \ge \sqrt{n} / (\lambda \delta) \mid \wmin \ge \delta \sqrt{n} }}\\
    & \hspace{1cm} = k \cdot \frac{\Pr{w_2 > \sqrt{n} k^{\frac{1}{\beta-1}} / \varepsilon \mid \wmin \ge \delta \sqrt{n} }}{\Pr{w_2 \ge \sqrt{n} / (\lambda \delta) \mid \wmin \ge \delta \sqrt{n} }}
  \end{align*} where the last step is due to independence. By the definition of the Pareto distribution, this is at most
  \begin{align*}
    k \cdot \left( \frac{\sqrt{n} k^{\frac{1}{\beta-1}} / \varepsilon}{\sqrt{n}/(\delta \lambda)} \right)^{1-\beta} = (\lambda \delta / \varepsilon)^{1-\beta},
  \end{align*} so in total \begin{align*}
    \Pr{\wmax > \sqrt{n} k^{\frac{1}{\beta-1}} / \varepsilon \mid U_k \text{ is clique }} \le (\lambda \delta / \varepsilon)^{1-\beta} + f(\delta)
  \end{align*} and setting $\delta = \sqrt{\varepsilon}$ yields that this functiontends to zero as $\varepsilon \rightarrow 0$ and the proof of the first part is finished.

  For the second part, bound \begin{align*}
    &\Pr{\wmax < \varepsilon \sqrt{n} k^{\frac{1}{\beta-1}} \mid U_k \text{ is clique}} \le \\
    & \hspace*{1cm} \Pr{\wmax < \varepsilon \sqrt{n} k^{\frac{1}{\beta-1}} \mid U_k \text{ is clique} \cap \wmin \ge \delta \sqrt{n} } \\
    & \hspace*{1cm} + \Pr{ \wmin < \delta \sqrt{n} \mid U_k \text{ is clique} }.
  \end{align*} Again, by the assumption in our statement, we can make the second term arbitrarily small by adjusting $\delta$. For the first term, we bound \begin{align*}
    &\Pr{\wmax < \varepsilon \sqrt{n} k^{\frac{1}{\beta-1}} \mid U_k \text{ is clique} \cap \wmin \ge \delta \sqrt{n} } \\
    &\hspace{1cm} = \Pr{\wmax < \varepsilon \sqrt{n} k^{\frac{1}{\beta-1}} \mid \wmin \ge \delta \sqrt{n} } \\
    &\hspace{2cm} \cdot \frac{ \Pr{U_k \text{ is clique} \mid (\wmax < \varepsilon \sqrt{n} k^{\frac{1}{\beta-1}}) \cap (\wmin \ge \delta \sqrt{n}) } }{\Pr{U_k \text{ is clique} \mid \wmin \ge \delta \sqrt{n} }}.
  \end{align*} It follows by a coupling argument and the fact that the clique probability is monotonically increasing in the weights of the involved vertices that the fraction above is at most $1$. Furthermore, from the definition of the Pareto definition we have\begin{align*}
    \Pr{\wmax < \varepsilon \sqrt{n} k^{\frac{1}{\beta-1}} \mid \wmin \ge \delta \sqrt{n} } &\le \left(1 - \Omega \left( \frac{\varepsilon k^{\frac{1}{\beta-1}}}{\delta}  \right)^{1-\beta}  \right)^{k-1}\\
    & = \exp\left( - \Omega(\varepsilon / \delta)^{1-\beta}  \right).
  \end{align*} Hence in total, \begin{align*}
    \Pr{\wmax < \varepsilon \sqrt{n} k^{\frac{1}{\beta-1}} \mid U_k \text{ is clique}} \le \exp\left( - \Omega(\varepsilon / \delta)^{1-\beta}  \right) + f(\delta)
  \end{align*} where $f$ is a function that tends to $0$ as $\delta \rightarrow 0$. Setting $\delta = \sqrt{\varepsilon}$ yields that this holds for the entire right hand side and finishes the proof. 
\end{proof}

We turn to the regime $d = o(\log(n))$ and $k \le \frac{2}{3-\beta}$ or $\beta > 3$. We start with the following lemma which tells us that here at least one vertex of small weight, i.e., weight in the order of $\exp(\mathcal{O}(1)d) = n^{o(1)}$, is involved in a clique. We afterwards extend this statement to the other vertices involved in a clique.

\begin{lemma}\label{lem:minweightbound} 
  Let $d = o(\log(n))$ and $U_k$ be a set of $k$ random vertices. Let $w_{\min}$ be the minimum weight among $U_k$. If $\beta > 3$ or $k < \frac{2}{3-\beta}$, there is a constant $c > 0$ (independent of $k$) such that for all $p \in (0,1)$ there is an $\varepsilon > 0$ such that \begin{align*} 
    \Pr{w_{\min} \le e^{cd} / \varepsilon \mid U_k \text{ is clique }} \ge p
  \end{align*} 
\end{lemma} \begin{proof}
  Similarly as in the proof of \cref{lem:sqrtnboundsufficientlylargek}, we use \Cref{lem:starbound} to obtain \begin{align*}
    \begin{split}
    &\Pr{(w_1 \ge w / \varepsilon)  \cap (U_k \text{ is clique })} \\
    & \hspace{1cm} \le \frac{(1-\beta)kw_0^{-(1-\beta)}}{(3-\beta)k - 2} \left( \frac{\lambda (\beta - 1)w_0^{-(1-\beta)}}{\beta - 2} \right)^{k-1}n^{1-k} (w/\varepsilon)^{k(3-\beta) - 2}\\
    & \hspace{1cm} \le c_2^k n^{1-k} (w/\varepsilon)^{k(3-\beta) - 2}
    \end{split}
  \end{align*} for some constant $c_2 > 0$.
  By \Cref{cor:lowerbound}, we get that there is a constant $c_1$ such that $\Pr{U_k \text{ is clique}} \ge (c_1 2^{-d})^k n^{1-k}$. Define $\alpha = k(3-\beta) - 2$ and note how this implies \begin{align*}
    \Pr{w_1 \ge w/\varepsilon \mid U_k \text{ is clique }} \le (w/\varepsilon)^{\alpha} \left( \frac{c_2}{c_1 2^{-d}} \right)^{k} = (w/\varepsilon)^{-2} \left( \frac{c_2 (w/\varepsilon)^{3 - \beta}}{c_1 2^{-d}} \right)^{k}.
  \end{align*} Note that due to our assumptions on $\beta$ and $k$, we have $\alpha < 0$. If $\beta \in (2,3)$, we only have to consider the case $k < \frac{2}{3-\beta} = \text{const}$. Hence $\Pr{w_1 \ge w/\varepsilon \mid U_k \text{ is clique }} = c_3 (w/\varepsilon)^{\alpha} 2^{dk}$ for some constant $c_3 > 0$ and setting $w \ge 2^{ \frac{dk}{-\alpha}}$ yields that the above probability is at most $c_3/\varepsilon^\alpha$ as desired (recall that $\alpha < 0$). Note that $c_3$ depends on $k$, however, since we only consider a constant number of different values of $k$ (namely all integers between $3$ and $\frac{2}{3-\beta}$), we may as well choose it independent of $k$ by taking the maximum over all these $k$.

  If $\beta > 3$, let us choose $w =  2^{\max \left\{ 1, \frac{1}{\beta - 3} \right\}d }$. Then, as $w \ge 1$, for any $\varepsilon < \left(c_2/c_1\right)^{\frac{1}{3-\beta}}$, we have $\Pr{w_1 \ge \varepsilon w \mid U_k \text{ is clique }} \le \varepsilon^{-2} c_4^k$ for some constant $c_4 < 1$ and the proof is finished.
\end{proof}

We now proceed by bounding the maximum weight associated to a clique. To this end, we relate the probability that $U_k$ is a clique (assuming $\wmin$ is small) to the probability that $U_{k-1}$ is a clique in the following two lemmas.

\begin{lemma}\label{lem:uppermaxweightbound}
  If $d = o(\log(n))$ and $\beta > 3$, there are constants $a, t_0 > 0$ such that for all $t \ge t_0, w \ge w_0$ and all $k \ge 3$, \begin{align*}
    \Pr{ U_k \text{ is clique} \cap w_\text{max} \ge t \mid w_{\min} \le w} \le \Pr{ U_{k-1} \text{ is clique} \mid w_{\min} \le w } \cdot \frac{a w k t^{2 - \beta}}{n}.
  \end{align*} 
\end{lemma} \begin{proof}
  Note that if we condition on any $w_{\min}$, all vertices in $U_k\setminus \{v_{\min}\}$ are distributed as independent Pareto random variables with parameters $\beta, w_{\min}$. By a union bound, \begin{align*}
    \Pr{ U_k \text{ is clique} \cap w_\text{max} \ge t \mid w_{\min}} &\le \sum_{v \in U_k \setminus \{v_{\min}\}} \Pr{ U_k \text{ is clique} \cap w_v \ge t \mid w_{\min}}.
  \end{align*} Now for any $v \in U_k \setminus \{v_{\min}\}$, \begin{align*}
    \begin{split}
      &\Pr{ U_k \text{ is clique} \cap w_v \ge t \mid w_{\min}}\\ 
      & \hspace{.5cm} = \Pr{ U_k \text{ is clique} \cap w_v \ge t \mid U_k \setminus \{v\} \text{ is clique}, w_{\min}} \Pr{U_k \setminus \{v\} \text{ is clique} \mid w_{\min}} \\
      & \hspace{.5cm} = \Pr{ U_k \text{ is clique} \cap w_v \ge t \mid U_k \setminus \{v\} \text{ is clique}, w_{\min}} \Pr{U_{k-1} \text{ is clique} \mid w_{\min}}.
    \end{split}
  \end{align*} Hence, it remains to bound the first factor above. To this end, we consider the necessary event that $w_v$ is adjacent to $v_{\min}$ and obtain \begin{align*}
    \Pr{ U_k \text{ is clique} \cap w_v \ge t \mid U_k \setminus \{v\} \text{ is clique}, w_{\min}} &\le \Pr{ v \sim v_{\min} \cap w_v \ge t \mid w_{\min}}\\
    &\le \frac{\lambda w_{\min}}{n} \int_{t}^\infty c w^{1-\beta} \d w\\
    &= \frac{a w_{\min} t^{2-\beta}}{n}
  \end{align*} where $c, a $ are constants. Summing over all $v$ yields the desired statement.
\end{proof}

\begin{lemma}\label{lem:lowermaxweightbound}
  Let $d = o(\log(n))$. Then there are constants $a, c> 0$ such that for any $w \ge w_0$, 
  \begin{align*}
    \Pr{U_k \text{ is clique} \mid w_{\min} \le w } \ge \Pr{U_{k-1} \text{ is clique} \mid w_{\min} \le w } \cdot \frac{a w^{1-\beta} e^{-cd}}{n}.
  \end{align*}
\end{lemma} \begin{proof}
  Fix any $v \in U_k \setminus \{v_{\min}\}$ and observe \begin{align*}
    \begin{split}
      \Pr{U_k \text{ is clique} \mid w_{\min} \le w } 
      & = 
      \Pr{U_k \text{ is clique} \mid U_{k} \setminus \{v\} \text{ is clique} \cap w_{\min} \le w } \\
      & \hspace{1cm} \cdot \Pr{U_{k} \setminus \{v\} \text{ is clique }\mid w_{\min} \le w},
    \end{split}
  \end{align*} 
  so it remains to find a lower bound for the first factor. We consider the event that $w_k$ is sufficiently large such that it is sufficiently likely that $v_k$ is placed close enough to $v_\text{min}$ to be connected to all the other vertices by the triangle inequality. We note that, if $v$ is placed within distance 
  \begin{align*}
    \phi \coloneqq \left(\frac{w_v\wmin}{\tau n} \right)^{1/d} - \left(\frac{\wmin^2}{\tau n} \right)^{1/d} = 
    \left(\frac{ w_\text{min}}{\tau n}\right)^{1/d} \left( w_v^{1/d} - w_\text{min}^{1/d} \right)  
  \end{align*}
  of $v_{\min}$ then it must also be adjacent to all the other vertices. Assuming that $w_v \ge \alpha w_\text{min}$ for some $\alpha > 0$, this event occurs with probability 
  \begin{align*}
    \phi^d = \frac{\lambda w_\text{min}}{n} \left( w_k^{1/d} - w_\text{min}^{1/d} \right)^d \ge \frac{\lambda w_0^2}{n} \left( \alpha^{1/d} - 1 \right)^d \ge \frac{\lambda w_0^2}{n} \left( \frac{\ln(\alpha)}{d} \right)^{d}.
  \end{align*} where in the last step we used the inequality $e^x \ge 1 + x$. Choosing $\alpha = e^d$ yields that the above probability is at least $\lambda w_0^2/n$. Furthermore, the probability that $w_v \ge \alpha w_\text{min} = e^d\wmin$ is 
  \begin{align*}
    \left(\frac{\alpha w_\text{min}}{w_0} \right)^{1-\beta} \ge \left( \frac{we^d}{w_0} \right)^{1-\beta}. 
  \end{align*}
  In total, \begin{align*}
    \Pr{U_k \text{ is clique} \mid U_{k} \setminus \{v\} \text{ is clique}\cap w_{\min} \le w } \ge \frac{\lambda w_0^2}{n} \left( \frac{we^d}{w_0} \right)^{1-\beta} = \frac{a w^{1-\beta} e^{-cd}}{n}
  \end{align*} for some constants $a, c > 0$.
\end{proof}

Using the previous two lemmas, we can bound the maximum weight associated to a clique as follows.
\begin{lemma}\label{lem:lowweightdomination}
  If $d = o(\log(n))$ and $\beta > 3$, there is a constant $c > 0$ such that for any $p \in (0,1)$, there is an $\varepsilon > 0$ such that \begin{align*}
    \Pr{w_{\max} \le e^{c d } k^{\frac{1}{\beta - 2}} / \varepsilon \mid U_k \text{ is clique } } \ge p.
  \end{align*}
\end{lemma} \begin{proof}
  Observe that for any $\delta, \alpha$, \begin{align*}
    \begin{split}
    &\Pr{w_{\max} \ge e^{c d } k^{\frac{1}{\beta - 2}} / \varepsilon \mid U_k \text{ is clique } } \\ & \hspace{1cm} \le \Pr{w_{\max} \ge e^{c d} k^{\frac{1}{\beta - 2}} / \varepsilon \mid U_k \text{ is clique} \cap w_{\min} \le e^{\alpha d} / \delta } \\ 
    & \hspace{2cm} + \Pr{w_{\min} \ge e^{\alpha d} / \delta \mid U_k \text{ is clique }}
    \end{split}
  \end{align*} 
  Now by \cref{lem:minweightbound}, if we choose a suitable constant $\alpha$, we can make the second term arbitrarily small by choosing $\delta$ small enough. To bound the first term, observe further that \begin{align*}
    &\Pr{w_{\max} \ge e^{cd} k^{\frac{1}{\beta - 2}} / \varepsilon \mid U_k \text{ is clique} \cap w_{\min} \le  e^{\alpha d} / \delta } \\
    & \hspace{1cm} = \frac{\Pr{U_k \text{ is clique } \cap w_{\max} \ge e^{c d} k^{\frac{1}{\beta - 2}} / \varepsilon \mid w_{\min} \le e^{\alpha d} / \delta }}{\Pr{U_k \text{ is clique } \mid w_{\min} \le e^{\alpha d} / \delta}} 
  \end{align*} By \cref{lem:uppermaxweightbound}, there is a constant $a_1$ such that the numerator is bounded from above by \begin{align*}
    \frac{a_1}{n} \left( e^{\alpha d} / \delta \right) k \left( e^{cd} k^{\frac{1}{\beta - 2}}  / \varepsilon \right)^{2 - \beta} = \frac{a_1}{n} \frac{e^{\alpha d + (2-\beta) c d } }{\delta \varepsilon^{2 - \beta}}.
  \end{align*} Similarly, by \cref{lem:lowermaxweightbound}, the denominator is bounded from below by \begin{align*}
    \frac{a_2}{n} e^{-c'd} \left( e^{\alpha d} / \delta \right)^{1 - \beta}
  \end{align*} for some constants $a_2, c'$. Combining this, the fraction is bounded from above by \begin{align*}
    \frac{a_1}{a_2} \exp \left( \alpha d + c' d + (2 - \beta) cd \right) \delta^{-\beta} \varepsilon^{\beta - 2}.
  \end{align*} In total, there are constants $s, a_3, a_4 > 0$ such that \begin{align*}
    \Pr{w_{\max} \ge e^{c d } k^{\frac{1}{\beta - 2}} / \varepsilon \mid U_k \text{ is clique } } \le a_3 \delta^{-\beta} \varepsilon^{\beta - 2} + a_4 \delta^{s}.
  \end{align*}
  Setting $\delta = \varepsilon^{\gamma}$ for any $ 0 < \gamma < \frac{\beta - 2}{\beta}$ then yields that this term tends to $0$ as $\varepsilon$ tends to $0$ and implies the desired statement.
\end{proof}

Finally, we show that the minimum weight associated to a clique is at least of order $k^{\frac{1}{\beta - 2}}$. This essentially follows from the fact that this holds for stars centered at $v_{\min}$; we show additionally that conditioning on a clique only induces a bias towards even larger weights. This implies that our bounds are tight up to a factor $e^{\Theta(1)d} = n^{o(1)}$. 

\begin{lemma}
  Let $d = o(\log(n))$. Assume further that $k < \frac{2}{3-\beta}$ or that $\beta > 3$. Then for every $p \in (0, 1)$ there is an $\varepsilon > 0$ such that \begin{align*}
    \Pr{\wmax \in M_\varepsilon^{(-)}(k^{\frac{1}{\beta - 2}})} \ge p.
  \end{align*}
\end{lemma}\begin{proof}
  In the following -- again -- the event ``$U_k \text{ is star}$'' denotes the event that $U_k$ is a star centered at $v_{\min}$.
  Bound \begin{align*}
    &\Pr{\wmax < \varepsilon k^{\frac{1}{\beta -2}} \mid U_k \text{ is clique}} \\
    & \hspace{.1cm} = \frac{\Pr{\wmax < \varepsilon k^{\frac{1}{\beta - 2}} \mid U_k \text{ is star} } \Pr{ U_k \text{ is clique} \mid U_k \text{ is star} \cap (\wmax < \varepsilon k^{\frac{1}{\beta - 2}}) }}{ \Pr{ U_k \text{ is clique} \mid U_k \text{ is star}  }}.
  \end{align*} It follows by a coupling argument and the fact that the clique probability is monotonically incresing in $w_2, \ldots, w_k$ that \begin{align*}
    \frac{\Pr{ U_k \text{ is clique} \mid U_k \text{ is star} \cap (\wmax < \varepsilon k^{\frac{1}{\beta - 2}}) }}{ \Pr{ U_k \text{ is clique} \mid U_k \text{ is star}  }} \le 1.
  \end{align*} Moreover, by the Pareto distribution and the fact that the edges in a star are all independent, we get that there is a constant $c > 0$ such that \begin{align*}
    \Pr{\wmax < w \mid U_k \text{ is star} \cap (\wmin = x) } &= \left( \frac{ \int_{x}^{w} \frac{cxy}{n}y^{-\beta} \d y }{ \int_{x}^{\infty} \frac{cxy}{n}y^{-\beta} \d y } \right)^{k-1} \\ 
    &= \left( 1 - \Omega(w/x)^{2-\beta} \right)^{k-1}.
  \end{align*} Thus, \begin{align*}
    \Pr{\wmax < \varepsilon k^{\frac{1}{\beta - 2}} \mid U_k \text{ is star} \cap (\wmin = x) } &\le  \left( 1 - \Omega\left(\frac{\varepsilon k^{\frac{1}{\beta - 2}}}{x}\right)^{2-\beta} \right)^{k-1}\\
    &= \exp(-\Omega( (x/\varepsilon)^{\beta-2}) ).
  \end{align*} As $\wmin \ge w_0 $ deterministically, we get \begin{align*}
    \Pr{\wmax < \varepsilon k^{\frac{1}{\beta - 2}} \mid U_k \text{ is star} } &\le \exp(- \Omega(1/\varepsilon)^{\beta - 2}) ,
  \end{align*} which approaches $0$ as $\varepsilon \rightarrow 0$ as desired.
\end{proof}

\subsection{Bouding the Variance of Typical Cliques}\label{sec:variance}

After bounding the expected number of cliques and characterizing the clique probability by vertex weight, we use the gained insights for bounding the variance of cliques restricted to the dominant regimes identified previously. This, together with the results from \cref{tab:typical-clique-min} allows us further to derive concentration bounds on the total number of cliques later in \cref{sec:concentration-bounds}. 

In the following, given a set of admissible weights $M$, we denote by $K_k(M)$ the number of $k$-cliques with vertex weights that are in $M$ and we derive bounds on the variance of $K_k(M_\varepsilon(w))$ for suitable $w$ (recall the definition of $M_\varepsilon$ from \cref{def:M}). 

\begin{lemma}\label{lem:generalvarbound}
  For all  $k \ge 3$ and for any set of weights $M$, \begin{align*}
    \Var{K_k(M)} \le \Expected{K_k(M)} \sum_{\ell = 1}^k \binom{k}{\ell}\Expected{K_{k-\ell}(M)}
  \end{align*}
\end{lemma}
\begin{proof}
  For a set of vertices $A$, we write that ``$A$ is clique in $M$'' if $A$'s vertex weights are in $M$. Using this notation, we have\begin{align*}
    \Expected{K_k(M)^2} = \left( \sum_{A} \Pr{M_1\text{ is clique in }M} \right)
  \end{align*} where the sum goes over all $k$-element subsets of vertices. Accordingly,
  \begin{align*}
    &\Expected{K_k(M)^2} \le \\
        &\hspace{.5cm} \left(\sum_{A_1} \Pr{A_1\text{ is clique in }M} \right)^2 \\
        & \hspace{.5cm} + \sum_{A_1} \sum_{\substack{A_2 \\ |A_1 \cap A_2| \ge 1}} \Pr{(A_1\text{ is clique in }M) \cap (A_2\text{ is clique in }M) } \\
        &\le \Expected{K_k(M)}^2 \\
        & \hspace{.5cm} +  \sum_{A_1} \sum_{\substack{A_2 \\ |A_1 \cap A_2| \ge 1}}  \Pr{A_1\text{ is clique in }M} \Pr{A_2 \setminus A_1 \text{ is clique in }M }.\\
        &= \Expected{K_k(M)}^2 \\
        & \hspace{.5cm} + \binom{n}{k} \Pr{U_k \text{ is clique in }M} \sum_{\ell = 1}^k \binom{k}{\ell} \binom{n}{k-\ell} \Pr{U_\ell \text{ is clique in }M }\\
        &= \Expected{K_k(M)}^2 + \Expected{K_k(M)} \sum_{\ell = 1}^k \binom{k}{\ell}\Expected{K_{k-\ell}(M)}. 
  \end{align*} By $\Var{X} = \Expected{X^2} - \Expected{X}^2$, the proof is finished.
\end{proof}

We further need the following auxiliary lemma.
\begin{lemma}\label{lem:expectationbound}
  For $\beta \in (2,3)$ and all $\ell \ge 1, \ell = o(n)$.
  \begin{align*}
    \Expected{K_\ell(M_\varepsilon^{(-)}(\sqrt{n}))} \le \mathcal{O} \left( \frac{\ell \varepsilon^{1-\beta} }{n^{(3-\beta)/2}} \right) \Expected{K_{\ell + 1}(M_\varepsilon^{(-)}(\sqrt{n}))} 
  \end{align*}
\end{lemma}\begin{proof}
  Observe that \begin{align*}
    \Expected{K_{\ell + 1}(M_\varepsilon^{(-)}(\sqrt{n}))} &= \binom{n}{\ell+1} \Pr{U_{\ell + 1} \text{ is clique in }M_\varepsilon^{(-)}(\sqrt{n})}\\
    &\ge \frac{n-\ell}{\ell+1} \binom{n}{\ell} \Pr{U_{\ell} \text{ is clique in }M_\varepsilon^{(-)}(\sqrt{n})} \Omega\left( (\sqrt{n} / \varepsilon) ^{1-\beta} \right)\\
    &= \Omega \left( \frac{  n^{(3-\beta)/2}}{ \varepsilon^{1 - \beta} \ell} \right) \Expected{K_{\ell}(M_\varepsilon^{(-)}(\sqrt{n}))}. 
  \end{align*} Here, in the penultimate step, we used the fact that $U_{\ell + 1}$ is a clique if $U_\ell$ is a clique and the remaining $\ell+1$-th vertex has a weight of $\Omega\left( (\sqrt{n} / \varepsilon) \right)$. 
\end{proof}

Combining these statements yields that the number of cliques in the dominant regimes established earlier is self-averaging. That is, restricting the number of cliques to either vertices of very low or of very large weight yields a random variable that concentrates well around its expectation

\begin{lemma}\label{lem:concentration-bound-low-weight}
  For all $k \ge 3$ and for any $w > 1$, \begin{align*}
    \Var{K_k(M_\varepsilon^{(+)}(w))} = \mathcal{O}(w/\varepsilon )^{2k}\Expected{K_k(M_\varepsilon^{(+)}(w))}.
  \end{align*} Moreover, \begin{align*}
    \Var{K_k(M_\varepsilon^{(-)}(\sqrt{n}))} \le  \sum_{\ell = 1}^k \mathcal{O}\left( \frac{\varepsilon^{1 - \beta} k^2}{n^{(3-\beta)/2}} \right)^{\ell} \Expected{K_k(M_\varepsilon^{(-)}(\sqrt{n}))}^2.
  \end{align*}
\end{lemma}
\begin{proof}
As the maximum weight in $M_\varepsilon^{(+)}(w)$ is $w/\varepsilon$, the probability that $k-\ell$ vertices form a clique is $\mathcal{O}(w^2/(\varepsilon^2 n))^{k-\ell} $. Thus, the bound from \cref{lem:generalvarbound} implies \begin{align*}
  \Var{K_k(M_\varepsilon^{(+)}(w))} &\le \Expected{K_k(M_\varepsilon^{(+)}(w))} \sum_{\ell = 1}^k \binom{k}{\ell} \binom{n}{k-\ell} \mathcal{O}((w^2/(\varepsilon^2 n))^{k-\ell} \\
  &\le \Expected{K_k(M_\varepsilon^{(+)}(w))} \mathcal{O}(w/\varepsilon )^{2k} \sum_{\ell=1}^k \binom{k}{\ell}\\
  &\le  \Expected{K_k(M_\varepsilon^{(+)}(w))} \mathcal{O}(w/\varepsilon )^{2k}
\end{align*} as $\sum_{\ell=1}^k \binom{k}{\ell} \le 2^k$.

For the second part of the statement, note that by \cref{lem:expectationbound}, we have \begin{align*}
  \Expected{K_{k -\ell}(M_\varepsilon^{(+)}(w))} \le \mathcal{O}\left( \frac{\varepsilon^{1 - \beta} k}{n^{(3-\beta)/2}} \right)^{\ell}  \Expected{K_{k}(M_\varepsilon^{(+)}(w))}.
\end{align*} Accordingly, by \cref{lem:generalvarbound}, \begin{align*}
  \Var{K_k(M_\varepsilon^{(-)}(\sqrt{n}))} &\le \Expected{K_k(M_\varepsilon^{(-)}(\sqrt{n}))}^2 \sum_{\ell = 1}^k \mathcal{O}\left( \frac{\varepsilon^{1 - \beta} k^2}{n^{(3-\beta)/2}} \right)^{\ell}
\end{align*}
\end{proof}

\subsection{Bounds on the Expected Number of Cliques}\label{sec:expectationbounds}

We proceed by turning the results from the previous sections into bounds on the expected number of cliques. 
Recall that \begin{align}\label{eq:expectation}
    \Expected{K_k} = \binom{n}{k}\cliqueprobk = n^k \Theta(k)^{-k}  \cliqueprobk.
\end{align} Based on this relation and our bounds on $\cliqueprobk$, we prove the following.
\begin{theorem}\label{thm:expectationupperbound}
  Let $G = \GIRG(n, \beta, w_0, d)$ be a standard GIRG with dimensionality
  $d = o(\log(n))$.  Then, we have
  \begin{align*}
    \Expected{K_k} = \begin{cases}
                       n \exp\left( -\Theta(1)dk \right) \Theta(k)^{-k} + o(1) & \text{if } k < \frac{2}{3-\beta} \text{ or } \beta \ge 3 \\
                       n^{\frac{k}{2}(3-\beta)} \Theta(k)^{-k} & \text{otherwise},
                     \end{cases}
  \end{align*}
  whereby the statement holds for all (potentially superconstant)
  $k \ge 3$ if $\beta \neq 3$, and for $k = o(\log(n)/d)$ if
  $\beta = 3$.
\end{theorem}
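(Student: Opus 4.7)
The proof naturally splits along the case distinction in the theorem. In the first case ($k > \frac{2}{3-\beta}$ and $2 < \beta < 3$), the upper bound of \cref{cor:upperbound} and the lower bound of \cref{cor:lowerbound} already coincide, giving $\cliqueprobk = \Theta(1)^k n^{(k/2)(1-\beta)}$. Multiplying by $\binom{n}{k} = n^k \Theta(k)^{-k}$ via \cref{lem:binomapprox} then yields $\Expected{K_k} = n^{(k/2)(3-\beta)} \Theta(k)^{-k}$ directly, so this case requires no additional work.

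The second case ($k < \frac{2}{3-\beta}$ or $\beta \ge 3$) is the substantive one: here the two corollaries leave a $2^{-dk}$ gap between upper and lower bounds, and closing it requires invoking \cref{thm:sharpcliquebounds}. My plan is to write $\cliqueprobk = \Pr{\Event{\text{star}}} \cdot \Pr{U_k \text{ clique} \mid \Event{\text{star}}}$ and split the analysis by the value of $w_1$. When $w_1 < \sqrt{n/\lambda}$, \cref{lem:starbound} bounds the star probability contribution by $\Theta(1)^k n^{1-k}$. When $w_1 \ge \sqrt{n/\lambda}$, a clique is automatic and the contribution to $\Expected{K_k}$ is at most $\binom{n}{k} \Pr{w_1 \ge \sqrt{n/\lambda}} = n^{(k/2)(3-\beta)} \Theta(k)^{-k}$, which is either dominated by the main term (when $\beta < 3$ and $d = o(\log n)$, since then $n \exp(-\Theta(1)dk) = n^{1-o(1)}$) or is $o(1)$ (when $\beta > 3$, since the exponent is negative), contributing exactly the $+o(1)$ error in the theorem statement.

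Within the range $w_1 < \sqrt{n/\lambda}$, I further split by whether $w_k \le c^d w_1$ for a fixed constant $c > 1$. In the bounded-ratio regime, the hypothesis of \cref{thm:sharpcliquebounds} is met (the condition $c^2(w_1^2/(\tau n))^{1/d} \le 1/4$ holds throughout the range for $d = o(\log n)$), and the theorem gives $\Pr{U_k \text{ clique} \mid \Event{\text{star}}^c} \le c^{d(k-2)}(1/2)^{d(k-1)} k^d$. Rewriting this as $\exp\!\bigl(d\bigl((k-2)\log c - (k-1)\log 2 + \log k\bigr)\bigr)$ and choosing $c$ so that $(k-1)\log 2 - (k-2)\log c - \log k$ grows linearly in $k$ (which is possible for any $k \ge 3$ since $\log k \le k/2$ and the gap is a positive constant times $k$) yields $\exp(-\Theta(1)dk)$. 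The unbounded-ratio regime $w_k > c^d w_1$ is handled by integrating against the Pareto density and showing its contribution is absorbed into the main term. A matching lower bound comes from the lower bound of \cref{thm:sharpcliquebounds} combined with \cref{lem:starupperbound}, giving $\cliqueprobk = \Theta(1)^k \exp(-\Theta(1)dk) n^{1-k}$. Multiplying by $\binom{n}{k} = n^k \Theta(k)^{-k}$ produces $n \exp(-\Theta(1)dk) \Theta(k)^{-k}$, and adding the deferred automatic-clique contribution completes the bound.

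The main obstacle is twofold. First, the $k^d$ factor from \cref{thm:sharpcliquebounds} must be folded cleanly into $\exp(-\Theta(1)dk)$; this works for $k \ge 3$ because $(k-1)\log 2 - \log k$ grows linearly in $k$, but the choice of the constant $c > 1$ in the splitting must be made consistently so that the constants in $\Theta(1)^k$ and in the exponent do not spoil each other as $k$ increases. Second, the boundary $\beta = 3$ degenerates the denominator $(3-\beta)k - 2$ in \cref{lem:starbound}, so the star-event integral picks up a logarithmic-type correction; the restriction $k = o(\log(n)/d)$ for $\beta = 3$ is exactly what is needed to keep $n\exp(-\Theta(1)dk)\Theta(k)^{-k}$ strictly dominant over the constant-order residue arising from this degeneration, and I would verify this explicitly by redoing the integration in \cref{lem:starbound} for $\beta = 3$.
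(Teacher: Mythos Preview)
Your overall plan matches the paper's, but two steps in the hard case do not go through as written.

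First, the hypothesis of \cref{thm:sharpcliquebounds} is \emph{not} satisfied throughout $w_1 < \sqrt{n/\lambda}$. Recall $\tau = 2^d/\lambda$, so at $w_1 = \sqrt{n/\lambda}$ one has $(w_1^2/(\tau n))^{1/d} = (2^{-d})^{1/d} = 1/2$, and the requirement $c^2 \cdot \tfrac{1}{2} \le \tfrac{1}{4}$ forces $c \le 1/\sqrt{2}$, contradicting $c > 1$. The condition only holds once $w_1 \le n^{1/2-\varepsilon}$ for some fixed $\varepsilon > 0$, because $d = o(\log n)$ then makes $(n^{-2\varepsilon})^{1/d} = o(1)$. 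The paper accordingly splits at $\ell = n^{1/2-\varepsilon}$, not at $\sqrt{n/\lambda}$; the intermediate range $[n^{1/2-\varepsilon}, \sqrt{n/\lambda}]$ is handled separately via \cref{lem:starbound} and produces the $+\,o(1)$ (or is dominated) after a short calculation. Your $\beta = 3$ discussion is also slightly off: no logarithmic correction arises (the exponent $k(3-\beta)-3 = -3$ integrates to a polynomial), and the restriction $k = o(\log(n)/d)$ is used precisely to make the contribution $n^{1-\delta}\Theta(k)^{-k}$ from this intermediate range dominated by $n\exp(-\Theta(1)dk)\Theta(k)^{-k}$.

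Second, and more seriously, your treatment of the unbounded-ratio regime $w_k > c^d w_1$ does not give the needed $\exp(-\Theta(1)dk)$ for superconstant $k$. A union bound over $i$ combined with ``integrating against the Pareto density'' shows that, conditioned on $\Event{\text{star}}$, each $w_i$ exceeds $c^d w_1$ with probability at most $b^{-d}$ for some $b > 1$ (this conditional bound is itself nontrivial, since adjacency to $v_1$ biases weights upward, and the paper proves it as a separate claim). But this only yields a factor $(k-1)b^{-d}$, i.e.\ $\exp(-\Theta(1)d)$, not $\exp(-\Theta(1)dk)$. The missing idea is to split by the \emph{number} of vertices in $U_k$ whose weight exceeds $c^d w_1$: if at least $\maxvar \approx k/2$ vertices have bounded ratio, apply \cref{thm:sharpcliquebounds} to that sub-clique to get $\exp(-\Theta(1)d\maxvar) = \exp(-\Theta(1)dk)$; if fewer than $\maxvar$ do, then at least $k - \maxvar \approx k/2$ weights exceed $c^d w_1$, and a binomial tail bound (using the per-vertex probability $\le b^{-d}$) gives $\exp(-\Theta(1)dk)$ for that event. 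Without this two-sided split, your argument only establishes the theorem for constant $k$.
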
 \begin{proof}
    Observe that \cref{cor:upperbound} and \cref{cor:lowerbound} directly imply $\Expected{K_k} = n^{\frac{k}{2}(3-\beta)}\Theta(k)^{-k}$. for the case $2 < \beta < 3$ and $k > \frac{2}{3-\beta}$. For the other case, note that \cref{cor:lowerbound} shows that $\Expected{K_k} \ge 2^{-dk} n  \Theta(k)^{-k}$ and so it only remains to derive an upper bound, i.e., to show that $\Expected{K_k} \le n \exp(-\Omega(1)dk) \Theta(k)^{-k} + o(1)$.

    To this end, we define the set $W_{\ell}$ to be the set of all $k$-element subsets of vertices among which the minimum weight is at most $\ell$ and we denote by $K_k(W_\ell)$ and $K_k(\overline{W_\ell})$ the random variables denoting the number of $k$-cliques in $W_{\ell}$ and $\overline{W_{\ell}}$, respectively. Clearly, for all $\ell$ and any $\varepsilon > 0$, we have $\Expected{K_k}  = \Expected{K_k(W_\ell)} + \Expected{K_k(\overline{W_\ell})}$. We fix $\ell = n^{1/2 - \varepsilon}$ and start by deriving a bound for $\Expected{K_k(W_\ell)}$.

    Recall that we assume $v_1$ to be of minimal weight $w_1$ among $U_k$ and that we denote by $\Event{\text{star}}$ the event that $U_k$ is a star with center $v_1$. We let $\Event{\text{star}}^{\ell}$ be the event $\Event{\text{star}} \cap \left(U_k \in W_\ell\right)$ and observe that \begin{align*}
        \Expected{K_k(W_\ell)} \le n^{k}\Theta(k)^{-k} \Pr{U_k \text{ is clique} \mid \Event{\text{star}}^\ell} \Pr{\Event{\text{star}}^\ell} .
    \end{align*} We note that $\Pr{\Event{\text{star}}^\ell} \le \Pr{\Event{\text{star}}} = \Theta\left( 1 \right)^k n^{-(k-1)}$ as established in \cref{cor:upperbound}. It thus remains to show $\Pr{U_k \text{ is clique} \mid \Event{\text{star}}^\ell} \le \mathcal{O}(1)^k\exp\left(-\Omega(1)dk\right)$.
    
    To show this, we apply \cref{thm:sharpcliquebounds}. However, for this, we need to ensure that the weights $w_2, \ldots, w_k$ are at most $w_1c^d$ for some constant $c > 1$, and that the maximal connection threshold $c^2(\frac{w_1^2}{\tau n})^{1/d}$ is at most $1/4$. As $w_1 \le \ell = n^{1/2-\varepsilon}$ and $d = o(\log(n))$, the second condition is fulfilled for large enough $n$ and every $\varepsilon > 0$. Regarding the first condition, however, we observe that some of the vertices in $U_k$ might have a weight larger than $w_1c^d$. Yet, it is very unlikely that this occurs for many vertices of $U_k$. We split $U_k$ into the two parts $U_{k}^{(1)}$ and $\overline{U_k^{(1)}}$, such that $U_k^{(1)}$ is the set of vertices in $U_k$ with weight at most $w_1c^d$ and $\overline{U_k^{(1)}} = U_k \setminus U_k^{(1)}$. We set $\maxvar =  \max\{3, \lfloor k/2 \rfloor\}$ and bound \begin{align*}
        \begin{split}
        &\Pr{U_k \text{ is clique} \mid \Event{\text{star}}^\ell} \le \hspace{4cm} \\
        & \hspace{1cm} \Pr{U_k^{(1)} \text{ is clique} \mid \Event{\text{star}}^\ell \cap (|U_k^{(1)}| \ge \maxvar)} \Pr{|U_k^{(1)}| \ge \maxvar \mid \Event{\text{star}}^\ell} + \\
        & \hspace{1cm} \Pr{U_k^{(1)} \text{ is clique} \mid \Event{\text{star}}^\ell \cap (|U_k^{(1)}| < \maxvar)} \Pr{|U_k^{(1)}| <  \maxvar \mid \Event{\text{star}}^\ell}.
    \end{split}
    \end{align*} As probabilities are at most one, we may bound \begin{align}\label{ieq:bound}
        \begin{split}
            &\Pr{U_k \text{ is clique} \mid \Event{\text{star}}^\ell} \le \hspace{4cm} \\
            & \hspace{1cm} \Pr{U_k^{(1)} \text{ is clique} \mid \Event{\text{star}}^\ell \cap |U_k^{(1)}| \ge \maxvar} + \Pr{|U_k^{(1)}| < \maxvar \mid \Event{\text{star}}^\ell}.
        \end{split}
    \end{align} 
    
    By \cref{thm:sharpcliquebounds}, the former probability is $\exp(-\Theta(1)d\maxvar)$ as there is a constant $\xi > 0$ such that $\maxvar \ge \xi k$ for all $k \ge 3$. We thus proceed by bounding the latter term in the above expression.
    
    We observe that, conditioned on $\Event{\text{star}}^\ell$, the weights $w_2, \ldots, w_k$ are i.i.d. random variables and so, $|\overline{U_k^{(1)}}|$ is distributed according to the binomial distribution $\text{Bin}(k-1, p)$, where $p = \Pr{w_2 \ge w_1c^d \mid \Event{\text{star}}^\ell}$. 
    \begin{claim}\label{clm:prob}
    There is a constant $b > 1$ such that $p \le b^{-d + \mathcal{O}(1)}$. 
    \end{claim}
    We defer the proof of this claim and proceed by showing how it helps in our main proof. Due to the binomial nature of $|\overline{U_k^{(1)}}|$, we get \begin{align*}
        \Pr{|U_k^{(1)}| < \maxvar \mid \Event{\text{star}}^\ell} &= \Pr{|\overline{U_k^{(1)}}| > k - \maxvar \mid \Event{\text{star}}^\ell}\\
        &= \sum_{i = k - \maxvar + 1}^{k-1} \binom{k-1}{i} p^{i}(1-p)^{k-1-i} \\
        &\le \sum_{i = k - \maxvar + 1}^{k-1} \binom{k-1}{i} b^{-di + \mathcal{O}(1)i}\\
        &\le b^{-d(k - s + 1) + \mathcal{O}(1)k } \sum_{i = k - \maxvar + 1}^{k-1} \binom{k-1}{i}\\
        &\le b^{-d(k - s + 1) + \mathcal{O}(1)k} \\
    \end{align*} where the last step is due to the fact that $\sum_{i = k - \maxvar + 1}^{k-1} \binom{k-1}{i} \le 2^k$. Now, observe that there is a constant $\xi > 0$ such that $ k-\maxvar+1 \ge \xi k$ for all $k \ge 3$. Thus, the above expression is $\exp(-\Omega(1)dk)\mathcal{O}(1)^k$ and so $\Expected{K_k(W_\ell)} = n\exp(-\Omega(1)dk)\mathcal{O}(k)^{-k}$. It remains to prove \cref{clm:prob}.

    \begin{proof}[Proof of \cref{clm:prob}]
        Fix a vertex $u \in U_k \setminus \{v_1\}$ and observe that
        \begin{align*}
            p = \Pr{w_u \ge w_1c^d \mid (u \sim v_1) \cap (U_k \in W_\ell)}   
        \end{align*}
        because the weight of each $u \in U_k \setminus \{v_1\}$ is an i.i.d.\ random variable conditioned on $\Event{\text{star}}^\ell = \Event{\text{star}} \cap U_k \in W_\ell$ and is, in particular, independent of whether other vertices in $U_k$ are adjacent to $v_1$.
        Now, assume that $w_1 = x \le \ell = n^{1/2 - \varepsilon}$ and observe that \begin{align}\label{eq:cond}
            \Pr{w_u \ge w_1 c^d \mid (u \sim v_1) \cap (w_1 = x)} = \frac{\Pr{(w_u \ge w_1 c^d) \cap (u \sim v_1) \mid w_1 = x}}{\Pr{u \sim v_1 \mid w_1 = x}}.
        \end{align}
        Note that for all $x \le n^{1/2- \varepsilon}$, \begin{align*}
            \Pr{u \sim v_1 \mid w_1 = x} \ge \int_{x}^{\frac{n}{\lambda x}} \frac{\lambda x w_u}{n} \rho(w_u) \d w_u &= \frac{\lambda x}{n} \Theta(1) \left(x^{2-\beta} - \left(\frac{n}{\lambda x}\right)^{2-\beta} \right)\\
            &= \frac{\lambda x^{3-\beta}}{n} \Theta(1)\left(1 - \left(\frac{n}{\lambda x^2}\right)^{2-\beta} \right)\\
            &\ge \left(1 - o(1) \right) \frac{\lambda x^{3-\beta}}{n} \Theta(1),
        \end{align*} where the last step follows from the fact that $x \le n^{1/2-\varepsilon}$. On the other hand, we get \begin{align*}
            \Pr{(w_u \ge w_1 c^d) \cap (u \sim v_1) \mid w_1 = x} \le \int_{xc^d}^{\infty} \frac{\lambda x w_u}{n} \rho(w_u) \d w_u &= \frac{\lambda x}{n} \Theta(1) (xc^d)^{2-\beta}\\
            &= \frac{\lambda x^{3-\beta}}{n} \Theta(1) c^{(2-\beta)d}.
        \end{align*} Thus, we get from \cref{eq:cond} that \begin{align*}
            \Pr{w_u \ge w_1 c^d \mid (u \sim v_1) \cap (w_1 = x)} \le (1+o(1))\Theta(1) c^{(2-\beta)d} = b^{-d + \mathcal{O}(1)},
        \end{align*} if we choose $b = c^{\beta - 2}$, which is greater than $1$, since $c > 1$ and $\beta > 2$.
    \end{proof}
    
    It remains to bound $\Expected{K_k(\overline{W_\ell})}$. We observe that \begin{align*}
        \Expected{K_k(\overline{W_\ell})} \le n^k \Theta(k)^{-k}\!\! \left( \Pr{(U_k \text{ is clique})\! \cap\! (\ell \le w_1 \le \sqrt{n/\lambda})} + \Pr{w_1 \ge \sqrt{n/\lambda}} \right)
    \end{align*} As in the proof of \cref{cor:upperbound}, we have \begin{align*}
        \Pr{w_1 \ge \sqrt{n/\lambda}} \le \Theta(1)^k  n^{\frac{k}{2}(3-\beta) - k}
    \end{align*} and from \cref{lem:starbound}, we get \begin{align*}
        \Pr{(U_k \text{ is clique}) \cap (\ell \le w_1 \le \sqrt{n/\lambda})} &\le \Theta(1)^k n^{-(k-1)}n^{(1/2 - \varepsilon) (k(3-\beta) - 2)}\\  &\le \Theta(1)^k n^{-(k-1)-\delta}
    \end{align*} for some $\delta \ge (1/2 - \varepsilon) (3(\beta-3) + 2) > 0$ since we have $k \ge 3$ and $k(\beta-3) + 2 > 0$ for $k < \frac{2}{3-\beta}$ or $\beta \ge 3$. This implies \begin{align*}
        \Expected{K_k(\overline{W_\ell})} \le n^{1-\delta} \Theta(k)^{-k}.
    \end{align*} Now, we distinguish three cases. In the first case, $2 < \beta < 3$ and $k < \frac{2}{3-\beta}$. Here, as $k$ is at most a constant, $n^{-\delta}$ is asymptotically smaller than $\exp(-\Theta(1)dk)$ (recall that $d = o(\log(n))$), which finishes the proof for this case. If $\beta = 3$, recall that we only have to show the statement for $k = o(\log(n)/d)$ and under this assumption, again, $n^{-\delta}$ is asymptotically smaller than $\exp(-\Theta(1)dk)$. For the case $\beta > 3$, recall that $\delta$ is at least a constant since $ \delta \ge (1/2 - \varepsilon) (3(\beta-3) + 2)$. As $3(\beta-3) + 2$ is strictly greater than $2$, we may choose $\varepsilon > 0$ sufficiently small to get $\delta > 1$. Then, $\Expected{K_k(\overline{W_\ell})} = o(1)$ for all $k \ge 3$, which finishes the proof.
\end{proof}

Summarizing our results on $\Expected{K_k}$, we see that in the case $2 < \beta < 3$, there is a phase transition at $k = \frac{2}{3-\beta}$, which was previously observed (in the case of constant $d$) by Michielan and Stegehuis \cite{Michielan_Stegehuis_2022} and also by Bläsius et al. \cite{Blasius_Friedrich_Krohmer_2018} for HRGs. Regarding the influence of $d$, we find that the number of cliques in the regime $k < \frac{2}{3-\beta}$ or $\beta > 3$ decreases exponentially in $dk$.

\subsection{Bounds on the Clique Number}\label{sec:boundscliquenumber}

Now, we turn our bounds on the expected number of cliques into bounds
on the clique number. The results of this section constitute the first
two columns of \cref{tab:clique-number}.

\paragraph*{Upper Bounds}

We start with the upper bounds stated in the first row
of~\cref{tab:clique-number}.

\begin{theorem}
  \label{thm:clique-number-first-row}
  Let $G = \GIRG(n, \beta, w_0, d)$ be a standard GIRG with
  $\beta < 3$.  Then, $\omega(G) = \mathcal{O}(n^{(3-\beta)/2})$
  asymptotically almost surely.
\end{theorem}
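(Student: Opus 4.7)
The plan is to prove the bound by a standard first moment argument applied to $K_k$ for $k$ of the right order of magnitude. Specifically, I would fix a constant $C > 0$ (to be chosen large enough) and set $k_0 = \lceil C\, n^{(3-\beta)/2}\rceil$. Since $\beta < 3$ we have $(3-\beta)/2 > 0$, so $k_0 \to \infty$ and in particular $k_0 > 2/(3-\beta)$ for all sufficiently large $n$, while $k_0 = O(n^{1/2}) \le n/2$. These are exactly the conditions we need to invoke \Cref{cor:upperbound} and \Cref{lem:binomapprox}.

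Combining these two ingredients, \Cref{cor:upperbound} yields $q_{k_0} \le \Theta(1)^{k_0} n^{k_0(1-\beta)/2}$, and \Cref{lem:binomapprox} yields $\binom{n}{k_0} \le n^{k_0}(c\, k_0)^{-k_0}$ for some absolute constant $c > 0$. Multiplying, there is a constant $c' > 0$ (depending on $\beta, w_0, \lambda$ but not on $n$ or $C$) such that
\begin{equation*}
  \Expected{K_{k_0}} \;\le\; \left(\frac{c'\, n^{(3-\beta)/2}}{k_0}\right)^{k_0}.
\end{equation*}
Since $k_0 \ge C\, n^{(3-\beta)/2}$, the base of the exponential is at most $c'/C$. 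Choosing $C > c'$ makes this base strictly less than $1$, and because $k_0 \to \infty$ we conclude that $\Expected{K_{k_0}} = o(1)$.

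Finally, Markov's inequality gives $\Pr{\omega(G) \ge k_0} = \Pr{K_{k_0} \ge 1} \le \Expected{K_{k_0}} = o(1)$, hence $\omega(G) < k_0 = O(n^{(3-\beta)/2})$ asymptotically almost surely. I do not anticipate any serious obstacle here: the upper bound on $q_k$ from \Cref{cor:upperbound} is independent of $d$ and already has the right exponent in $n$, so the argument goes through for every dimension $d$ without further work. The only point requiring a small amount of care is verifying that $k_0$ lies in the regime $k > 2/(3-\beta)$ and $k \le n/2$, both of which hold automatically for large $n$ once $\beta < 3$.
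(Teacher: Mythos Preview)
Your proposal is correct and follows essentially the same route as the paper: both apply Markov's inequality to $K_k$ at the scale $k \asymp n^{(3-\beta)/2}$, using the upper bound $q_k \le \Theta(1)^k n^{k(1-\beta)/2}$ together with the binomial estimate to obtain $\Expected{K_k} \le (c' n^{(3-\beta)/2}/k)^k$. The only cosmetic difference is that the paper invokes \Cref{thm:expectationupperbound} (which already packages \Cref{cor:upperbound} and \Cref{lem:binomapprox}) and plugs in $k = n^{(3-\beta)/2}$ directly, while you unpack those two ingredients and are slightly more explicit about choosing the multiplicative constant $C$ large enough to force the base below $1$; this extra care is harmless and arguably cleaner.
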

\begin{proof}
  We may upper bound the clique number with Markov's inequality, which
  tells us that
  \begin{align*}
    \Pr{\omega(G) \ge k} = \Pr{K_k \ge 1} \le \Expected{K_k}.
  \end{align*} 
  The goal now is to choose $k$ such that
  $\Expected{K_k} \le n^{-\epsilon}$ for some $\epsilon > 0$, as then
  there is no clique larger than $k$ a.a.s.  For $\beta < 3$, which is
  a prerequisite of this theorem, we have
  $\Expected{K_k} \le n^{\frac{k}{2}(3 - \beta)} (c_1k)^{-k}$ for some
  constant $c_1 > 1$ and large enough $k$, according to
  Theorem~\ref{thm:expectationupperbound}.  If we set
  $k = n^{\frac{1}{2}(3 - \beta)}$, we get
  \begin{align*}
    \Expected{K_k} &\le n^{\frac{1}{2}(3 - \beta) n^{\frac{1}{2}(3 - \beta)}} (c_1n^{\frac{1}{2}(3 - \beta)})^{-n^{\frac{1}{2}(3 - \beta)}}\\
                   &= c_1^{-n^{\frac{1}{2}(3 - \beta)}},
  \end{align*}
  which is asymptotically smaller than $n^{-\epsilon}$ because
  \begin{align*}
    c_1^{-n^{\frac{1}{2}(3 - \beta)}} \le n^{-\epsilon} \Leftrightarrow \exp\left( - \log(c_1) n^{\frac{1}{2}(3 - \beta)} \right) \le \exp \left( - \epsilon \log(n) \right),
  \end{align*}
  which holds for $\beta < 3$.
\end{proof}

Regarding our contribution in the second row
of~\cref{tab:clique-number}, we obtain the following theorem.
 
\begin{theorem}
  \label{thm:clique-number-second-row}
  Let $G = \GIRG(n, \beta, w_0, d)$ be a standard GIRG with
  $\beta = 3$ and $d = \mathcal{O}(\log\log(n))$.  Then,
  $\omega(G) = \mathcal{O}(\log(n) / \log\log(n))$ a.a.s.
\end{theorem}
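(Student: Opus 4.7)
The plan is to mirror the Markov-inequality argument of \cref{thm:clique-number-first-row}: I will choose $k$ large enough that $\Expected{K_k} = o(1)$ and conclude via $\Pr{\omega(G) \ge k} \le \Expected{K_k}$. For $\beta = 3$, I would \emph{not} invoke \cref{thm:expectationupperbound}, since its $\beta = 3$ case carries the side condition $k = o(\log(n)/d)$ which becomes awkward at the borderline scale $d = \Theta(\log\log(n))$. Instead I would fall back on \cref{cor:upperbound}, whose ``otherwise'' branch applies for $\beta \ge 3$ and yields the $d$-independent bound $\cliqueprobk \le \Theta(1)^k n^{1-k}$. Combining this with \cref{lem:binomapprox} gives $\Expected{K_k} = \binom{n}{k}\cliqueprobk \le n\,\Theta(k)^{-k}$ after absorbing the $\Theta(1)^k$ factor into the $\Theta(k)^{-k}$.

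The next step is the choice $k = \lceil C \log(n)/\log\log(n) \rceil$ for a constant $C > 1$. Writing $\Theta(k)^{-k} \le (c_1 k)^{-k}$ for the constant $c_1$ of \cref{lem:binomapprox}, and using $\log k = (1+o(1))\log\log n$, the logarithm of the expectation satisfies
\[
\log \Expected{K_k} \;\le\; \log n - k\log(c_1 k) \;=\; \bigl(1 - C + o(1)\bigr)\log n \;\longrightarrow\; -\infty.
\]
Hence $\Expected{K_k} = o(1)$, so $\Pr{\omega(G) \ge k} = o(1)$, and $\omega(G) < k = \mathcal{O}(\log(n)/\log\log(n))$ asymptotically almost surely.

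Note that the hypothesis $d = \mathcal{O}(\log\log(n))$ does not enter the upper bound directly; it merely delineates the regime in which $\mathcal{O}(\log(n)/\log\log(n))$ is the correct order of magnitude (to be matched by a lower bound established elsewhere, e.g.\ by a second-moment argument on small cliques). The main ``obstacle'' is really just the small bookkeeping check that $C > 1$ makes the leading coefficient of $\log n$ strictly negative, which pins down the hidden constant in the $\mathcal{O}(\cdot)$. No second-moment method or Chernoff bound is required for this direction; a single Markov step closes the upper bound, exactly as in the proof of \cref{thm:clique-number-first-row}, and the only genuinely deliberate choice is to use the $d$-independent clique-probability bound of \cref{cor:upperbound} in place of the sharper \cref{thm:expectationupperbound} in order to sidestep the $k = o(\log(n)/d)$ restriction.
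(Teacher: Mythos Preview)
Your proof is correct and follows essentially the same approach as the paper: Markov's inequality together with the $d$-independent bound $\Expected{K_k}\le n(c_1 k)^{-k}$ obtained from \cref{cor:upperbound} and \cref{lem:binomapprox}. The only cosmetic difference is that the paper solves $(c_1 k)^{-k}=n^{-1-\epsilon}$ via the Lambert $\mathcal{W}$ function and then expands asymptotically, whereas you pick $k=\lceil C\log n/\log\log n\rceil$ directly and verify; your route is arguably cleaner and, as you correctly note, the hypothesis $d=\mathcal{O}(\log\log n)$ is not used for this upper bound.
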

\begin{proof}
  Analogous to the proof of Theorem~\ref{thm:clique-number-first-row},
  we upper bound the clique number using Markov's inequality, i.e.,
  $\Pr{\omega(G) \ge k} \le \mathbb{E}[K_k]$, and choose $k$ such that
  $\mathbb{E}[K_k] \le n^{-\epsilon}$ for some $\epsilon > 0$.  Now
  for $\beta = 3$, we have $\Expected{K_k} \le n \cdot (c_1 k)^{-k}$
  for some constant $c_1 > 1$ and sufficiently large $k$, which follows by \cref{cor:upperbound} and \cref{lem:binomapprox}. It
  remains to determine the value of $k$ for which our desired upper
  bound on $\mathbb{E}[K_k]$ is valid, which can be done by solving
  the following equation for $k$
  \begin{alignat*}{2}
    && n \cdot (c_1 k)^{-k} &= n^{-\epsilon}\\
    \Leftrightarrow \quad && (c_1 k)^{-k} &= n^{-1-\epsilon}.
  \end{alignat*}
  This yields
  \begin{align*}
    k = \frac{1}{c_1}\exp\left(\mathcal{W}(c_1 \log(n^{1+\epsilon}))\right),
  \end{align*}
  where $\mathcal{W}$ is the Lambert $\mathcal{W}$ function defined by
  the identity $\mathcal{W}(z)e^{\mathcal{W}(z)} = z$. With this
  choice of $k$, indeed
  \begin{align*}
    (c_1 k)^{-k} &= \exp\left(- \frac{1}{c_1}\mathcal{W}(c_1 \log(n^{1+\epsilon})) e^{\mathcal{W}(c_1 \log(n^{1+\epsilon}))} \right)\\
                 &= \exp \left( -\log(n^{1+\epsilon}) \right)\\
                 &= n^{-1-\epsilon}.
  \end{align*}
  In order to simplify the expression for $k$, note that for growing
  $z$, we have that $\mathcal{W}(z) = \log(z) - \log \log(z) + o(1)$ and
  thus
  \begin{align*}
    k &= \frac{1}{c_1} \exp\left( \log(c_1(1+\epsilon)\log(n)) - \log\log(c_1 (1+\epsilon)\ln(n)) + o(1) \right)\\
      &= (1+o(1)) \frac{(1+\epsilon)\log(n)}{\log(c_1(1+\epsilon)\log(n))} = \mathcal{O}\left(\frac{\log(n)}{\log\log(n)} \right).
  \end{align*}
\end{proof}

Finally, using a similar argumentation, we obtain the following for
the upper bounds we contribute to the last row
of~\cref{tab:clique-number}.
\begin{theorem}
  Let $G = \GIRG(n, \beta, w_0, d)$ be a standard GIRG with
  $\beta > 3$ and $d = o(\log(n))$.  Then,
  $\omega(G) = \mathcal{O}(\log(n) / (d + \log\log(n)))$
  a.a.s.
\end{theorem}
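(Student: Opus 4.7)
The plan is to follow the Markov-bound template used for the previous two rows of \cref{tab:clique-number}: bound $\Pr{\omega(G) \ge k}$ by $\Expected{K_k}$ and then choose the smallest $k$ for which the expected number of $k$-cliques becomes $o(1)$ (in fact $n^{-\epsilon}$). Since $\beta > 3$ and $d = o(\log(n))$, \cref{thm:expectationupperbound} gives constants $c_1, c_2 > 0$ such that, for large enough $k$,
\begin{equation*}
  \Expected{K_k} \le n \exp(-c_1 d k)(c_2 k)^{-k} + o(1).
\end{equation*}
Taking logarithms, the bound $\Expected{K_k} \le n^{-\epsilon}$ reduces to requiring
\begin{equation*}
  k\bigl(c_1 d + \log(c_2 k)\bigr) \ge (1+\epsilon)\log(n).
\end{equation*}
So the task is to find the smallest $k$ satisfying this and show that it matches $\mathcal{O}(\log(n)/(d+\log\log(n)))$.

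My plan is to set
\begin{equation*}
  k = \alpha\,\frac{\log(n)}{d + \log\log(n)}
\end{equation*}
for a sufficiently large constant $\alpha$ (depending on $c_1, c_2, \epsilon$) and verify the displayed inequality by splitting into two regimes. First, when $d \ge \log\log(n)$, we have $k \le \alpha\log(n)/d$, so $c_1 d \ge (c_1/2)(d+\log\log(n))$ already dominates the factor in parentheses, giving $k(c_1 d + \log(c_2 k)) \ge (c_1\alpha/2)\log(n)$. Second, when $d < \log\log(n)$, the denominator is $\Theta(\log\log(n))$, hence $k = \Theta(\alpha\log(n)/\log\log(n))$ and $\log(c_2 k) = \log\log(n) - \log\log\log(n) + O(1) = \Omega(\log\log(n)) = \Omega(d+\log\log(n))$, so again $k(c_1 d + \log(c_2 k)) = \Omega(\alpha\log(n))$. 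In both regimes, choosing $\alpha$ large enough yields $k(c_1 d + \log(c_2 k)) \ge (1+\epsilon)\log(n)$, and hence $n\exp(-c_1 dk)(c_2 k)^{-k} \le n^{-\epsilon}$.

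Combined with the $o(1)$ additive term from \cref{thm:expectationupperbound}, this gives $\Expected{K_k} = o(1)$, so by Markov's inequality $\Pr{\omega(G) \ge k} = o(1)$, i.e.\ $\omega(G) < k = \mathcal{O}(\log(n)/(d+\log\log(n)))$ a.a.s. The only subtle point is to ensure the chosen $k$ is ``sufficiently large'' for the $\Theta(k)^{-k}$ form of \cref{thm:expectationupperbound} to apply; this is immediate since the stated bound requires $k \ge 3$ and our $k$ grows at least like $\log(n)/\log\log(n)$ under $d = o(\log n)$. The main obstacle is simply handling the case split between the $d$-dominated and $\log\log(n)$-dominated regimes cleanly so that a single expression for $k$ works uniformly in $d$; the rest is a routine logarithm manipulation modelled on the $\beta = 3$ proof.
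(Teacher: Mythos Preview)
Your proposal is correct and follows the same overall strategy as the paper: apply Markov's inequality to bound $\Pr{\omega(G)\ge k}$ by $\Expected{K_k}$, invoke the upper bound $\Expected{K_k}\le n\exp(-\Theta(1)dk)\Theta(k)^{-k}+o(1)$ from \cref{thm:expectationupperbound}, and determine the threshold $k$ at which this becomes $n^{-\epsilon}+o(1)$.

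The only difference is in how you pin down $k$. The paper rewrites the bound as $n(c_1e^{c_2 d}k)^{-k}$, solves $(ak)^{-k}=n^{-1-\epsilon}$ exactly via the Lambert $\mathcal{W}$ function, and then uses the expansion $\mathcal{W}(z)=\ln z-\ln\ln z+o(1)$ to arrive at $k=\Theta\bigl(\log(n)/(d+\log\log(n))\bigr)$. You instead guess the final form $k=\alpha\log(n)/(d+\log\log(n))$ and verify the inequality $k(c_1 d+\log(c_2 k))\ge(1+\epsilon)\log(n)$ directly via a case split on whether $d$ or $\log\log(n)$ dominates. Your route is more elementary (no special functions), at the price of a small case analysis; the paper's is more systematic and yields the leading constant explicitly. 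Both are perfectly valid and of comparable length.
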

\begin{proof}
  The proof is analogous to the one of
  Theorem~\ref{thm:clique-number-second-row}.  However, if
  $\beta > 3$, we may use the stronger upper bound from
  \cref{thm:expectationupperbound}, i.e.,
  \begin{align*}
    \Expected{K_k} = n \exp(-\Theta(1)dk)\Theta(k)^{-k} + o(1).
  \end{align*}
  That is, there are constants $c_1, c_2 > 0$ such that
  $\Expected{K_k} = n \left(c_1\exp(c_2d)k\right)^{-k} + o(1)$.  Just
  like before, setting
  \begin{align*}
    k = \frac{1}{c_1\exp(c_2d)}\exp\left(\mathcal{W}(c_1\exp(c_2d) \log(n^{1+\epsilon}))\right)
  \end{align*}
  yields $\Expected{K_k} \le n^{-\epsilon} + o(1)$.  Using the
  asymptotic expansion of the Lambert $\mathcal{W}$ function
  we then obtain
  \begin{align*}
           k &= (1+o(1)) \frac{(1+\epsilon)\log(n)}{\log(c_1\exp(c_2d)(1+\epsilon)\log(n))}\\
             &= (1+o(1)) \frac{(1+\epsilon)\log(n)}{\log(c_1(1+\epsilon)) + c_2d + \log\log(n)}\\
             &= \mathcal{O}\left(\frac{\log(n)}{d + \log\log(n)}\right).
  \end{align*}
\end{proof}

\paragraph*{Lower Bounds}

To get matching lower bounds, we distinguish once more the cases
$\beta < 3 $ and $\beta \ge 3$.
\begin{theorem}
  Let $G = \GIRG(n, \beta, w_0, d)$ be a standard GIRG with
  $\beta < 3$.  Then, $\omega(G) = \Omega(n^{(3-\beta)/2})$
  a.a.s.
\end{theorem}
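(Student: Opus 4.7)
The plan is to exhibit a clique of size $\Omega(n^{(3-\beta)/2})$ deterministically arising from the ``heavy core'' of high-weight vertices discussed earlier in the paper. Define the threshold $w^* = \sqrt{n/\lambda}$ and the set $H = \{v \in V : w_v \ge w^*\}$. The first step is to observe that $H$ is automatically a clique in $G$: for any $u,v \in H$, the connection threshold satisfies
\begin{equation*}
t_{uv} = \tfrac{1}{2}\bigl(\lambda w_u w_v/n\bigr)^{1/d} \ge \tfrac{1}{2},
\end{equation*}
and $1/2$ is the maximal possible $L_\infty$-distance between two points on $\mathbb{T}^d$ (since the circular distance in each coordinate is at most $1/2$). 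Hence every pair of vertices in $H$ is adjacent regardless of their positions.

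The second step is to show that $|H|$ is of the claimed order asymptotically almost surely. Using the Pareto weight distribution,
\begin{equation*}
p \coloneqq \Pr{w_v \ge w^*} = \bigl(w^*/w_0\bigr)^{1-\beta} = \Theta\bigl(n^{(1-\beta)/2}\bigr),
\end{equation*}
so $\Expected{|H|} = np = \Theta(n^{(3-\beta)/2})$, which tends to infinity since $\beta < 3$. Because $|H|$ is a sum of $n$ independent Bernoulli variables, the lower-tail Chernoff bound (\cref{thm:chernoff-hoeffding}) applied with any constant $\varepsilon \in (0,1)$ gives
\begin{equation*}
\Pr{|H| \le (1-\varepsilon)\Expected{|H|}} \le \exp\!\left(-\tfrac{\varepsilon^2}{2}\Expected{|H|}\right) = o(1),
\end{equation*}
so $|H| = \Omega(n^{(3-\beta)/2})$ a.a.s. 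Combining this with the first step yields $\omega(G) \ge |H| = \Omega(n^{(3-\beta)/2})$ a.a.s.

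There is essentially no obstacle here: the proof collapses to the single observation that the connection threshold exceeds the diameter of the torus as soon as both endpoints have weight $\sqrt{n/\lambda}$, together with an elementary concentration argument. The only subtlety worth flagging is the role of the $L_\infty$-norm and the toroidal topology, which ensure the diameter is exactly $1/2$; for other $L_p$-norms one would have to inflate $w^*$ by a constant factor depending on $p$ and $d$, but the same conclusion would follow.
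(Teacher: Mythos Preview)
Your proof is correct and follows essentially the same approach as the paper: both arguments identify the heavy core $\{v : w_v \ge \sqrt{n/\lambda}\}$, observe that any two vertices in it are deterministically adjacent, and apply a Chernoff bound to show its size is $\Omega(n^{(3-\beta)/2})$ a.a.s. The only cosmetic difference is that you justify the guaranteed adjacency via $t_{uv} \ge 1/2$ exceeding the torus diameter, whereas the paper simply invokes the marginal connection probability $\min\{1,\lambda w_u w_v/n\} = 1$.
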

\begin{proof}
  We show that there are
  $\Omega\left( n^{\frac{1}{2}(3 -\beta)} \right)$ vertices with
  weight at least $\sqrt{n/\lambda}$ w.h.p.  As all these vertices are
  connected with probability $1$, this implies the existence of an
  equally sized clique.

  Because the weight of each vertex is sampled independently, the
  number of vertices with weight above $\sqrt{n/\lambda}$, denoted by
  $X$, is the sum of $n$ independent Bernoulli random variables with
  success probability
  \begin{align*}
    p = \left( \frac{n}{\lambda w_0^2} \right)^{\frac{1}{2}(1-\beta)} = \Theta\left(n^{\frac{1}{2}(1- \beta)}\right),
  \end{align*}
  which we can infer from \cref{eq:prob-pareto}. Therefore, we get
  \begin{align*}
    \Expected{X} = n p = \Theta\left(n^{\frac{1}{2}(3- \beta)}\right)
  \end{align*}
  and by a Chernoff-Hoeffding bound (\cref{thm:chernoff-hoeffding}),
  we get $X \ge \Expected{X}/2 $ w.h.p., which proves our lower bound.
\end{proof}

For the case $\beta \ge 3$, we use the concentration bounds obtained in the previous section applied to a subgraph of $G$ of bounded weight.

\begin{theorem}\label{thm:cliquenumberlowerbound}
  Let $G = \GIRG(n, \beta, w_0, d)$ be a standard GIRG with
  $\beta \ge 3$ and $d = o(\log(n))$.  Then,
  $\omega(G) = \Omega(\log(n) / (d + \log\log(n)))$ a.a.s.
\end{theorem}
\begin{proof}
  Let $w > w_0$ be some fixed weight and let $M = [w_0, w]$. Furthermore, by \cref{lem:starupperbound} there is a constant $c_1 >0 $ such that
  \begin{align*}
    \Expected{ K_k(M) } \ge n \left( c_1 2^d k \right)^{-k}.
  \end{align*} Now, setting \begin{align*}
    k = \frac{1}{2^{d}c_1} \exp\left(\mathcal{W}(2^d c_1 \log(n^{1-\varepsilon}))\right) = \Theta \left( \frac{\log(n)}{d + \log\log(n)} \right)
  \end{align*} yields $\Expected{K_k(M)} \ge n^\varepsilon$. By
  Lemma~\ref{lem:concentration-bound-low-weight} and as $w$ is constant, \begin{align*}
    \Var{K_k(M)} \le \Expected{K_k(M)} \mathcal{O}(1)^{2k}.
  \end{align*} Thus, the inequality of Chebyshev yields \begin{align*}
    \Pr{|K_k(M) - \Expected{K_k(M)}| \ge \frac{1}{2} \Expected{K_k(G_{\le w_c})}} &\le \frac{\Var{K_k(M)}}{\frac{1}{4} \Expected{K_k(M)}^2} = \frac{\mathcal{O}(1)^{2k}}{\Expected{K_k(M)}}.
  \end{align*}
  As $k = \mathcal{O}\left( \log(n)/\log\log(n) \right)$, we have $\mathcal{O}(1)^{2k} = n^{o(1)}$ and so the above term is at most $n^{o(1) - \varepsilon} = o(1)$. Accordingly, $\Pr{K_k(M) \ge 1} = 1-o(1)$ as desired. 
\end{proof}

\section{Cliques in the High-Dimensional Regime}\label{sec:highdim}

Now, we turn to the high dimensional regime, where $d$ grows faster
than $\log(n)$. We shall see that, for constant $k$ and
$d = \omega(\log^2(n))$, the probability that $U_k$ is a clique only
differs from its counterpart in the IRG model by a factor of
$(1 \pm o(1))$.  However, as it turns out, the asymptotic behavior of
cliques in the case $2 < \beta < 3$ is already the same as in the IRG
model if $d = \omega(\log(n))$.  For $\beta \ge 3$, we show that the
number of triangles in the geometric case remains significantly larger
than in the IRG model as long as $d = \log^{3/2}(n)$.
 
\subsection{Bounding the Clique Probability for Fixed Weights}\label{sec:highdimbounds}

We consider the conditional probability that a set
$U_k = \{v_1, \ldots, v_k \}$ of $k$ independent random vertices with
given weights $w_1, \ldots, w_k$ forms a clique. We derive bounds on
this probability under the assumption that $L_\infty$-norm is used,
which afterwards allows bounding the expected number of cliques and
the clique number. In fact, instead of only bounding the probability
that $U_k$ forms a clique, we bound the probability of the more
general event that an arbitrary set of edges $\mathcal{A}$ is formed
among the vertices of $U_k$. We denote by $E(U_k)$ the random variable
representing the set of edges between the vertices in $U_k$ and
proceed by developing bounds on the probability of the event
$E(U_k) \supseteq \mathcal{A}$.

The main difference to our previous bounds is that the connection
threshold proportional to $(w_uw_v/n)^{1/d}$ now grows with $n$
instead of shrinking, even for constant $w_u, w_v$. This requires us
to pay closer attention to the topology of the torus. That is, we have
to take into account that a single dimension of the torus is in fact a
circle with a circumference of $1$.

The bounds are formalized in Theorem~\ref{thm:chung_lu_eq}, which we
restate for the sake of readability.

\cliqueprobabilityhighdim*

This illustrates that the probability that $U_k$ is a clique is at
most $1 + o(1)$ times its counterpart in the IRG model if
$d = \omega\left(\log^{2}(n)\right)$. For
$d = \omega\left(\log(n)\right)$, we get that these two probabilities
only differ by a factor of $(1 + o(1))n^{o(1)}$, which is not much
compared to the case $d = o(\log(n))$, where this factor is at least
in the order of $n^{\binom{k-1}{2} - \frac{d(k-1)}{\log(n)}}$ among
nodes of constant weight.

Before giving the proof, we derive some lemmas that make certain
arguments easier to follow.  We start with an upper bound on the
probability that the set $U_k = \{ v_1, \ldots, v_k\}$ forms a
clique. Before deriving our bound, we need the following auxiliary
lemma.

\begin{lemma}\label{lem:mathineq}
  Let $\ell \in \mathds{N}$, $\ell \ge 1$. There is a constant
  $x_0 < 1$ such that for all $x_0 \le x \le 1$, we
  have \begin{align*} x^{\ell+1} \le \ell x - \ell + 1.
    \end{align*}
\end{lemma}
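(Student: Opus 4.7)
The plan is to consider the auxiliary function $f(x) = \ell x - \ell + 1 - x^{\ell+1}$, so that the target inequality is equivalent to $f(x) \ge 0$. A direct substitution shows $f(1) = 0$, meaning the inequality already holds with equality at the right endpoint; the question reduces to how far to the left of $1$ this property persists.

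Next, I would differentiate to obtain $f'(x) = \ell - (\ell+1) x^{\ell}$, which has a unique zero at $x^\ast = (\ell/(\ell+1))^{1/\ell}$. Since $\ell/(\ell+1) \in (0,1)$ for every $\ell \ge 1$, we have $x^\ast \in (0,1)$. For $x \in (x^\ast, 1]$ the inequality $x^\ell > \ell/(\ell+1)$ gives $f'(x) < 0$, so $f$ is strictly decreasing on the interval $[x^\ast, 1]$. Combining this monotonicity with $f(1) = 0$ yields $f(x) \ge 0$ on all of $[x^\ast, 1]$, and it suffices to set $x_0 := x^\ast$, which satisfies $x_0 < 1$ by construction.

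I do not anticipate any substantial obstacle: the argument is essentially a one-line monotonicity consideration based on comparing the slope $\ell$ of the linear upper bound at $x = 1$ with the slope $\ell + 1$ of $x^{\ell+1}$ there. One could push $x_0$ further to the left by analysing $f$ on $[0, x^\ast]$ (where $f$ is increasing from $f(0) = 1 - \ell$ up to $f(x^\ast) > 0$, so an intermediate-value argument produces a smaller root), but this is not needed for the statement as given. The only remark worth making is that $x_0$ depends on $\ell$, which is consistent with the lemma's phrasing ``there is a constant $x_0 < 1$'' for each fixed $\ell$.
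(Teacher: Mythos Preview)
Your proof is correct. You define $f(x) = \ell x - \ell + 1 - x^{\ell+1}$, note $f(1)=0$, compute $f'(x) = \ell - (\ell+1)x^{\ell}$, observe that $f'<0$ on $(x^\ast,1]$ with $x^\ast = (\ell/(\ell+1))^{1/\ell} \in (0,1)$, and conclude $f \ge 0$ on $[x^\ast,1]$; setting $x_0 = x^\ast$ completes the argument.

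The paper takes a slightly different route: it substitutes $x = 1-y$ and uses a quadratic upper bound $(1-y)^{\ell+1} \le 1 - (\ell+1)y + c y^2$ (valid for $y \in [0,1]$ with some constant $c$), so that $(1-y)^{\ell+1} \le 1 - \ell y + (cy^2 - y)$, and the remainder $cy^2 - y$ is nonpositive for $y \le 1/c$. Your derivative-based monotonicity argument is arguably cleaner and yields an explicit $x_0 = (\ell/(\ell+1))^{1/\ell}$, whereas the paper's Taylor-type bound keeps $c$ implicit. Both approaches are elementary; yours avoids invoking any expansion and directly exploits that the linear function has slope $\ell$ while $x^{\ell+1}$ has slope $\ell+1$ at $x=1$. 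Your closing remark that $x_0$ depends on $\ell$ is appropriate and matches the lemma's quantifier order.
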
 \begin{proof}
    We substitute $x = 1 - y$ and instead show that there is some $y_0 > 0$ such that for all $0 \le y \le y_0$, \begin{align*}
        (1 - y)^{\ell+1} \le 1 - \ell y. 
    \end{align*} We get from a Taylor series that there is a constant $c \ge 0$ such that for all $0 \le y \le 1$, \begin{align*}
        (1 - y)^{\ell+1} \le 1 - (\ell+1)y + cy^2 = 1 - \ell y + cy^2 - y.
    \end{align*} Now, for all $0 \le y \le 1/c$, the term $cy^2 - y$ is negative and our statement follows.
\end{proof}

In the remainder of this section, we frequently analyze events occurring in a single fixed dimension on the torus and use the following notation. Recall that a single dimension of the torus is a circle of circumference $1$, which we denote by $\mathbb{S}^1$. We define the set of points that are within a distance of at most $r$ around a fixed point $\F{x}$ on this circle as $A(r, \F{x})$, and we denote by $\overline{A}(r,\F{x})$ the complement of $A(r, \F{x})$, i.e., the set $\mathbb{S}^1 \setminus A(r, \F{x})$. Observe that $A(r, \F{x})$ and $\overline{A}(r, \F{x})$ are coherent circular arcs. Assume that the position of $v_i$ in our fixed dimension is $\F{x}_{v_i}$. For any pair of vertices $v_i,v_j$, we define the sets $A_{ij} \coloneqq A(t_{v_iv_j}, \F{x}_{v_i})$ and $\overline{A}_{ij} = \overline{A}(t_{v_iv_j}, \F{x}_{v_i})$. We further define $A_i = A(t_0, \F{x}_{v_i})$ and $\overline{A}_i = \overline{A}(t_0, \F{x}_{v_i})$, whereby we note that $\overline{A}_{ij} \subseteq \overline{A}_i$ for all $i,j$ because $t_0$ is the minimal connection threshold.

In the following, we derive upper and lower bounds on the probability that $U_k$ is a clique. 

\begin{theorem}[Upper Bound]\label{thm:prob-high-dim}
  Let $G = \GIRG(n, \beta, w_0, d)$ be a standard GIRG with
  $d = \omega(\log(n))$, let $U_k = \{v_1, \ldots, v_k\}$ be a set of
  $k$ random vertices.  For any constant $k \in \mathbb{N}_{\ge 3}$,
  any $\mathcal{A} \subseteq \binom{U_k}{2}$ and sufficiently large
  $n$, we have
  \begin{align*}
    \Pr{E(U_k) \supseteq \mathcal{A} \mid \{\kappa\}^{(k)}} \le \left( 1 - \left(\frac{\kappa_0}{n}\right)^{r/d} \sum_{\{i,j\}\in \mathcal{A}}\left( 1 - \left( \frac{\kappa_{ij}}{n} \right)^{1/d} \right)\right)^d,
  \end{align*}
  where $r = (3(k-2) + 1)(k-1)$.
\end{theorem}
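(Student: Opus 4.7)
The plan exploits two structural features of the standard GIRG under $L_\infty$-norm. First, because the $L_\infty$ distance on $\mathbb{T}^d$ is the coordinate-wise maximum and each coordinate of each position is independently uniform on the circle $\mathbb{S}^1$, the event $E(U_k) \supseteq \mathcal{A}$ factors as a conjunction of $d$ independent single-coordinate events. Hence
\[
  \Pr{E(U_k) \supseteq \mathcal{A} \mid \{\kappa\}^{(k)}} \;=\; p^d,
\]
where $p$ is the probability that, on a single copy of $\mathbb{S}^1$, the projections of $v_i$ and $v_j$ lie within circular distance $t_{ij} = (\kappa_{ij}/n)^{1/d}/2$ for every $\{i,j\}\in \mathcal{A}$. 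Writing $q_{ij} = (\kappa_{ij}/n)^{1/d}$ and $q_0 = (\kappa_0/n)^{1/d}$, the claim reduces to showing the one-dimensional inequality $p \le 1 - q_0^{\,r}\sum_{\{i,j\}\in\mathcal{A}}(1-q_{ij})$.

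To prove this, I would lower bound the complement $1-p = \Pr{\bigcup_{\{i,j\}\in\mathcal{A}} B_{ij}}$, where $B_{ij}$ denotes the single-coordinate event that $v_i$ and $v_j$ are at circular distance greater than $t_{ij}$. For each $\{i,j\}\in\mathcal{A}$, let $\mathcal{E}_{ij}$ be the sub-event that $B_{ij}$ holds while every other pair in $\mathcal{A}$ remains close. The $\mathcal{E}_{ij}$ are pairwise disjoint, so $1 - p \ge \sum_{\{i,j\}} \Pr{\mathcal{E}_{ij}}$, and it suffices to prove $\Pr{\mathcal{E}_{ij}} \ge q_0^{\,r}\,(1 - q_{ij})$ for each edge. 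I would do this by exhibiting a favorable configuration: fix $\F{x}_{v_i}$ arbitrarily, draw $\F{x}_{v_j}$ from the complementary arc $\overline{A}_{ij}$ of length $1 - 2t_{ij}$ (contributing the factor $1 - q_{ij}$), and then, for each of the remaining $k-2$ vertices, identify a favorable sub-arc of $\mathbb{S}^1$ in which all of its edges from $\mathcal{A}\setminus\{\{i,j\}\}$ are satisfied. The product of these sub-arc volumes yields the factor $q_0^{\,r}$. The auxiliary \cref{lem:mathineq}, which linearizes $x^{\ell+1}\le \ell x-\ell+1$ for $x$ near $1$, is tailor-made for compiling these per-vertex bounds into the desired single product, since $q_0 \to 1$ is precisely the regime $d = \omega(\log n)$ we are in.

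The main obstacle is pinning down the specific exponent $r = (3(k-2)+1)(k-1)$. In contrast to the low-dimensional arguments of the previous section, the connection thresholds $t_{ij}$ can no longer be treated as vanishing quantities: in this regime the sub-arcs determined by different pairs wrap around $\mathbb{S}^1$ and must be intersected with care, using the three relevant regions $A_{ij}$, $\overline{A}_{ij}$, and $A_i \cap A_{ij}$ introduced just above. A naive union-bound strategy would only yield an exponent of order $(k-1)(k-2)$; the additional factor of roughly $3$ per vertex that brings $r$ up to $(3(k-2)+1)(k-1)$ arises from the need to simultaneously accommodate vertices clustered near $\F{x}_{v_i}$ and vertices clustered near $\F{x}_{v_j}$, while still controlling edges that span the two clusters. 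Verifying each geometric estimate for every choice of $\{i,j\}$ and consolidating the resulting inequalities through \cref{lem:mathineq} into one clean bound on $p$ is where the main technical effort lies.
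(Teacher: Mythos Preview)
Your reduction to a single-coordinate problem via $\Pr{\cdot} = p^d$ is correct and matches the paper. Your decomposition of $1-p$, however, is different from the paper's, and your guess at the origin of the exponent $r$ is wrong. You decompose by \emph{unique failing edge}: $\mathcal{E}_{ij}$ is the event that exactly $\{i,j\}$ is absent while all other edges of $\mathcal{A}$ are present. Lower-bounding $\Pr{\mathcal{E}_{ij}}$ then requires controlling all $|\mathcal{A}|-1$ remaining constraints simultaneously, which is precisely why you cannot pin down $r$. The paper instead decomposes sequentially \emph{over vertices}: $\Event{i}$ is the event that $v_i$ lands in some $\overline{A}_{ji}$ with $j<i$, paired with the auxiliary event $\Event{i}^{\text{dis}}$ that the complementary arcs $\overline{A}_1,\ldots,\overline{A}_i$ (each of length $1-2t_0$) are pairwise disjoint. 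The events $\Event{i}\cap\bigcap_{j<i}\Event{j}^{\text{dis}}$ are disjoint, and the payoff of the disjointness condition is that, once it holds, $\Pr{\Event{i}\mid\cdots}$ is simply additive: $\sum_{\{i,j\}\in\mathcal{A}_i^-}(1-q_{ij})$. No ``all other edges succeed'' condition is ever needed.

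The factor $3$ has nothing to do with two clusters near $v_i$ and $v_j$. It comes from the disjointness requirement itself: for $\overline{A}_j$ to be disjoint from a previously placed $\overline{A}_\ell$, the vertex $v_j$ must avoid a forbidden region of total length $3(1-2t_0)$ around $v_\ell$ (the arc $\overline{A}_\ell$ plus a buffer of $1-2t_0$ on each side so that $\overline{A}_j$ does not overlap it). With up to $k-2$ previously placed vertices and $k-1$ sequential steps, this gives $\Pr{\bigcap_{j<i}\Event{j}^{\text{dis}}} \ge (1-3(k-2)(1-q_0))^{k-1}$. Applying \cref{lem:mathineq} with $\ell = 3(k-2)$ converts $1-\ell(1-q_0)$ into $q_0^{\ell+1}$, hence the product into $q_0^{(3(k-2)+1)(k-1)} = q_0^r$. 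That is exactly where the stated $r$ comes from; your edge-based route would at best yield a different exponent and would require substantially more work to make the ``favorable sub-arc'' argument rigorous.
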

\begin{proof}
    In the following, we denote by $t_0$ the minimal connection threshold of any two vertices, i.e., $t_0 = \left( \frac{w_0^2}{\tau n}\right)^{1/d}$. Note that $2t_{uv} = \left( \frac{\kappa_{uv}}{n}\right)^{1/d}$ for any pair of vertices $u,v$. We again consider only one fixed dimension of the torus as they are all independent due to our use of $L_\infty$-norm.

    To get an upper bound on the desired probability, we derive a lower bound on $1 - p$. We define the event $\Event{i}$ as the event that $v_i$ falls into $\overline{A}_{ji}$ for some $\{i, j\} \in \mathcal{A}$ with $ i < j$, and the event $\Event{i}^{\text{dis}}$ as the event that $v_i \notin \bigcup_{j=1}^{i-1} \overline{A}_j$ and the sets $\overline{A}_j$ are disjoint for all $1 \le j \le i$. Note that $\Event{i}$ and $\Event{i}^{\text{dis}}$ are disjoint as $\overline{A}_{ji} \subseteq \overline{A}_j$. Then \begin{align}\label{eq:oneminusp}
        1 - p &\ge \Pr{\Event{2}} + \Pr{\Event{3} \cap \Event{2}^{\text{dis}} } + \Pr{\Event{4} \cap \Event{3}^{\text{dis}} \cap \Event{2}^{\text{dis}} } + \ldots \nonumber \\
        &= \sum_{i=2}^k \Pr{\Event{i} \left| \hspace{.1cm} \bigcap_{j=1}^{i-1}\Event{j}^{\text{dis}}}\Pr{\bigcap_{j=1}^{i-1}\Event{j}^{\text{dis}}}\right.
    \end{align}
    Note that this is a valid bound because all the events we sum over are disjoint.
    
    Now, let $\mathcal{A}_{i}^{-} = \{ \{i, j\}\in \mathcal{A} \mid j \le i  \}$ be the set of edges from vertex $i$ to a lower-indexed vertex. If we condition on $\bigcap_{j=1}^{i-1}\Event{j}^{\text{dis}}$, then the probability of $\Event{i}$ is simply \begin{align*}
        \Pr{\Event{i} \left| \hspace{.1cm} \bigcap_{j=1}^{i-1}\Event{j}^{\text{dis}} \right. } =\!\! \sum_{\{i,j\} \in \mathcal{A}_i^{-}} (1 - 2t_{ij}) = \!\!\sum_{\{i,j\} \in \mathcal{A}_i^{-}} \left(1 - \left( \frac{\kappa_{ij}}{n}\right)^{1/d}\right)
    \end{align*} because all the sets $\overline{A}_{ji}$ are disjoint. It remains to bound $\Pr{\bigcap_{j=1}^{i-1}\Event{j}^{\text{dis}}}$. We obtain \begin{align*}
        \Pr{\bigcap_{j=1}^{i-1}\Event{j}^{\text{dis}}} = \prod_{j=1}^{i-1} \Pr{\Event{j}^{\text{dis}} \left| \hspace{.1cm} \bigcap_{\ell=1}^{j-1}\Event{\ell}^{\text{dis}} \right.}.
    \end{align*}
    Now, the probability $\Pr{\Event{j}^{\text{dis}} \mid \bigcap_{\ell=1}^{j-1}\Event{\ell}^{\text{dis}}}$ is equal to the probability that $v_j$ is placed outside of $\overline{A}_{\ell}$ for all $1 \le \ell < j$ while, at the same time, $\overline{A}_{\ell} \cap \overline{A}_j = \emptyset$. If we consider one fixed set $\overline{A}_{\ell}$, we note that this requires $v_j$ to be of distance at least $1 - 2t_0$ from $v_\ell$ as, otherwise,  $\overline{A}_{\ell}$ and $\overline{A}_j$ overlap. Hence, we may define a ``forbidden'' region around $v_{\ell}$ which includes $\overline{A}_{\ell}$ and all points within distance $1 - 2t_0$ of $v_{\ell}$. This region has volume $3(1-2t_0)$ and so the probability that $v_j$ falls outside the forbidden region is at least $1 - 3(1-2t_0)$. We refer the reader to \cref{fig:torus-dim} for an illustration. Now considering the forbidden region of all $v_{\ell}$ with $1 \le \ell < j$, the combined volume of these forbidden regions is at most $3(j-1)(1-2t_0)$ and hence \begin{align*}
        \Pr{\bigcap_{j=1}^{i-1}\Event{j}^{\text{dis}}} = \prod_{j=1}^{i-1} \Pr{\Event{j}^{\text{dis}} \left| \hspace{.1cm} \bigcap_{\ell=1}^{j-1}\Event{\ell}^{\text{dis}} \right.} &\ge \prod_{j=1}^{i-1} (1 - 3(j-1)(1-2t_0))\\
        &\ge (1 - 3(i-2)(1-2t_0))^{i-1}.
    \end{align*} We get from \cref{eq:oneminusp}, \begin{align*}
        1 - p &\ge \sum_{i=2}^k \Pr{\Event{i} \left| \hspace{.1cm} \bigcap_{j=1}^{i-1}\Event{j}^{\text{dis}} \right.}\Pr{\bigcap_{j=1}^{i-1}\Event{j}^{\text{dis}}}\\ &\ge \sum_{i=2}^k (1 - 3(i-2)(1-2t_0))^{i-1}\sum_{\{i,j\} \in \mathcal{A}_i^{-}} \left(1 - \left( \frac{\kappa_{ij}}{n}\right)^{1/d}\right) \\
        &\ge (1 - 3(k-2)(1-2t_0))^{k-1} \sum_{\{i,j\} \in \mathcal{A}} \left(1 - \left( \frac{\kappa_{ij}}{n}\right)^{1/d}\right)\\
        &= \left( 3(k-2)\left(\frac{\kappa_0}{n}\right)^{1/d} - 3(k-2) + 1 \right)^{k-1} \sum_{\{i,j\} \in \mathcal{A}} \left(1 - \left( \frac{\kappa_{ij}}{n}\right)^{1/d}\right).
    \end{align*} 
    It now remains to show \begin{align*}
        3(k-2)\left(\frac{\kappa_0}{n}\right)^{1/d} - 3(k-2) + 1 \ge \left(\frac{\kappa_0}{n}\right)^{\frac{3(k-2) + 1}{d}}
    \end{align*} for sufficiently large $n$. Recalling that $\left(\frac{\kappa_0}{n}\right)^{1/d}$ tends to $1$ as $n$ grows, this is equvalent to show that there is some $x_0 < 1$ such that for all $x_0 \le x \le 1$ and $\ell = 3(k-2)$, we have \begin{align*}
        \ell x - \ell + 1 \ge x^{\ell + 1}.
    \end{align*} This follows by \cref{lem:mathineq} and the proof is finished.
\end{proof}

\begin{figure}
    \centering
    \scalebox{0.6}{
        \includegraphics{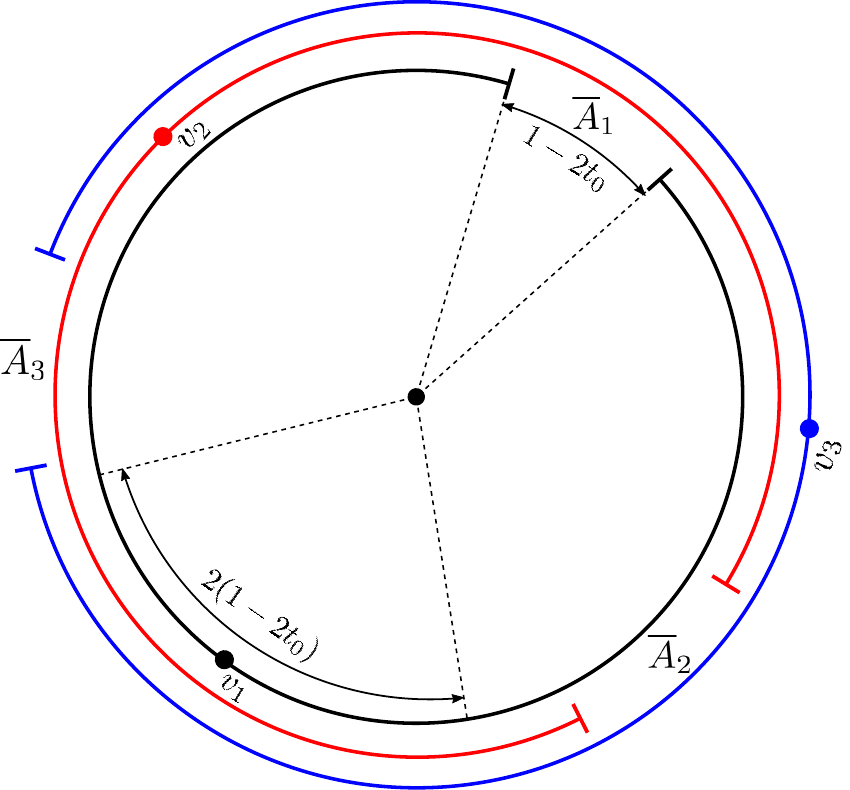}
    }
    \caption{Illustration for the proof of \cref{thm:prob-high-dim}. The colored circles represent the sets $A_i$ for $v_1,v_2,v_3$. The probability that $v_2$ is placed in the region indicated by the black circular arc such that $\overline{A}_1 \cap \overline{A}_2 = \emptyset$ is at least $1 - 3(1-2t_0)$ as indicated by the black arrows.}\label{fig:torus-dim}
\end{figure}  

\begin{theorem}[Lower Bound]\label{thm:lowerboundhighdim}
  Let $G = \GIRG(n, \beta, w_0, d)$ be a standard GIRG with
  $d = \omega(\log(n))$, let $U_k = \{v_1, \ldots, v_k\}$ be a set of
  $k$ random vertices.  Then, for every
  $\mathcal{A} \subseteq \binom{U_k}{2}$,
  \begin{align*}
    \Pr{ E(U_k) \supseteq \mathcal{A} \mid \{\kappa\}^{(k)}} \ge \prod_{i=1}^k \left( 1 - \sum_{\{i,j\} \in \mathcal{A}_i^{-}} \left( 1 - \left(\frac{\kappa_{ij}}{n}\right)^{1/d} \right) \right)^d
  \end{align*}
  where
  $\mathcal{A}_i^{-} = \{ \{i,j\} \in \mathcal{A} \mid j \le i\}$ is
  the set of edges in $\mathcal{A}$ between vertex $i$ and a previous
  vertex.
\end{theorem}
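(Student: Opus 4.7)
The plan is to mirror the strategy used for the upper bound in \cref{thm:prob-high-dim}, but now lower bounding the probability of a favorable configuration rather than upper bounding the probability of a bad one. Since we are using $L_\infty$-norm, the projections of the vertex positions onto the $d$ coordinates of the torus are independent, so the event $\{E(U_k) \supseteq \mathcal{A}\}$ factorizes as a product over dimensions. Hence it suffices to show that in any single fixed dimension, the probability of the one-dimensional analogue of $\{E(U_k) \supseteq \mathcal{A}\}$ is at least
\[
\prod_{i=1}^k \left( 1 - \sum_{\{i,j\} \in \mathcal{A}_i^{-}} \Bigl( 1 - \left(\tfrac{\kappa_{ij}}{n}\right)^{1/d} \Bigr) \right),
\]
and then raise this bound to the $d$-th power.

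For a single dimension, I would reveal the coordinates $\F{x}_{v_1},\ldots,\F{x}_{v_k}$ sequentially and apply the chain rule. Conditioning on $\F{x}_{v_1},\ldots,\F{x}_{v_{i-1}}$, the coordinate $\F{x}_{v_i}$ is uniform on $\mathbb{S}^1$ and independent of the previous positions. The event that $v_i$ is at one-dimensional distance at most $t_{ij}$ from $v_j$ for every $\{i,j\} \in \mathcal{A}_i^{-}$ is the intersection of the arcs $A_{ji}$ over $\{i,j\} \in \mathcal{A}_i^{-}$; by a union bound over complements this intersection has measure at least
\[
1 - \sum_{\{i,j\} \in \mathcal{A}_i^{-}} \Pr{\F{x}_{v_i} \notin A_{ji}} = 1 - \sum_{\{i,j\} \in \mathcal{A}_i^{-}} \Bigl(1 - 2 t_{ij}\Bigr) = 1 - \sum_{\{i,j\} \in \mathcal{A}_i^{-}} \Bigl(1 - \bigl(\tfrac{\kappa_{ij}}{n}\bigr)^{1/d}\Bigr),
\]
where in the middle equality I use that $2 t_{ij} \le 1$ (otherwise the bound is vacuous or the factor becomes trivial). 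Multiplying these conditional lower bounds along $i=1,\ldots,k$ by the chain rule gives the one-dimensional estimate.

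Raising the one-dimensional bound to the $d$-th power via independence of coordinates then yields the claimed lower bound. The proof is essentially a union bound plus independence, so there is no real obstacle; the only mildly delicate point is to make sure the bookkeeping of the index set $\mathcal{A}_i^{-}$ is consistent with the sequential revelation (so that each edge of $\mathcal{A}$ is charged to exactly one endpoint, namely the later one), and to verify that the factor in parentheses is nonnegative so that taking the $d$-th power preserves the inequality — otherwise the right-hand side is at most zero and the bound holds trivially.
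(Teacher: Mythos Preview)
Your proposal is correct and follows essentially the same argument as the paper: factor over coordinates via $L_\infty$, reveal positions sequentially in a single dimension, lower bound the probability that $v_i$ lands in $\bigcap_{\{i,j\}\in\mathcal{A}_i^-} A_{ji}$ by a union bound on the complements, then take the product over $i$ and raise to the $d$-th power. Your write-up is in fact slightly more careful than the paper's about the chain-rule bookkeeping and the nonnegativity caveat.
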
\begin{proof}
    We sample the position of $v_2, \ldots, v_k$ one after another and again only consider the probability $p$ that $E(U_k) \supset \mathcal{A}$ in one fixed dimension. When the position of $v_i$ is sampled, the probability that $v_i$ falls into one of $A_{ji}$ for $\{i,j\} \in \mathcal{A}_i$ is at least \begin{align*}
        1 - \sum_{\{i,j\} \in \mathcal{A}_i^{-}} \left( 1 - \left(\frac{\kappa_{ij}}{n}\right)^{1/d} \right),
    \end{align*} where equality holds if the first $i-1$ vertices are placed such that all the sets $\overline{A}_{ji}$ for $1 \le j < i$ are disjoint. Accordingly, we have \begin{align*}
        p \ge \prod_{i=1}^k \left( 1 -\sum_{\{i,j\} \in \mathcal{A}_i^{-}} \left( 1 - \left(\frac{\kappa_{ij}}{n}\right)^{1/d} \right) \right).
    \end{align*} To have $E(U_k) \supset \mathcal{A}$, it is sufficient that this event occurs in all dimensions, so the final bound is $p^d$.
\end{proof}

To see how these bounds behave as $n \rightarrow \infty$, we need the following lemma. 

\begin{lemma}\label{lem:oneminuspsi}
    Let $\Psi, d, \ell$ be non-negative functions of $n$. Assume that $\frac{-\ell\ln(\Psi)}{d} = o(1)$ and $\Psi < 1$ for all $n$. Then, \begin{align*}
        \frac{-\ell\ln(\Psi)}{d} - e\left(\frac{-\ell\ln(\Psi)}{d}\right)^2 \le 1 - \Psi^{\ell/d} \le \frac{-\ell\ln(\Psi)}{d}.
    \end{align*}
\end{lemma}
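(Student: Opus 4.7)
The proof plan is a straightforward reduction to a one-variable elementary inequality.

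\textbf{Substitution.} Set $x \coloneqq \frac{-\ell \ln(\Psi)}{d}$. Since $\Psi \in (0,1)$, we have $\ln(\Psi) < 0$, so $x \ge 0$; the hypothesis gives $x = o(1)$. By the definition of the exponential,
\[
\Psi^{\ell/d} \;=\; e^{(\ell/d)\ln(\Psi)} \;=\; e^{-x},
\]
so the claim becomes
\[
x - e x^2 \;\le\; 1 - e^{-x} \;\le\; x.
\]

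\textbf{Upper bound.} This is immediate from the standard inequality $e^{-x} \ge 1-x$, valid for all real $x$ (which in turn follows from convexity of $e^{-x}$ and its tangent line at $0$). Rearranging gives $1 - e^{-x} \le x$.

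\textbf{Lower bound.} I would prove $f(x) \coloneqq 1 - e^{-x} - x + e x^2 \ge 0$ for all $x \ge 0$ by a short calculus argument, which does not even require the $o(1)$ assumption. Compute $f(0) = 0$, $f'(x) = e^{-x} - 1 + 2 e x$ with $f'(0) = 0$, and $f''(x) = -e^{-x} + 2e$. Since $f''(x) \ge 2e - 1 > 0$ for all $x \ge 0$, the derivative $f'$ is strictly increasing on $[0,\infty)$, so $f'(x) > f'(0) = 0$ for $x > 0$. Hence $f$ itself is strictly increasing on $[0,\infty)$, giving $f(x) \ge f(0) = 0$, i.e.\ $1 - e^{-x} \ge x - e x^2$.

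\textbf{Remarks.} There is no real obstacle; the only subtle point is just checking that the constant $e$ on the quadratic term is large enough to make the lower bound hold globally (rather than only asymptotically), which is exactly what the sign of $f''$ delivers. Substituting $x = -\ell\ln(\Psi)/d$ back into both inequalities yields the stated bounds. One could alternatively deduce the lower bound directly from the alternating Taylor expansion of $1 - e^{-x}$, using that for $x \in [0,1]$ the partial sum $x - x^2/2$ is a lower bound and that $x - x^2/2 \ge x - e x^2$ since $e > 1/2$; the calculus route is preferable only because it avoids invoking the $o(1)$ assumption and thus matches the symmetric, clean form in which the lemma is stated.
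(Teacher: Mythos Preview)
Your proof is correct. The substitution and the upper bound are exactly what the paper does (both invoke $e^{-x}\ge 1-x$). For the lower bound the paper takes a different route: it expands $1-e^{-x}$ as the Taylor series $\sum_{i\ge 1}\frac{(-1)^{i-1}}{i!}x^{i}$, factors out $x^{2}$ from the tail, uses the hypothesis $x=o(1)$ to bound each $x^{i-2}\le 1$ for sufficiently large $n$, and then crudely replaces the remaining alternating sum by $\sum_{i\ge 0}1/i!=e$. Your calculus argument via the sign of $f''$ is both shorter and strictly stronger, since it establishes the inequality for all $x\ge 0$ rather than only for sufficiently large $n$; the paper's proof genuinely needs the $o(1)$ assumption, whereas yours does not. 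You even note the Taylor alternative in your remarks, so you have essentially identified both approaches.
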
 \begin{proof}
    Observe that \begin{align*}
        1 - \Psi^{\ell/d} = 1 - \exp\left(\frac{\ell\ln(\Psi)}{d}\right).
    \end{align*} For the upper bound, we use the well known fact that, for all $x$, we have $\exp(x) \ge 1 + x$, which directly implies our statement. For the lower bound, we use the Taylor series expansion of $\exp$ and bound
    \begin{align}\label{eq:bound}
        1 - \Psi^{\ell/d} = 1 - \exp\left(\frac{\ell\ln(\Psi)}{d}\right) &= -\sum_{i=1}^\infty \frac{1}{i!} \left( \frac{\ell\ln(\Psi)}{d} \right)^i \nonumber \\ &= \sum_{i=1}^\infty \frac{(-1)^{i-1}}{i!} \left( \frac{-\ell\ln(\Psi)}{d} \right)^i \nonumber \\
        &= \frac{-\ell\ln(\Psi)}{d} - \left(\frac{-\ell\ln(\Psi)}{d}\right)^2\sum_{i=2}^\infty\frac{(-1)^{i}}{i!} \left( \frac{-\ell\ln(\Psi)}{d} \right)^{i-2}
    \end{align} Because $\frac{-\ell\ln(\Psi)}{d} = o(1)$, we get that $\left( \frac{-\ell\ln(\Psi)}{d} \right)^{i-2} \le 1$ for all $i \ge 2$ and sufficiently large $n$. Assuming that all the terms in the above sum are positive yields \begin{align*}
        1 - \Psi^{\ell/d} &\ge \frac{-\ell\ln(\Psi)}{d} - \left(\frac{-\ell\ln(\Psi)}{d}\right)^2\sum_{i=0}^\infty \frac{1}{i!}\\
        &= \frac{-\ell\ln(\Psi)}{d} - e\left(\frac{-\ell\ln(\Psi)}{d}\right)^2
    \end{align*} as desired.
\end{proof}

\begin{lemma}\label{lem:lnpupperbound}
    Let \begin{align*}
        P \coloneqq \left( 1 - \left(\frac{\kappa_0}{n}\right)^{r/d} \sum_{\{i,j\} \in \mathcal{A}} \left(1 - \left(\frac{\kappa_{ij}}{n}\right)^{1/d} \right) \right)^d.
    \end{align*} Then there is a constant $\delta \ge 0$ such that \begin{align*}
        \ln(P) \le   \sum_{\{i,j\} \in \mathcal{A}} \ln \left( \frac{\kappa_{ij}}{n} \right) \left( 1 - \frac{\delta \ln(n)}{d} \right).
    \end{align*}
\end{lemma}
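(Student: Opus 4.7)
The plan is to expand $\ln(P)$ as $d\ln(1-X)$ for $X=(\kappa_0/n)^{r/d}\sum_{\{i,j\}\in\mathcal{A}}(1-(\kappa_{ij}/n)^{1/d})$, then apply the elementary bound $\ln(1-X)\le -X$. This immediately converts an upper bound on $\ln(P)$ into the task of producing a sufficiently strong lower bound on $dX$ and comparing it to $\sum_{\{i,j\}\in\mathcal{A}}|\ln(\kappa_{ij}/n)|$.

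First, I would handle the scalar prefactor $(\kappa_0/n)^{r/d}$. Writing it as $\exp((r/d)\ln(\kappa_0/n))$ and using $e^{x}\ge 1+x$ gives $(\kappa_0/n)^{r/d}\ge 1-(r/d)|\ln(\kappa_0/n)|$. Since $|\ln(\kappa_0/n)|=O(\ln n)$ and $r=(3(k-2)+1)(k-1)$ is a constant for constant $k$, this prefactor is bounded below by $1-c_1\ln(n)/d$ for some constant $c_1\ge 0$.

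Next, I would apply \cref{lem:oneminuspsi} with $\Psi=\kappa_{ij}/n\le 1$ and $\ell=1$. The hypothesis $-\ln(\Psi)/d=o(1)$ is met since $|\ln(\kappa_{ij}/n)|\le|\ln(\kappa_0/n)|=O(\ln n)$ and $d=\omega(\log n)$. The lower bound in that lemma yields
\begin{equation*}
1-(\kappa_{ij}/n)^{1/d}\ \ge\ \frac{|\ln(\kappa_{ij}/n)|}{d}\left(1-\frac{e\,|\ln(\kappa_{ij}/n)|}{d}\right)\ \ge\ \frac{|\ln(\kappa_{ij}/n)|}{d}\left(1-\frac{c_2\ln(n)}{d}\right)
\end{equation*}
for a constant $c_2\ge 0$, uniformly over $\{i,j\}\in\mathcal{A}$.

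Combining these two estimates with $\ln(P)\le -dX$ gives
\begin{equation*}
\ln(P)\ \le\ -\left(1-\frac{c_1\ln(n)}{d}\right)\left(1-\frac{c_2\ln(n)}{d}\right)\sum_{\{i,j\}\in\mathcal{A}}|\ln(\kappa_{ij}/n)|,
\end{equation*}
and multiplying out the product of $(1-o(1))$ factors absorbs the cross term into $1-\delta\ln(n)/d$ for $\delta=c_1+c_2+o(1)$, which (enlarging $\delta$ by a constant if necessary) is uniformly valid for large $n$. Recalling $|\ln(\kappa_{ij}/n)|=-\ln(\kappa_{ij}/n)$ and pulling the minus sign back inside the sum delivers the claimed inequality.

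The only mild obstacle is bookkeeping: one must verify that both small-parameter conditions ($r\ln(n)/d=o(1)$ for the prefactor, $\ln(n)/d=o(1)$ for invoking \cref{lem:oneminuspsi}) are simultaneously controlled by the hypothesis $d=\omega(\log n)$, and that the resulting $\delta$ is truly a constant (not a function of $n$ or $k$). Since $k$ is constant in the enclosing \cref{thm:chung_lu_eq}, $r$ is constant, so $c_1,c_2$ depend only on $w_0,\lambda,k$ and the conclusion holds as stated.
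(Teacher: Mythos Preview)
Your proposal is correct and follows essentially the same route as the paper: both start from $\ln(P)=d\ln(1-X)\le -dX$, then invoke the lower bound of \cref{lem:oneminuspsi} on each summand $1-(\kappa_{ij}/n)^{1/d}$ and separately lower bound the prefactor $(\kappa_0/n)^{r/d}$ by $1-O(\ln(n)/d)$, before combining the two $(1-o(1))$ factors into a single $1-\delta\ln(n)/d$. Your treatment of the prefactor via $e^{x}\ge 1+x$ is in fact slightly cleaner than the paper's, which re-invokes \cref{lem:oneminuspsi} there as well.
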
\begin{proof}
    We have \begin{align*}
        \ln(P) = d \ln \left( 1 - \left(\frac{\kappa_0}{n}\right)^{r/d} \sum_{\{i,j\} \in \mathcal{A}} \left(1 - \left(\frac{\kappa_{ij}}{n}\right)^{1/d} \right) \right).
    \end{align*} We get from a Taylor series that \begin{align*}
        \ln(x) = \sum_{i = 1}^\infty (-1)^{i-1} \frac{(x - 1)^i}{i} = - \sum_{i = 1}^\infty \frac{(1 - x)^i}{i}
    \end{align*} and thus \begin{align*}
        \ln(P) &= -d \sum_{i = 1}^\infty \frac{1}{i} \left(\left(\frac{\kappa_0}{n}\right)^{r/d} \sum_{\{i,j\} \in \mathcal{A}} \left(1 - \left(\frac{\kappa_0}{n}\right)^{1/d} \right) \right)^i\\
        &\le -d\left(\frac{\kappa_0}{n}\right)^{r/d} \sum_{\{i,j\} \in \mathcal{A}} \left(1 - \left(\frac{\kappa_{ij}}{n}\right)^{1/d} \right).
    \end{align*} We now apply the lower bound from \cref{lem:oneminuspsi} with $\Psi = \kappa_{ij}/n$. Note that this fulfills the condition $-\ln(\Psi)/d = o(1)$ as $d = \omega(\log(n))$ and $\kappa_{ij}/n = \Omega(1/n)$. We obtain \begin{align*}
        \ln(P) &\le d\left(\frac{\kappa_0}{n}\right)^{r/d} \sum_{\{i,j\} \in \mathcal{A}} \left( \frac{\ln\left( \frac{\kappa_{ij}}{n}\right) }{d} + \frac{e\ln^2\left( \frac{\kappa_{ij}}{n}\right) }{d^2} \right)\\
        &= \left(\frac{\kappa_0}{n}\right)^{r/d} \sum_{\{i,j\} \in \mathcal{A}} \ln \left( \frac{\kappa_{ij}}{n} \right) \left( 1 - \frac{e}{d}\ln \left( \frac{n}{\kappa_{0}} \right) \right).\end{align*} 
        Note that the above term is negative, so we need to lower bound the term $\left( \frac{\kappa_0}{n} \right)^{r / d}$ to proceed. We get from \cref{lem:oneminuspsi},
        \begin{align*} 
            \left( \frac{\kappa_0}{n} \right)^{r / d} \ge 1 - \frac{r}{d} \ln \left( \frac{n}{\kappa_0} \right) - \frac{er^2}{d^2} \ln \left( \frac{n}{\kappa_0} \right)^2 = 1 - \frac{c\ln(n)}{d}
        \end{align*} for some constant $c > 0$ and sufficiently large $n$. With this, 
        \begin{align*}    
        \ln(P) &\le \sum_{\{i,j\} \in \mathcal{A}} \ln \left( \frac{\kappa_{ij}}{n} \right) \left( 1 - \frac{e}{d}\ln \left( \frac{n}{\kappa_{0}} \right)  \right) \left( 1 - \frac{c\ln(n)}{d} \right) \\
        &\le \sum_{\{i,j\} \in \mathcal{A}} \ln \left( \frac{\kappa_{ij}}{n} \right) \left( 1 - \frac{\delta \ln(n)}{d} \right)
    \end{align*} for some $\delta > 0$ and sufficiently large $n$. 
\end{proof}

\begin{lemma}\label{lem:lowerboundbound}
    Let \begin{align*}
        Q \coloneqq \prod_{i = 1}^k \left( 1 - \sum_{\{i,j\} \in \mathcal{A}_i^{-}} \left( 1 - \left(\frac{\kappa_{ij}}{n}\right)^{1/d} \right) \right)^d.
    \end{align*} Then there is a constant $\delta$ such that \begin{align*}
        \ln(Q) \ge \sum_{\{i,j\} \in \mathcal{A}} \ln\left( \frac{\kappa_{ij}}{n} \right) \left( 1 + \frac{\delta \ln(n)}{d} \right).
    \end{align*}
\end{lemma}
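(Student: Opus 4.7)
The plan is to mirror the proof of \cref{lem:lnpupperbound} but with the order of the two expansions reversed, so that the inequality directions line up. Let $S_i \coloneqq \sum_{\{i,j\}\in \mathcal{A}_i^{-}} \bigl(1 - (\kappa_{ij}/n)^{1/d}\bigr)$, so that $\ln(Q) = d \sum_{i=1}^k \ln(1 - S_i)$. Since $d = \omega(\log n)$, \cref{lem:oneminuspsi} applied with $\Psi = \kappa_{ij}/n$ shows that each summand $1 - (\kappa_{ij}/n)^{1/d}$ is bounded above by $\ln(n/\kappa_{ij})/d = o(1)$, so in particular $S_i \le 1/2$ for all sufficiently large $n$.

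The first step is to expand the outer logarithm via the elementary inequality $\ln(1-x) \ge -x - x^2$ valid for $x \in [0, 1/2]$, which follows from the Taylor series of $-\ln(1-x)$ together with a geometric tail bound. This yields $\ln(Q) \ge -d\sum_i S_i - d\sum_i S_i^2$. Swapping the order of summation rewrites $\sum_i S_i = \sum_{\{i,j\}\in\mathcal{A}} \bigl(1 - (\kappa_{ij}/n)^{1/d}\bigr)$, since each edge contributes exactly once through its higher-indexed endpoint; applying the upper bound of \cref{lem:oneminuspsi} then produces the desired leading term $-d\sum_i S_i \ge \sum_{\{i,j\}\in \mathcal{A}} \ln(\kappa_{ij}/n)$.

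The second step is to absorb the quadratic correction $d \sum_i S_i^2$ into the multiplicative factor $1 + \delta \ln(n)/d$. Using the crude uniform bound $S_i \le (k-1)\ln(n/\kappa_0)/d$, I would factor $S_i^2 \le (k-1)\ln(n/\kappa_0)/d \cdot S_i$ and apply the same upper bound again to the remaining $S_i$, obtaining
\begin{equation*}
  d \sum_{i=1}^k S_i^2 \le (k-1)\,\tfrac{\ln(n/\kappa_0)}{d}\sum_{\{i,j\}\in\mathcal{A}}\ln(n/\kappa_{ij}) = -(k-1)\,\tfrac{\ln(n/\kappa_0)}{d}\sum_{\{i,j\}\in\mathcal{A}}\ln(\kappa_{ij}/n).
\end{equation*}
Combining the linear and quadratic bounds and using $\ln(n/\kappa_0) \le \ln n$ gives $\ln(Q) \ge \bigl(1 + \delta\ln(n)/d\bigr) \sum_{\{i,j\}\in\mathcal{A}}\ln(\kappa_{ij}/n)$ for a constant $\delta$ depending only on $k$; the inequality direction is preserved precisely because $\sum \ln(\kappa_{ij}/n) < 0$, so multiplying by a factor greater than $1$ makes the right-hand side smaller.

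No individual step is technically deep; the main obstacle, and the reason to expand $\ln(1-S_i)$ before invoking the bound on $1 - (\kappa_{ij}/n)^{1/d}$ (the opposite order from \cref{lem:lnpupperbound}), is careful sign bookkeeping. Both sides of the claim are negative and each of the two elementary inequalities used is one-sided, so swapping the order of the two expansions would flip the direction of one of them and break the chain.
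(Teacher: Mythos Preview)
Your proof is correct and takes essentially the same approach as the paper: both combine the upper bound of \cref{lem:oneminuspsi} with a Taylor-type bound for $\ln(1-x)$; the only difference is that the paper applies \cref{lem:oneminuspsi} first (replacing each $1-(\kappa_{ij}/n)^{1/d}$ by $\ln(n/\kappa_{ij})/d$) and then expands the full series $\ln(1-x)=-\sum_{\ell\ge 1}x^\ell/\ell$ and bounds its tail geometrically, whereas you truncate to $\ln(1-S_i)\ge -S_i-S_i^2$ first and then invoke \cref{lem:oneminuspsi}. Two minor remarks: both orderings in fact go through (so your final comment that the reverse order ``would break the chain'' is not accurate---the paper uses exactly that reverse order), and the inequality $\ln(n/\kappa_0)\le\ln n$ can fail when $\kappa_0<1$, but since $\kappa_0$ is a fixed constant you have $\ln(n/\kappa_0)\le 2\ln n$ for large $n$, which suffices after absorbing the extra factor into $\delta$.
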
 \begin{proof}
    By the upper bound from \cref{lem:oneminuspsi} and the Taylor series of $\ln$, we obtain \begin{align*}
        \ln(Q) &\coloneqq d \sum_{i=1}^k \ln \left( 1 - \sum_{\{i,j\} \in \mathcal{A}_i^{-}} \left( 1 - \left(\frac{\kappa_{ij}}{n}\right)^{1/d} \right) \right)\\
        &\ge d \sum_{i=1}^k \ln \left( 1 - \sum_{\{i,j\} \in \mathcal{A}_i^{-}} \left( \frac{\ln\left(\frac{n}{\kappa_{ij}}\right)}{d} \right) \right)\\
        &= -d \sum_{i=1}^k \sum_{\ell=1}^\infty \frac{1}{\ell} \left( \sum_{\{i,j\} \in \mathcal{A}_i^{-}} \left( \frac{\ln\left(\frac{n}{\kappa_{ij}}\right)}{d} \right)\right)^\ell \\
        &= 
        -d \sum_{i=1}^k \sum_{\{i,j\} \in \mathcal{A}_i^{-}} \left( \frac{\ln\left(\frac{n}{\kappa_{ij}}\right)}{d} \right) \sum_{\ell=1}^\infty \frac{1}{\ell} \left( \sum_{\{i,j\} \in \mathcal{A}_i^{-}} \left( \frac{\ln\left(\frac{n}{\kappa_{ij}}\right)}{d} \right)\right)^{\ell - 1} \\
        &= \sum_{\{i,j\} \in \mathcal{A}} \ln\left( \frac{\kappa_{ij}}{n} \right) \sum_{\ell=1}^\infty \frac{1}{\ell} \left( \sum_{\{i,j\} \in \mathcal{A}_i^{-}} \left( \frac{\ln\left(\frac{n}{\kappa_{ij}}\right)}{d} \right)\right)^{\ell - 1}\\
        &\ge \sum_{\{i,j\} \in \mathcal{A}} \ln\left( \frac{\kappa_{ij}}{n} \right) \sum_{\ell=1}^\infty \frac{1}{\ell} \left(  \frac{k\ln\left(\frac{n}{\kappa_{0}}\right)}{d} \right)^{\ell - 1}.
\end{align*} Now, we have \begin{align*}
    \sum_{\ell=1}^\infty\frac{1}{\ell} \left(  \frac{k\ln\left(\frac{n}{\kappa_{0}}\right)}{d} \right)^{\ell - 1} \le 1 + \frac{k\ln\left(\frac{n}{\kappa_{0}}\right)}{d} \sum_{\ell=0}^\infty \left(  \frac{k\ln\left(\frac{n}{\kappa_{0}}\right)}{d} \right)^{\ell} \le 1 + \frac{2k\ln\left(\frac{n}{\kappa_{0}}\right)}{d}
\end{align*} because  $\frac{k\ln\left(\frac{n}{\kappa_{0}}\right)}{d} = o(1)$ and so the above geometric sum converges to $1 + o(1)$. 
\end{proof}

\begin{proof}[Proof of \cref{thm:chung_lu_eq}]
    Combining \cref{thm:prob-high-dim}, \cref{thm:lowerboundhighdim}, \cref{lem:lnpupperbound} and \cref{lem:lowerboundbound}, we get that there are constants $\delta, \delta' > 0$ such that \begin{align*}
        \prod_{\{i,j\} \in \mathcal{A}} \left( \frac{\kappa_{ij}}{n} \right)^{1 + \frac{\delta\ln(n)}{d}} \le \Pr{E(U_k) \supseteq \mathcal{A} \mid \{\kappa\}^{(k)}} \le \prod_{\{i,j\} \in \mathcal{A}} \left( \frac{\kappa_{ij}}{n} \right)^{1 - \frac{\delta'\ln(n)}{d}}.
    \end{align*} If $d = \omega(\log(n))$, $\frac{\delta \ln(n)}{d} = o(1)$ and the proof of this case is finished. If $d = \omega(\log^2(n))$, observe that \begin{align*}
        \prod_{\{i,j\} \in \mathcal{A}} \left( \frac{\kappa_{ij}}{n} \right)^{1 + \frac{\delta\ln(n)}{d}} \ge \left(\frac{n}{\kappa_0}\right)^{-\binom{k-1}{2} \frac{\delta \ln(n)}{d}}
        \prod_{\{i,j\} \in \mathcal{A}} \left( \frac{\kappa_{ij}}{n} \right).\end{align*} Similarly, 
        \begin{align*}
            \prod_{\{i,j\} \in \mathcal{A}} \left( \frac{\kappa_{ij}}{n} \right)^{1 - \frac{\delta'\ln(n)}{d}} \le \left(\frac{n}{\kappa_0}\right)^{\binom{k-1}{2} \frac{\delta' \ln(n)}{d}}
            \prod_{\{i,j\} \in \mathcal{A}} \left( \frac{\kappa_{ij}}{n} \right).\end{align*}
    Since $d = \omega(\log^2(n))$, we get that \begin{align*}
        \left(\frac{n}{\kappa_0}\right)^{\binom{k-1}{2} \frac{\delta \ln(n)}{d}} = (1 + o(1)) n^{1/\omega(\log(n))} = (1 + o(1)) \exp\left( \frac{\log(n)}{\omega(\log(n))} \right) = 1 + o(1)
    \end{align*} as we assume $k$ to  be constant. 
\end{proof}

\paragraph{Bounds for Triangles}\label{sec:boundsfortriangles}

We proceed by deriving a lower bound for the probability that three vertices form a triangle.

\begin{lemma}\label{lem:highdimtriangles}
  Let $G = \GIRG(n, \beta, w_0, d)$ be a standard GIRG and let
  $v_1, v_2, v_3$ be three random vertices.  Then,
  \begin{align*}
    \Pr{v_2 \sim v_3 \mid v_1 \sim v_2, v_3} \ge \left(3 - 3 \left(\frac{\kappa_0}{n}\right)^{-1/d} + \left(\frac{\kappa_0}{n}\right)^{-2/d}\right)^d.
  \end{align*}
\end{lemma}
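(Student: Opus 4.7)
The key ingredient is the factorisation afforded by the $L_\infty$-norm. Conditional on the weights, the adjacency $v_1 \sim v_2$ in $d$ dimensions means adjacency coordinate by coordinate, and after placing $v_1$ at the origin the $i$-th coordinate of $v_2$ is uniform on the circular arc of length $2 t_{12}$ around $0$, independently across $i$, and similarly for $v_3$. The event $v_2 \sim v_3$ factors as $|v_{2i} - v_{3i}|_C \le t_{23}$ for every $i$, so the $d$-dimensional conditional probability equals the $d$-th power of a single one-dimensional conditional probability $p_{\mathrm{1D}}$. It therefore suffices to establish $p_{\mathrm{1D}} \ge 3 - 3y + y^2$ with $y = (\kappa_0/n)^{-1/d} = 1/(2t_0)$.

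The plan is to compute $p_{\mathrm{1D}}$ in the extremal case where all three thresholds equal $t_0$, which lower bounds the general case because $\kappa_{ij} \ge \kappa_0$. Placing $v_1$ at $0$, both $v_2$ and $v_3$ are uniform on $[-t_0, t_0]$. The event $|v_2 - v_3|_C \le t_0$ splits into two disjoint subsets of the square $[-t_0, t_0]^2$: the straight region $\{|v_2 - v_3| \le t_0\}$ always contributes area $3 t_0^2$, and, whenever $t_0 \ge 1/3$, the wraparound region $\{|v_2 - v_3| \ge 1 - t_0\}$ contributes an additional area $(3 t_0 - 1)^2$. Dividing by $(2 t_0)^2$ and substituting $y = 1/(2 t_0)$ gives $p_{\mathrm{1D}} = 3/4 + (3 t_0 - 1)^2 / (4 t_0^2) = y^2 - 3y + 3$. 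The condition $t_0 \ge 1/3$ is automatic in the high-dimensional regime $d = \omega(\log n)$, where $t_0 \to 1/2$.

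To promote this symmetric bound to the general case $t_{ij} \ge t_0$, I would sample the vertices one at a time in the spirit of the proof of Theorem~\ref{thm:lowerboundhighdim}: integrate over the position of $v_2$ in its arc and lower bound, uniformly in that position, the conditional measure of the set of admissible $v_3$ by its value in the symmetric configuration. Raising the resulting one-dimensional bound to the $d$-th power then yields the claimed lower bound.

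The main obstacle is the wraparound term. A bare union bound in one dimension, analogous to the proof of Theorem~\ref{thm:lowerboundhighdim}, only produces the weaker $(2(\kappa_0/n)^{1/d} - 1)^d$, because it misses those configurations in which the arcs overlap by wrapping around the back of the circle. Capturing the extra $(3 t_0 - 1)^2 / (4 t_0^2)$ factor requires a genuinely geometric case analysis, made tractable precisely because in the high-dimensional regime $t_0$ is close to $1/2$ and the arcs almost cover the whole circle.
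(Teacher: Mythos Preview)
Your proposal is correct and follows essentially the same approach as the paper: both reduce to the symmetric all-$w_0$ configuration (the paper likewise asserts without proof that larger weights only increase the conditional triangle probability), factor into one-dimensional problems via $L_\infty$, and compute the resulting 1D probability exactly. The only cosmetic difference is that the paper upper-bounds the \emph{non}-adjacency probability by integrating a piecewise expression over the position of $v_2$, while you compute the adjacency area in $[-t_0,t_0]^2$ directly via the straight-plus-wraparound decomposition; both routes yield $3-3y+y^2$ with $y=(\kappa_0/n)^{-1/d}$ and both implicitly require $t_0\ge 1/3$, which holds in the high-dimensional regime of this section.
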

\begin{proof}
    We consider one fixed dimension and give an upper bound on the probability that $v_2 \nsim v_3$ conditioned on $v_1 \sim v_2, v_3$. In the following, we abbreviate $p \coloneqq \frac{\kappa_0}{n}$. Note that we may assume that all three vertices are of weight $w_0$ as it is easy to verify that larger weights only increase the probability of forming a triangle. Conditioned on the event $v_1 \sim v_2, v_3$, the vertices $v_2,v_3$ are uniformly distributed within a circular arc of length $p^{1/d}$ around $v_1$. In order for $v_2 \nsim v_3$ to occur, $v_3$ needs to be placed within a circular arc of length $1-p^{1/d}$ opposite to $v_2$. Hence, the probability that $v_2 \nsim v_3$ conditioned on the event that $v_2$ is placed at distance $x$ of $v_1$ is \begin{align*}
        \Pr{v_2 \nsim v_3 \mid |x_{v_1} - x_{v_2}|_C = x} = \begin{cases}
            x/p^{1/d} & \text{if } x \le 1 - p^{1/d}\\
            (1-p^{1/d})/p^{1/d} & \text{otherwise}  
        \end{cases}
    \end{align*} where $|\cdot|_C$ denotes the distance on the circle. Since $v_2$ is distributed uniformly within distance $\frac{1}{2}p^{1/d}$ around $v_1$, we obtain \begin{align*}
        \Pr{v_2 \nsim v_3 \mid v_1 \sim v_2, v_3} &\le \frac{1}{\frac{1}{2}p^{1/d}} \int_{0}^{\frac{1}{2}p^{1/d}} \Pr{v_2 \nsim v_3 \mid |x_{v_1} - x_{v_2}|_C = x} \d x \\
        &= \frac{1}{\frac{1}{2}p^{1/d}} \left( \int_{0}^{1-p^{1/d}}\frac{x}{p^{1/d}} \d x + \int_{1-p^{1/d}}^{\frac{1}{2}p^{1/d}} \frac{1-p^{1/d}}{p^{1/d}}\d x \right).
    \end{align*} Solving the integrals then yields \begin{align*}
        \Pr{v_2 \nsim v_3 \mid v_1 \sim v_2, v_3} \le 3p^{-1/d} -2 - p^{-2/d}
    \end{align*} implying the desired bound.
\end{proof}

We use this finding to show in the following lemma that, although
cliques of size at least $4$ already behave like in the IRG model if
$d = \omega(\log(n))$, the number of triangles in the geometric case
is still larger than that in the non-geometric case.

\begin{lemma}\label{lem:highdimlowerbound}
  Let $G = \GIRG(n, \beta, w_0, d)$ be a standard GIRG with
  $d = \omega(\log(n))$ and let $v_1, v_2, v_3$ be three random
  vertices.  Then,
  \begin{align*}
    \Pr{v_2 \sim v_3 \mid v_1 \sim v_2, v_3} \ge (1+o(1))\left( \frac{\kappa_0}{n} \right)^{1 - \frac{\ln(n)^2}{d^2} \pm \mathcal{O}\left(\frac{\ln(n)^3}{d^3} \right)}.
  \end{align*}
\end{lemma}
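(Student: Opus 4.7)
The plan is to start from the bound of Lemma~\ref{lem:highdimtriangles}, namely $\Pr{v_2 \sim v_3 \mid v_1 \sim v_2, v_3} \ge \bigl(3 - 3p^{-1/d} + p^{-2/d}\bigr)^d$ with $p \coloneqq \kappa_0/n$, and to convert this expression into the claimed form by Taylor-expanding in the small parameter $u \coloneqq \ln(n/\kappa_0)/d$, which satisfies $u = o(1)$ because $d = \omega(\log n)$.

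The algebraic key is the elementary identity $3q^2 - 3q + 1 = q^3 + (1-q)^3$. Substituting $q = p^{1/d} = e^{-u}$ rewrites the base of the previous bound as
\begin{align*}
3 - 3p^{-1/d} + p^{-2/d} \;=\; \frac{3q^2 - 3q + 1}{q^2} \;=\; q + \frac{(1-q)^3}{q^2},
\end{align*}
so after raising to the $d$-th power and using $q^d = p$, the task reduces to estimating the second factor in
\begin{align*}
\bigl(3 - 3p^{-1/d} + p^{-2/d}\bigr)^d \;=\; p \cdot \Bigl(1 + \tfrac{(1-q)^3}{q^3}\Bigr)^{\!d}.
\end{align*}

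Expanding $1-q = 1 - e^{-u} = u + O(u^2)$ gives $(1-q)^3/q^3 = u^3 + O(u^4)$, and then $\ln(1+x) = x + O(x^2)$ followed by multiplication by $d$ yields
\begin{align*}
d\ln\!\Bigl(1 + \tfrac{(1-q)^3}{q^3}\Bigr) \;=\; du^3 + O(du^4) \;=\; \frac{\ln^3(n/\kappa_0)}{d^2} + O\!\Bigl(\tfrac{\ln^4(n/\kappa_0)}{d^3}\Bigr).
\end{align*}
Re-exponentiating and using $\ln p = -\ln(n/\kappa_0)$ to convert the exponential factor back into a power of $p$ produces
\begin{align*}
\bigl(3 - 3p^{-1/d} + p^{-2/d}\bigr)^d \;=\; p^{\,1 \,-\, \ln^2(n/\kappa_0)/d^2 \,\pm\, O(\ln^3(n)/d^3)}.
\end{align*}
Since $\kappa_0$ is a fixed constant, $\ln^2(n/\kappa_0) = (\ln n)^2(1+o(1)) \ge \tfrac{3}{4}(\ln n)^2$ for all sufficiently large $n$. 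Because $p < 1$, shrinking the subtracted term in the exponent only makes $p^{\,\cdot}$ larger, giving the desired lower bound $(1 + o(1)) p^{\,1 - 3(\ln n)^2/(4d^2) \,\pm\, O(\ln^3(n)/d^3)}$, with the loose coefficient $3/4$ (instead of the sharp $1$) providing slack that comfortably absorbs both the constants in $\kappa_0$ and the $(1+o(1))$ prefactor.

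The only real obstacle is bookkeeping of the Taylor remainders: one must verify that the $O(u^4)$ contribution to $(1-q)^3/q^3$ and the quadratic term coming from $\ln(1+\cdot)$ together yield at most $O(\ln^4(n/\kappa_0)/d^3)$ after multiplication by $d$, so that they translate into an exponent correction of at most $\pm O(\ln^3(n)/d^3)$ for $p$. The hypothesis $d = \omega(\log n)$, which forces $u = o(1)$, is exactly what makes these remainder estimates valid; beyond that the argument is routine algebra.
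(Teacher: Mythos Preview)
Your proof is correct and follows the same overall strategy as the paper---start from Lemma~\ref{lem:highdimtriangles} and Taylor-expand the base $3 - 3p^{-1/d} + p^{-2/d}$ in the small parameter $u = \ln(n/\kappa_0)/d$---but the execution is cleaner. The paper expands the base directly as $1 + \sum_{i\ge1}\tfrac{1}{i!}(-\ln p/d)^i(2^i-3)$ and then feeds this into the series for $d\ln(\cdot)$, tracking coefficients through third order by hand. Your identity $3q^2 - 3q + 1 = q^3 + (1-q)^3$ factors the base as $q\bigl(1 + (1-q)^3/q^3\bigr)$, which isolates $q^d = p$ immediately and reduces everything to estimating $(1 + u^3 + O(u^4))^d$; the bookkeeping becomes almost trivial. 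In fact your route yields the sharper coefficient $1$ rather than $3/4$ in the exponent $1 - c\ln^2(n)/d^2$; redoing the paper's bracket $Z - \tfrac12 xZ^2 + \tfrac13 x^2 Z^3$ also gives $1 - x^2 + O(x^3)$, so the paper's stated $3/4$ appears to be a harmless arithmetic slip (harmless because any constant $\le 1$ still gives a valid lower bound).

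One small wording error: the sentence ``shrinking the subtracted term in the exponent only makes $p^{\,\cdot}$ larger'' has the direction reversed---shrinking the subtracted term \emph{increases} the exponent and hence \emph{decreases} $p^{\text{exponent}}$ since $p<1$. But that is precisely what is needed: from $\Pr{\cdot} \ge p^{1 - \ln^2(n/\kappa_0)/d^2 \pm \cdots}$ and $\ln^2(n/\kappa_0) \ge \tfrac34\ln^2(n)$ one obtains $p^{1 - \ln^2(n/\kappa_0)/d^2} \ge p^{1 - \tfrac34\ln^2(n)/d^2}$, so the weaker expression is still a lower bound. The conclusion stands; only the verbal justification was inverted.
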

\begin{proof}
    By \cref{lem:highdimtriangles}, we have $\Pr{v_2 \sim v_3 \mid v_1 \sim v_2, v_3} = q^d$ with \begin{align*}
        q \coloneqq 3 \left( 1 - \left(\frac{\kappa_0}{n}\right)^{-1/d}\right) + \left(\frac{\kappa_0}{n}\right)^{-2/d}.
    \end{align*} In the following, we once again abbreviate $p = \frac{\kappa_0}{n}$ and start by decomposing $q$ into a sum. \begin{align*} 
        q &= 3 \left( 1 - p^{-1/d}\right) + p^{-2/d}\\
        &= 3 \left( 1 - \exp\left(\frac{ -\ln(p)}{d} \right) \right) + \exp\left(\frac{ -2\ln(p)}{d} \right) \\ 
        & = -3 \sum_{i=1}^\infty \frac{1}{i!} \left( \frac{-\ln(p)}{d} \right)^i + \sum_{i=0}^\infty \frac{1}{i!} \left( \frac{-2\ln(p)}{d} \right)^i\\
        &= 1 + \sum_{i=1}^\infty \frac{1}{i!} \left( \left( \frac{-2\ln(p)}{d} \right)^i - 3\left( \left( \frac{-\ln(p)}{d} \right)^i \right) \right)\\
        &= 1 + \sum_{i=1}^\infty \frac{1}{i!} \left( \frac{-\ln(p)}{d} \right)^i (2^i - 3).
    \end{align*} We now proceed by bounding $d\ln(q)$ and apply the Taylor series of $\ln$ to get \begin{align}\label{eq:dlnq}
        d \ln(q) = -d \sum_{i=1} \frac{(1-q)^i}{i} = \sum_{i=1} (-1)^{i+1} d \frac{(q-1)^{i}}{i}.
    \end{align} By abbreviating $x \coloneqq \ln(p)/d$ and our above sum, we get \begin{align}\label{eq:defz}
        q -1 = \sum_{i=1}^\infty \frac{1}{i!} \left( \frac{-\ln(p)}{d} \right)^i (2^i - 3) &= x + \frac{1}{2}x^2 - \frac{5}{6}x^3 + \ldots \nonumber \\
        &=  x \underbrace{ \left( 1 + \frac{1}{2}x - \frac{5}{6}x^2 + R \right) }_{Z},
    \end{align} where $|R| = \mathcal{O}(|x|^3)$ because $|x| = o(1)$.
    Inserting this into the first terms of the sum in \cref{eq:dlnq} yields \begin{align*}
        d \ln(q) &= d x Z - d \frac{1}{2}x^2Z^2 + d \frac{1}{3} x^3 Z^3 + \ldots\\
        &= d x\left( Z - \frac{1}{2}xZ^2 + \frac{1}{3}x^2Z^3 + \sum_{i=4} (-1)^{i+1} x^{i-1}Z^{i} \right).
    \end{align*}
    Using the definition of $Z$ in \cref{eq:defz}, we may carefully compute the leading terms of the above sum to obtain \begin{align*}
        d \ln(q) = dx \left( 1 - x^2 + R' \right),
    \end{align*} with $|R'| = \mathcal{O}(|x|^3)$. Recalling that $x = \ln(p) / d$ gives \begin{align*}
        d \ln(q) = \ln(p) \left( 1 - \left(\frac{\ln(p)}{d}\right)^2 \pm \mathcal{O} \left( \left|\frac{\ln(p)}{d}\right|^3 \right) \right).
    \end{align*} As $p = \kappa_0/n$, this shows \begin{align*}
        q^d = \left( \frac{\kappa_0}{n} \right)^{1 - \frac{ \ln(n/\kappa_0)^2}{d^2} \pm \mathcal{O}\left( \frac{\ln(n)^3}{d^3} \right)} = (1+o(1))\left( \frac{\kappa_0}{n} \right)^{1 - \frac{ \ln(n)^2}{d^2} \pm \mathcal{O}\left( \frac{\ln(n)^3}{d^3} \right)}.
    \end{align*}
    Here, the factor of $1 + o(1)$ derives from the fact that 
    \[\frac{ \ln(n/\kappa_0)^2}{d^2} = \frac{\ln(n)^2 - 2\ln(n)\ln(\kappa_0) + \ln(\kappa_0)^2}{d^2} = \frac{\ln(n)^2}{d^2} - \frac{1}{\omega(\log(n))}.\]
\end{proof}

The above bound implies that the probability that a triangle forms
among vertices of constant weight is still significantly larger than
in the non-geometric models as long as
$d = o\left( \log^{3/2}(n) \right)$. We summarize this behavior in the
following theorem.
\begin{theorem}
  Let $G = \GIRG(n, \beta, w_0, d)$ be a standard GIRG with
  $\beta > 3$ and $d = \omega(\log(n))$.  Then, there is a constant
  $\delta \ge 0$ such that the expected number of triangles
  fulfills
  \begin{align*}
    \Expected{K_3} = \begin{cases} 
      \Omega\left( \exp\left( \frac{\ln^3(n)}{d^2} \right)\right) & \text{if } 3 < \beta < \infty \\ 
      \Theta\left( \exp\left( \frac{\ln^3(n)}{d^2} \right)\right) & \text{if } \beta = \infty
    \end{cases}
  \end{align*} where $\beta = \infty$ refers to the case where every vertex has the same weight.  
\end{theorem}
\begin{proof}
    We note that, by \cref{lem:highdimlowerbound} \begin{align*}
        \Pr{U_3 \text{ is clique}} \ge (1+o(1)) \left( \frac{\kappa_0}{n} \right)^2\left( \frac{\kappa_0}{n} \right)^{1 - \frac{ \ln(n)^2}{d^2} \pm \mathcal{O}\left( \frac{\ln(n)^3}{d^3} \right)}
    \end{align*} and thus \begin{align*}
        \Expected{K_3} = \binom{n}{3} \Pr{U_3 \text{ is clique}} \ge (1+o(1))\kappa_0^3 n^{\Omega\left( \frac{\ln^2(n)}{d^2} \right)} = \Omega\left(\exp\left(   \frac{ \ln^3(n)}{d^2} \right)\right)
    \end{align*} and the first part of the statement is shown. For the second part, we note that in the case of constant weights, the bound from \cref{lem:highdimtriangles} is the exact (conditional) probability that a triangle is formed and the transformations from \cref{lem:highdimlowerbound} still apply for obtaining an upper bound. 
\end{proof}

\subsection{Characterizing Cliques by Vertex Weights}\label{sec:clique-weight-high-dim}

In this section, we extend the bounds on $q_k$ obtained above to the entire graph and characterize it by the weights of the associated vertices assuming $d = \omega(\log(n))$ and $\beta \in (2,3)$ (the other parameter regimes are discussed in the subsequent section). To this end, we will mainly be concerned with bounding the following integral. 

\begin{lemma}\label{lem:integrationbound}
  Let $\rho$ be the density function of the Pareto distribution, let $\varepsilon = \varepsilon(n) = o(1)$ and define \begin{align*}
    \Lambda(k) \coloneqq \int_{0}^\infty \cdots \int_{0}^\infty \rho(w_1) \cdots \rho(w_k) \prod_{1 \le i < j \le k} \left( \frac{\kappa_{ij}}{n} \right)^{1-\varepsilon} \d w_k \ldots \d w_1.
  \end{align*} For every constant $k$, we have \begin{align*}
    \Lambda(k) = \mathcal{O} \left( n^{\frac{k}{2}(1- \beta)} \right).
  \end{align*}
\end{lemma} 

\begin{remark}
  Clearly, $\Lambda(k)$ is an upper bound on the clique probability if we used $w_0$ as the lower integration limit instead of $0$. However, we show the more general statement where the integration limit is $0$ as it will be useful later in the proof of \cref{lem:weightboundhighdim}.
\end{remark}

\begin{proof}[Proof of \cref{lem:integrationbound}]
  We use a similar technique like Daly et al. \cite{Daly_Haig_Shneer_2020} who bound \begin{align*}
    \Lambda(k) \le \sum_{m = 0}^{k} \binom{k}{m} \Lambda(k, m)
  \end{align*} where \begin{align*}
    &\Lambda(k, m) \coloneqq \\
    & \hspace{.2cm} \underbrace{\int_{0}^{\sqrt{n/\lambda}} \cdots \int_{0}^{\sqrt{n/\lambda}}}_{m \text{ times}} \underbrace{\int_{\sqrt{n/\lambda}}^\infty \cdots \int_{\sqrt{n/\lambda}}^\infty}_{k-m \text{ times}} \hspace{.1cm} \rho(w_1) \cdots \rho(w_k) \prod_{1 \le i < j \le k} \left( \frac{\kappa_{ij}}{n} \right)^{1-\varepsilon} \d w_k \ldots \d w_1.
  \end{align*}
  We start with the first extreme case \begin{align*}
    \Lambda(k, k) \!=\!\! \int_{0}^{\sqrt{n/\lambda}}\!\!\!\!\! \cdots \int_{0}^{\sqrt{n/\lambda}}\!\! \rho(w_1) \cdots \rho(w_k) \prod_{1 \le i < j \le k} \left( \frac{\kappa_{ij}}{n} \right)^{1-\varepsilon} \d w_1 \ldots \d w_k.
\end{align*} Since $\rho(w) = c w^{-\beta}$ for some constant $c$ and as $k$ is constant, \begin{align*}
    \begin{split}
        &\Lambda(k,k) = \\ &\hspace{1cm} \Theta(1) \int_{0}^{\sqrt{n/\lambda}} \cdots \int_{0}^{\sqrt{n/\lambda}}  \prod_{1 \le i < j \le k}\left( \frac{\kappa_{ij}}{n} \right)^{1- \varepsilon} w_1^{-\beta} \cdots w_k^{-\beta} \d w_1 \ldots \d w_k
    \end{split}\\
    &= \Theta(1) n^{-\binom{k}{2}(1-\varepsilon)} \left( \int_{0}^{\sqrt{n/\lambda}} w^{(k-1)(1-\varepsilon)-\beta} \d w \right)^k \\
    &= \Theta(1) n^{-\binom{k}{2}(1-\varepsilon)} n^{\frac{1}{2}k(k-1)(1-\varepsilon) + \frac{k}{2}(1-\beta)} \\
    &= \Theta(1) n^{\frac{k}{2}(1-\beta)}
\end{align*} as $\binom{k}{2} = \frac{1}{2}k(k-1)$. Note that the second step holds for sufficiently large $n$ since $k \ge 3$ and $\beta < 3$ as this leads to an exponent in the integral that is strictly greater than $-1$. Above bounds only hold for sufficiently large $n$ as $\varepsilon = o(1)$. For the other extreme case, let $\wmin = \min\{w_1, \ldots, w_k\}$, then \Cref{cor:upperbound} yields that  \begin{align*}
    \Lambda(k, 0) = \Pr{\wmin \ge \sqrt{n/\lambda}} = \Theta(1) n^{\frac{k}{2}(1-\beta)}
\end{align*} because $\kappa_{ij} = 1$ if $w_i,w_j \ge \sqrt{n/\lambda}$. 

If $m \ge 3$, we bound \begin{align*}
  \Lambda(k, m) &\le \Lambda(m, m) \cdot \Lambda(k-m, 0) \\
  &= \Theta(1) n^{\frac{m}{2}(1-\beta)} n^{\frac{k-m}{2}(1-\beta)} \\
  &=\Theta(1) n^{\frac{k}{2}(1-\beta)} 
\end{align*} as  desired. It thus remains the case $m = \{1,2\}$. To this end, we use that \begin{align*}
  \Lambda(k, m) &\le \Lambda(3, m) \Lambda(k-3, 0)\\
  &= \Theta(1) \Lambda(3, m) n^{\frac{k-3}{2}(1-\beta)}
\end{align*} and so it suffices to show that $\Lambda(3, m) = \Theta(1)n^{\frac{3}{2}(1-\beta)}$ for $m \in \{1,2\}$. For $m = 1$, we bound \begin{align*}
  \begin{split}
      &\Lambda(3, 1) \le \\ &\hspace{1cm} \Theta(1) \int_{0}^{\sqrt{n/\lambda}} \int_{\sqrt{n/\lambda}}^{\infty} \int_{\sqrt{n/\lambda}}^{\infty} \left( \frac{w_1^2w_2w_3}{n} \right)^{2(1-\varepsilon)} w_1^{-\beta}w_2^{-\beta}w_3^{-\beta} \d w_1 \d w_2 \d w_3
  \end{split}\\
  &=  \Theta(1) n^{-2(1-\varepsilon)} \int_{0}^{\sqrt{n/\lambda}} w^{2(1-\varepsilon)-\beta} \d w \left( \int_{\sqrt{n/\lambda}}^{\infty} w^{1-\varepsilon-\beta} \d w\right)^2\\
  &= \Theta(1) n^{-2(1-\varepsilon) + (1-\varepsilon) + \frac{1}{2}(1-\beta) + (1-\varepsilon) + (1-\beta) }\\
  &= \Theta(1) n^{\frac{3}{2}(1 -\beta)}
\end{align*} as desired. Again, this works because for sufficiently large $n$, the exponent in the integral starting at $0$ is greater than $-1$. For the case $m = 2$, we use \begin{align*}
  \begin{split}
      &\Lambda(3, 2) \le \Theta(1) \underbrace{\int_{0}^{\sqrt{n/\lambda}}\int_{0}^{w_1} \int_{\sqrt{n/\lambda}}^{n/w_1}  \prod_{1 \le i < j \le 3} \left( \frac{w_iw_j}{n} \right)^{1-\varepsilon} (w_1w_2w_3)^{-\beta} \d w_3 \d w_2 \d w_1}_{I_1} \\
      & \hspace*{.2cm} + \Theta(1) \underbrace{ \int_{0}^{\sqrt{n/\lambda}}\int_{0}^{w_1} \int_{n/w_1}^\infty \left( \frac{w_1w_2^2w_3}{n^2} \right)^{1-\varepsilon} (w_3w_2w_1)^{-\beta} \d w_3 \d w_2 \d w_1}_{I_2}.  
  \end{split}
\end{align*} 
Note that this is a valid bound since the vertices $v_1, v_2$ are
interchangeable and thus $P_2$ is at most twice as large as the two
integrals above, which is captured by the $\Theta(1)$
terms. Calculations now show that \begin{align*}
  I_1 &=  n^{-3(1-\varepsilon)} \int_{0}^{\sqrt{n/\lambda}}\int_{0}^{w_1} \int_{\sqrt{n/\lambda}}^{n/w_1}  \prod_{1 \le i < j \le 3} \left( w_iw_j\right)^{1-\varepsilon} (w_1w_2w_3)^{-\beta} \d w_3 \d w_2 \d w_1\\
  &= \Theta(1) n^{-3(1-\varepsilon)}  \int_{0}^{\sqrt{n/\lambda}} w_1^{2(1-\varepsilon) - \beta} \int_{0}^{w_1} w_2^{2(1-\varepsilon) - \beta} \int_{\sqrt{n/\lambda}}^{n/w_1} w_3^{2(1-\varepsilon) - \beta} \d w_3 \d w_2 \d w_1\\
  &\le \Theta(1) n^{-3(1-\varepsilon)}  \int_{0}^{\sqrt{n/\lambda}} w_1^{2(1-\varepsilon) - \beta} \int_{0}^{w_1} w_2^{2(1-\varepsilon) - \beta} \left( \frac{n}{w_1} \right)^{2(1-\varepsilon) - \beta + 1} \d w_2 \d w_1\\
  &= \Theta(1) n^{-(1-\varepsilon) - \beta + 1}  \int_{0}^{\sqrt{n/\lambda}} w_1^{-1} \int_{0}^{w_1} w_2^{2(1-\varepsilon) - \beta} \d w_2 \d w_1 \\
  &= \Theta(1) n^{\varepsilon - \beta}  \int_{0}^{\sqrt{n/\lambda}} w_1^{2(1-\varepsilon) - \beta} \d w_1 = \Theta(1) n^{\varepsilon - \beta + 1-\varepsilon - \frac{1}{2}\beta + \frac{1}{2} } = \Theta(1)n^{\frac{3}{2}(1-\beta)}
\end{align*}
and \begin{align*}
  I_2 &= \int_{0}^{\sqrt{n/\lambda}}\int_{0}^{w_1} \int_{n/w_1}^\infty \left( \frac{w_1w_2^2w_3}{n^2} \right)^{1-\varepsilon} (w_3w_2w_1)^{-\beta} \d w_3 \d w_2 \d w_1\\
  &\le \Theta(1) n^{-2(1-\varepsilon)} \int_{0}^{\sqrt{n/\lambda}} w_1^{1-\varepsilon - \beta} \int_{0}^{w_1} w_2^{2(1-\varepsilon) - \beta} \left( \frac{n}{w_1} \right)^{1-\varepsilon - \beta + 1} \d w_2 \d w_1\\
  &= \Theta(1) n^{-(1-\varepsilon) - \beta + 1} \int_{0}^{\sqrt{n/\lambda}} w_1^{-1} \int_{0}^{w_1} w_2^{2(1-\varepsilon) - \beta} \d w_2 \d w_1\\
  &= \Theta(1) n^{\frac{3}{2}(1-\beta)}
\end{align*}
as desired.

The above lemma does not only imply our claimed bounds on the expected number of cliques but also allows us to show that cliques of all sizes form dominantly among vertices of weight in the order of $\sqrt{n}$. 

\begin{lemma}\label{lem:weightboundhighdim}
  Let $d = \omega(\log(n))$ and $\beta \in (2,3)$. Then for all (potentially superconstant) $k$ and all $p \in (0,1)$, there is an $\varepsilon > 0$ such that \begin{align*}
    \Pr{\wmin \ge \varepsilon \sqrt{n} \mid U_k \text{ is clique}} \ge p.
  \end{align*}
\end{lemma}
  For $k > \frac{2}{3-\beta}$, we already showed the statement in \cref{lem:sqrtnboundsufficientlylargek}. For $k \le \frac{2}{3-\beta}$, we use an argument inspired by the techniques introduced in \cite{Hofstad_Leeuwaarden_Stegehuis_2021}. Note that in the following, we assume $k$ to be constant since we are in the case $k \le \frac{2}{3-\beta}$. We write \begin{align*}
    \Pr{\wmin < \varepsilon \sqrt{n} \mid U_k \text{ is clique}} = \frac{\Pr{\wmin < \varepsilon \sqrt{n} \cap U_k \text{ is clique}}}{\Pr{U_k \text{ is clique}}}
  \end{align*} and proceed by showing that this probability can be made arbitrarily small by choosing $\varepsilon$ large enough.
  By considering the event that $\wmin \ge \sqrt{n/\lambda}$, we immediately get $\Pr{U_k \text{ is clique}} = \Omega\left( n^{\frac{k}{2}(1-\beta)} \right)$. To show our statement, we proceed by showing that \begin{align*}
    \Pr{\wmin < \varepsilon \sqrt{n} \cap U_k \text{ is clique}} \le f(\varepsilon) n^{\frac{k}{2}(1- \beta)}  
  \end{align*}
  for a function $f$ that tends to $0$ as $\varepsilon \rightarrow 0$. To this end, note that by \cref{thm:chung_lu_eq}, there is a function $\varepsilon = \varepsilon(n) = o(1)$ such that \begin{align*}
    \Pr{U_k \text{ is clique}} &\le \Lambda(k)\\
    &=  \int_{0}^\infty \cdots \int_{0}^\infty \rho(w_1) \cdots \rho(w_k) \prod_{1 \le i < j \le k} \left( \frac{\kappa_{ij}}{n} \right)^{1-\varepsilon} \d w_k \ldots \d w_1.
  \end{align*} Substituting $w_i = y_i \sqrt{n/\lambda}$ and recalling that $\rho(w) = c w^{-\beta}$ for some constant $c$ yields\begin{align*}
    &\Lambda(k) = \\
    & \hspace{.1cm} \int_{0}^\infty \cdots \int_{0}^\infty c^k (y_1 \cdots y_k)^{-\beta} \sqrt{\frac{n}{\lambda}}^{-k\beta}  \prod_{1 \le i < j \le k} \left( \frac{\min\{ n, \lambda n y_iy_j \}}{n} \right)^{1-\varepsilon} \sqrt{\frac{n}{\lambda}}^k \d y_k \ldots \d y_1\\
    &= \left( \frac{n}{\lambda} \right)^{\frac{k}{2}(1-\beta)} \underbrace{\int_{0}^\infty \cdots \int_{0}^\infty c^k (y_1 \cdots y_k)^{-\beta}  \prod_{1 \le i < j \le k} \left( \min\{ 1, \lambda y_iy_j \} \right)^{1-\varepsilon} \d y_k \ldots \d y_1}_{\coloneqq I}. 
  \end{align*} By \cref{lem:integrationbound}, we have $\Lambda(k) = \mathcal{O}\left( n^{\frac{k}{2}(1-\beta)} \right)$ and thus $I < \infty$. Now \begin{align*}
    &\Pr{U_k \text{ is clique} \cap \wmin < \varepsilon\sqrt{n}} \le \\
    & \hspace{1cm} k \int_{0}^{\varepsilon \sqrt{n}}\int_{0}^\infty \cdots \int_{0}^\infty \rho(w_1) \cdots \rho(w_k) \prod_{1 \le i < j \le k} \left( \frac{\kappa_{ij}}{n} \right)^{1-\varepsilon} \d w_k \ldots \d w_1.
  \end{align*} Using the same subsitution as above, this yields \begin{align*}
    &\Pr{U_k \text{ is clique} \cap \wmin < \varepsilon\sqrt{n}} \le \\
    & \hspace{.1cm} k \left( \frac{n}{\lambda} \right)^{\frac{k}{2}(1-\beta)} \underbrace{\int_{0}^{\varepsilon \sqrt{\lambda}} \cdots \int_{0}^\infty c^k (y_1 \cdots y_k)^{-\beta}  \prod_{1 \le i < j \le k} \left( \min\{ 1, \lambda y_iy_j \} \right)^{1-\varepsilon} \d y_k \ldots \d y_1}_{\coloneqq J(\varepsilon)}.
  \end{align*} As $\lim_{\varepsilon \rightarrow \infty} J(\varepsilon) = I < \infty$, $J(\varepsilon)$ can be made arbitrarily small by choosing $\varepsilon > 0$ small enough, which yields the desired statement.
\end{proof}

\subsection{Bounds on $\Expected{K_k}$ and $\omega(G)$}\label{sec:highdimboundscliquenumber}

We use the findings from section \cref{sec:highdimbounds} to prove the third column of \cref{tab:clique-number} and \cref{tab:expectedcliques}, respectively. We treat the cases $2 < \beta <3$, $\beta  = 3$, and $\beta >3 $ separately.

For $\beta \in (2, 3)$, we show that if $d = \omega(\log(n))$, then
the expected number of cliques is
$n^{\frac{k}{2}(3-\beta)} \Theta(k)^{-k}$ for all $k \ge 3$ just like
in the IRG model.  This is despite the fact that the probability that
a clique forms among vertices of constant weight is still
significantly higher than in the IRG model if
$\log(n) \ll d \ll \log^{3/2}(n)$. The reason for this is that the
probability of forming a clique among vertices of weight close to
$\sqrt{n}$ behaves like that of the IRG model if $d = \omega(\log(n))$
and because cliques forming among these high-weight vertices dominate
all the others.

\begin{theorem}
  Let $G = \GIRG(n, \beta, w_0, d)$ be a standard GIRG with
  $\beta < 3$ and $d = \omega(\log(n))$.  Then,
  $\mathbb{E}[K_k] = n^{\frac{k}{2}(3-\beta)} \Theta(k)^{-k}$ for
  $k \ge 3$ and $\omega(G) = \Theta(n^{(3 - \beta)/2})$ a.a.s.
\end{theorem}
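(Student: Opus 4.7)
My plan is to derive matching upper and lower bounds on $\Expected{K_k}$ and on $\omega(G)$. Both lower bounds rest on the same ``heavy core'' observation. Let $N_H$ denote the number of vertices of weight at least $\sqrt{n/\lambda}$. The Pareto tail gives $\Pr{w_v \ge \sqrt{n/\lambda}} = (w_0\sqrt{\lambda/n})^{\beta-1} = \Theta(n^{(1-\beta)/2})$, so $\Expected{N_H} = \Theta(n^{(3-\beta)/2})$, and \cref{thm:chernoff-hoeffding} yields $N_H = \Theta(n^{(3-\beta)/2})$ a.a.s. Any two such vertices $u,v$ satisfy $w_uw_v \ge n/\lambda$ and hence $t_{uv} = (w_uw_v/(\tau n))^{1/d} \ge 1/2$; since the $L_\infty$-diameter of $\mathbb{T}^d$ equals $1/2$, they are adjacent with probability~$1$, exactly as in the IRG. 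Thus the heavy core is automatically a clique, giving $\omega(G) \ge \Theta(n^{(3-\beta)/2})$ a.a.s.\ and $\Expected{K_k} \ge \binom{n}{k}\Pr{\text{all $k$ vertices are heavy}} = n^{k(3-\beta)/2}\Theta(k)^{-k}$.

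For the upper bound on $\omega(G)$ I would copy the Markov argument of \cref{thm:clique-number-first-row} verbatim. Setting $k = n^{(3-\beta)/2}$ (which exceeds $2/(3-\beta)$ for large $n$) in $\Pr{\omega(G) \ge k} \le \Expected{K_k}$, together with the bound $\Expected{K_k} \le n^{k(3-\beta)/2}(c_1 k)^{-k}$ coming directly from \cref{cor:upperbound} (which is dimension-free), gives $c_1^{-n^{(3-\beta)/2}} = o(1)$. The same \cref{cor:upperbound} also delivers the matching upper bound $\Expected{K_k} \le n^{k(3-\beta)/2}\Theta(k)^{-k}$ whenever $k > 2/(3-\beta)$, again with no use of the dimension hypothesis.

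The only range that genuinely requires the high-dimensional assumption is the residual $3 \le k \le 2/(3-\beta)$, which is non-empty only as $\beta$ approaches $3$. Here the star bound of \cref{cor:upperbound} is too loose, and I would invoke \cref{thm:chung_lu_eq},
\begin{align*}
  \Pr{U_k \text{ is a clique}\mid \{\kappa\}^{(k)}} \le (1+o(1))\prod_{i<j}\left(\frac{\kappa_{ij}}{n}\right)^{1 - O(\log(n)/d)},
\end{align*}
integrate against the Pareto weight distribution of $(w_1, \dots, w_k)$, and split according to whether all $k$ vertices lie in the heavy core. In the all-heavy event every $\kappa_{ij}/n$ equals $1$, the perturbed exponent is inert, and this piece contributes exactly $\Theta(n^{-k(\beta-1)/2})$ to $q_k$, matching the lower bound after multiplication by $\binom{n}{k}$. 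In the complementary event the IRG-type moment integral of \cref{lem:starbound}, applied with truncation at $w_+ = \sqrt{n/\lambda}$, is already of strictly smaller order, and the multiplicative $n^{O(k^2\log(n)/d)} = n^{o(1)}$ slack coming from the perturbed exponents is absorbed there.

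The anticipated main obstacle is precisely this exponent slack. At minimum weight $w_0$ the factor $(\kappa_{ij}/n)^{-O(\log(n)/d)}$ can inflate a single edge by up to $n^{O(\log(n)/d)}$; for $d = \omega(\log(n))$ and constant $k$ this only amounts to an aggregate $n^{o(1)}$ factor, but obtaining the sharp constant $\Theta(k)^{-k}$ (rather than $n^{o(1)}\Theta(k)^{-k}$) requires routing the perturbation only into the strictly sub-dominant light-weight piece of the split and exploiting that in the dominant heavy-core piece the clique probability is exactly $1$ with no perturbation at all.
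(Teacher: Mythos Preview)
Your overall architecture coincides with the paper's: heavy-core lower bounds for both $\Expected{K_k}$ and $\omega(G)$; the dimension-free star bound of \cref{cor:upperbound} for $\omega(G)$ and for $\Expected{K_k}$ when $k>2/(3-\beta)$; and \cref{thm:chung_lu_eq} for the residual constant range $3\le k\le 2/(3-\beta)$. The difference lies entirely in how you propose to close the last case, and there your plan has a genuine gap.

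Your two-way split (all-heavy versus at least one light vertex) rests on the assertion that the light piece is of \emph{strictly smaller order}, which is false. For constant $k\le 2/(3-\beta)$ one has $1-k>\tfrac{k}{2}(1-\beta)$, so the star bound of \cref{lem:starbound} truncated at $\sqrt{n/\lambda}$ gives $\Theta(n^{1-k})$, which \emph{dominates} the heavy-core contribution $\Theta(n^{k(1-\beta)/2})$; invoking \cref{lem:starbound} here is thus useless. And if instead you integrate the full clique product $\prod_{i<j}(\kappa_{ij}/n)^{1-\varepsilon}$ over all-light weights, the answer is again $\Theta(n^{k(1-\beta)/2})$, the \emph{same} order as the heavy piece. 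Consequently your proposed mechanism for absorbing the $n^{o(1)}$ slack (``route it into a sub-dominant piece'') cannot work as stated.

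What actually happens, and what the paper exploits, is that in the integral
\[
n^{-\binom{k}{2}(1-\varepsilon)}\Bigl(\int_{w_0}^{\sqrt{n/\lambda}} w^{(k-1)(1-\varepsilon)-\beta}\,dw\Bigr)^{k}
\]
the exponent $(k-1)(1-\varepsilon)-\beta>-1$ for large $n$, so the mass sits at the upper endpoint $w\approx\sqrt{n}$; the two $\varepsilon$-contributions then cancel \emph{exactly} and one lands on $\Theta(1)\,n^{k(1-\beta)/2}$ with no residual $n^{o(1)}$. The paper carries this out via a finer split by the number $m$ of light vertices (needed because $\kappa_{ij}=\min\{\lambda w_iw_j,n\}$ forces different treatment of heavy--heavy, heavy--light, and light--light pairs), reducing $m\ge 3$ to the product of an $m$-light and a $(k-m)$-heavy clique and handling $m\in\{1,2\}$ by direct integration. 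Replacing your last paragraph with this cancellation observation and the finer case split would complete the argument.
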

\begin{proof}
Observe that \cref{cor:upperbound} and \cref{cor:lowerbound} imply our desired bounds if $k > \frac{2}{3-\beta}$. Otherwise, just considering the event that $\wmin \ge \sqrt{n/\lambda}$, gives us the desired lower bound on $q_k$ and thus on $\mathbb{E}[K_k]$. To get an upper bound, note that $q_k \le \Lambda(k)$ by \cref{thm:chung_lu_eq}, and that $\Lambda(k) = \mathcal{O}\left(n^{\frac{k}{2}(3-\beta)}\right)$ by \cref{lem:integrationbound}, which directly implies our claimed bounds on $\mathbb{E}[K_k]$.
To bound $\omega(G)$, now the same argumentation as in \Cref{thm:clique-number-first-row} applies.
\end{proof}

We note that the phase transition at $k = \frac{2}{3-\beta}$ is not
present anymore.  We continue with the case where $\beta = 3$.

\begin{theorem}
  Let $G = \GIRG(n, \beta, w_0, d)$ be a standard GIRG with
  $\beta = 3$ and $d = \omega(\log(n))$.  Then,
  $\omega(G) = \mathcal{O}(1)$.
\end{theorem}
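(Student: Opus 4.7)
The plan is to apply Markov's inequality, $\Pr{\omega(G) \ge k} \le \Expected{K_k}$, and bound the right-hand side using the high-dimensional clique estimate from \cref{thm:chung_lu_eq}. For $d = \omega(\log(n))$ and any constant $k$,
\begin{align*}
  \Pr{U_k \text{ is clique} \mid \{\kappa\}^{(k)}} \le (1+o(1)) \prod_{i<j} \left(\frac{\kappa_{ij}}{n}\right)^{1-\epsilon},
\end{align*}
with $\epsilon = \mathcal{O}(\log(n)/d) = o(1)$. I then integrate this over the Pareto weight density at $\beta = 3$, following exactly the splitting strategy used in the preceding proof for $\beta < 3$: partition the sum over $U_k$ by the number of vertices whose weight exceeds $\sqrt{n/\lambda}$, use connection probability $1$ for edges between such heavy vertices, and treat the remaining edges with the IRG-type estimate above. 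A direct computation of $\int_{w_0}^{\sqrt{n/\lambda}} w^{(k-1)(1-\epsilon)-3}\,\d w$, combined with the binomial approximation $\binom{n}{k} = n^k\Theta(k)^{-k}$ from \cref{lem:binomapprox}, yields for each constant $k \ge 4$:
\begin{align*}
  \Expected{K_k} \le \frac{(C\lambda w_0^2)^k}{(k-3)^k\, k!}(1+o(1)),
\end{align*}
for an absolute constant $C$.

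By Stirling's approximation, the right-hand side has order $(C'/k^2)^k$ and thus tends to $0$ as $k \to \infty$. Consequently, for every $\delta > 0$ one may fix a constant $K = K(\delta, \lambda, w_0)$ with $\Expected{K_K} < \delta$, yielding $\Pr{\omega(G) \ge K} < \delta$. Applying this with $\delta = \delta(n) \to 0$ sufficiently slowly (so that $K(\delta(n))$ grows only as slowly as we wish in $n$) delivers the asymptotic bound $\omega(G) = \mathcal{O}(1)$.

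The main obstacle is the boundary nature of $\beta = 3$: the second moment of the Pareto distribution diverges logarithmically, so the weight integrals in the above calculation sit exactly at the edge of convergence. The perturbed exponent $1-\epsilon$ from \cref{thm:chung_lu_eq} must be tracked carefully through the integration so that the logarithmic corrections arising at the truncation point $\sqrt{n/\lambda}$ are absorbed into the $(1+o(1))$ factor rather than producing spurious $n$-dependent terms. A separate but routine Poisson-tail argument handles cliques involving vertices with weight above $\sqrt{n/\lambda}$: the expected size of this heavy core is a constant for $\beta = 3$, so its contribution to $\Expected{K_k}$ stays within the bound stated above.
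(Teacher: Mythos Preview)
Your approach is correct but takes a genuinely different route from the paper's. The paper does \emph{not} bound $\Expected{K_k}$ unconditionally. Instead it conditions on the event $\{w_{\max}\le c\sqrt{n}\}$ (which has probability $1-\mathcal{O}(c^{-2})$ at $\beta=3$), computes $\Expected{K_4\mid w_{\max}\le c\sqrt{n}}\le \Theta(1)c^4$ via a single ``all-light'' integral, applies Markov to get $\Pr{K_4\ge t}\le \Theta(1)c^4/t+\mathcal{O}(c^{-2})$, optimizes $c=t^{1/5}$, and finally uses the combinatorial implication $\omega(G)\ge k\Rightarrow K_4\ge\binom{k}{4}$ to transfer the bound to the clique number. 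This sidesteps the heavy/mixed case analysis entirely: conditioning on the maximum weight kills all heavy vertices in one stroke, and only $K_4$ ever needs to be estimated. Your approach, by contrast, bounds $\Expected{K_k}$ for every fixed $k\ge4$ via the full $m$-splitting, which is longer but yields an explicit bound on each $\Expected{K_k}$.

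Two points in your write-up need tightening. First, the displayed bound $\frac{(C\lambda w_0^2)^k}{(k-3)^k k!}(1+o(1))$ is only the \emph{all-light} contribution; the heavy-core and mixed contributions are of the larger order $\frac{C^k}{k!}$ (no extra $(k-3)^k$ in the denominator), so your ``stays within the bound stated above'' is false as written, though the corrected bound still tends to $0$. The mixed cases with $m\in\{1,2\}$ light vertices are also not covered by a ``Poisson-tail argument'' and need the same integration as the all-light case; at $\beta=3$ they sit on a boundary where the $\beta<3$ proof's treatment (using only two light--heavy edges) would give spurious $\log n$ factors, so you must use all $k-1$ light--heavy connections for $k\ge4$, after which the $n$-exponent again cancels exactly despite the $1-\varepsilon$ perturbation. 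Second, your final sentence about taking $\delta(n)\to 0$ is confused: if $K(\delta(n))$ grows you are not proving $\omega(G)=\mathcal{O}(1)$. The correct conclusion from $\Expected{K_K}\le f(K)(1+o(1))$ with $f(K)\to 0$ is simply tightness: for every $\varepsilon>0$ there is a \emph{fixed} constant $K$ with $\limsup_n\Pr{\omega(G)\ge K}\le f(K)<\varepsilon$, which is precisely the meaning of $\omega(G)=\mathcal{O}(1)$ here.
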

\begin{proof}
  
We show that the number of cliques of size $k = 4$ is such that for every $\varepsilon > 0$ there is a constant $c > 0$ such that \begin{align*}
   \Pr{K_4 \ge c} \le \varepsilon.
\end{align*} That is, the probability that $K_4 \ge c$ can be made arbitrarily small by choosing $c$ large enough. This fact is sufficient to show that the clique number is $\Theta(1)$.

We start by observing that $\beta = 3$ implies that the maximum weight $w_{max}$ is a.a.s. in the order $\sqrt{n}$. More precisely, denoting by $X_w$ the number of vertices with weight at least $w$, we get by Markov's inequality that for every $w$, \begin{align}\label{eq:maxweight}
    \Pr{w_{max} \ge w} = \Pr{X_w \ge 1} \le \Expected{X_w} = \Theta(1) n w^{1-\beta} = \Theta(1) n w^{-2}.
\end{align} Thus, for every $c \ge 0$, we have \begin{align}\label{eq:weightdecay}
    \Pr{w_{max} \ge c \sqrt{n}} = \mathcal{O}(1)c^{-2}.
\end{align} With this and Markov's inequality, we may bound for every $t, c \ge 0$ \begin{align}\label{eq:markovbound}
    \Pr{K_4 \ge t} &\le \Pr{K_4 \ge t \mid w_{max} \le c \sqrt{n}} + \Pr{w_{max} \ge c\sqrt{n}} \nonumber \\
    &\le \frac{\Expected{K_4 \mid w_{max} \le c \sqrt{n}}}{t} + \Pr{w_{max} \ge c\sqrt{n}}.
\end{align} To bound $\Expected{K_4 \mid w_{max} \le c \sqrt{n}}$, we note that a random weight $w_i$ fulfills \begin{align*}
    \Pr{w_i \le x \mid w_{max} \le c \sqrt{n}} &= \frac{\Pr{\bigcap_{j} w_j \le c \sqrt{n} \cap w_i \le x}}{\Pr{\bigcap_{j} w_j \le c \sqrt{n}}}\\
    &= \frac{\Pr{w_i \le x}}{\Pr{w_i \le c \sqrt{n}}}
\end{align*} and since $\Pr{w_i \le c \sqrt{n}} = 1 - o(1)$, we get that the density of $w_i$ is $\rho_{w_i}(x) = c' x^{-\beta}$ for some $c' \ge 0$ independent of $x$. Using that, we use \cref{thm:chung_lu_eq} to bound \begin{align*}
    \begin{split}
        &\Pr{U_k \text{ is clique } \mid w_k \le c\sqrt{n}} \le \\ &\hspace{1cm} \Theta(1) \int_{w_0}^{c\sqrt{n}} \cdots \int_{w_0}^{c\sqrt{n}}  \prod_{1 \le i < j \le k}\left( \frac{\kappa_{ij}}{n} \right)^{1- \varepsilon} w_1^{-\beta} \cdots w_k^{-\beta} \d w_1 \ldots \d w_k 
    \end{split}\\
    &= \Theta(1) n^{-\binom{k}{2}(1-\varepsilon)} \left( \int_{w_0}^{c\sqrt{n}} w^{(k-1)(1-\varepsilon)-\beta} \d w \right)^k
\end{align*} for some function $\varepsilon(n) = o(1)$. Because $k \ge 4$ and $\beta = 3$, we observe that the exponent $(k-1)(1-\varepsilon)-\beta$ is greater than $-1$ for sufficiently large $n$, and hence, the above integral evaluates to \begin{align*}
    \Pr{U_k \text{ is clique } \mid w_k \le c\sqrt{n}} = c^{k((k-1)(1-\varepsilon) + 1 - \beta)} \Theta(1) n^{\frac{k}{2}(1-\beta)}. 
\end{align*} For $k = 4$, we obtain \begin{align*}
    \Pr{U_k \text{ is clique } \mid w_k \le c\sqrt{n}} \le \Theta(1) c^4 n^{-4}
\end{align*} and accordingly \begin{align*}
    \Expected{K_4 \mid w_{max} \le c \sqrt{n}} = \binom{n}{4} \Pr{U_k \text{ is clique } \mid w_k \le c\sqrt{n}} \le \Theta(1) c^4.  
\end{align*} By (\ref{eq:markovbound}) and (\ref{eq:weightdecay}), this implies \begin{align*}
    \Pr{K_4 \ge t} = \Theta(1)c^4t^{-1} + \mathcal{O}(1)c^{-2}.
\end{align*} Setting $c = t^{1/5}$ yields $\Pr{K_4 \ge t} = \mathcal{O}(1)(t^{-1/5} + t^{-2/5})$ and shows that the probability that $K_4 \ge t$ can be made arbitrarily small by increasing $t$. To bound the clique number, we note that the existence of a clique of size $k$ implies the existence of $\binom{k}{4}$ cliques of size $4$ and so, \begin{align*}
    \Pr{\omega(G) \ge k} \le \Pr{K_4 \ge \binom{k}{4}},
\end{align*} which can be made arbitrarily small by choosing $k$ large enough. Hence the probability that the clique number grows as any superconstant function $f(n) = \omega(1)$ is in $o(1)$, which shows that the clique number is in $\mathcal{O}(1)$ a.a.s.

\end{proof}

Finally, we deal with the case where $\beta > 3$, where we show that,
in this case, there are no cliques of size $4$ or larger
a.a.s. 
\begin{theorem}
  Let $G = \GIRG(n, \beta, w_0, d)$ be a standard GIRG with
  $\beta > 3$ and $d = \omega(\log(n))$.  Then,
  $\mathbb{E}[K_k] = o(1)$ for $k \ge 4$ and, thus, $\omega(G) \le 3$
  a.a.s.
\end{theorem}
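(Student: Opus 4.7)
The approach mirrors the proof of the $\beta \in (2,3)$ case in the preceding theorem but exploits the faster weight decay to force the bound to vanish. Since $d = \omega(\log n)$, \cref{thm:chung_lu_eq} supplies a function $\varepsilon = \varepsilon(n) = \mathcal{O}(\log(n)/d) = o(1)$ with
\begin{align*}
  \Pr{U_k \text{ is clique} \mid \{\kappa\}^{(k)}} \le (1+o(1)) \prod_{1 \le i < j \le k} \left(\frac{\kappa_{ij}}{n}\right)^{1-\varepsilon}.
\end{align*}
I then split $\cliqueprobk$ exactly as in the $\beta < 3$ proof, according to how many of the weights $w_1,\ldots,w_k$ exceed $\sqrt{n/\lambda}$, yielding a ``bounded'' term where $w_k \le \sqrt{n/\lambda}$, a ``heavy'' term where $w_1 \ge \sqrt{n/\lambda}$, and mixed contributions $P_{k,m}$ for $1 \le m \le k-1$.

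For the bounded term, integrating against the Pareto densities $\rho(w) \propto w^{-\beta}$ yields a factor $\bigl(\int_{w_0}^{\sqrt{n/\lambda}} w^{(k-1)(1-\varepsilon) - \beta}\, dw\bigr)^k$. Two sub-regimes arise. When $k > \beta$, the inner exponent lies above $-1$, so the integral equals $\Theta(n^{((k-1)(1-\varepsilon) - \beta + 1)/2})$, and after multiplying by $\binom{n}{k} \cdot n^{-\binom{k}{2}(1-\varepsilon)}$ one obtains $n^{k(3-\beta)/2 + o(1)} = o(1)$. When $k \le \beta$, the integral is $\Theta(1)$, giving $n^{k(3-k)/2 + o(1)} = o(1)$ since $k(3-k)/2 \le -2$ for $k \ge 4$. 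The heavy term contributes at most $\binom{n}{k}\Pr{w_1 \ge \sqrt{n/\lambda}} = \mathcal{O}(n^{k(3-\beta)/2}) = o(1)$. Each mixed term $P_{k,m}$ is handled by the same case analysis as in the $\beta<3$ proof, bounding the clique event by a clique among the $m$ low-weight vertices together with the connections of one such vertex to the $k-m$ high-weight ones; every such term falls into one of the two regimes above and is $o(1)$.

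Summing the $\mathcal{O}(1)$ many contributions gives $\Expected{K_k} = o(1)$ for every $k \ge 4$, and Markov's inequality then yields $\Pr{\omega(G) \ge 4} \le \Expected{K_4} = o(1)$, so $\omega(G) \le 3$ a.a.s. The main obstacle is the ``small-$k$'' regime $k \le \beta$ that is absent from the $\beta < 3$ analysis: there the Pareto integrals merely converge rather than grow polynomially, so one cannot reuse the $n^{k(3-\beta)/2}$ asymptotic and must separately verify that the alternative exponent $k(3-k)/2$ is sufficiently negative. One must also confirm that the $\varepsilon = o(1)$ perturbation from \cref{thm:chung_lu_eq} does not flip either decisive sign, which holds because $\beta > 3$ and $k \ge 4$ are both strictly bounded away from $3$.
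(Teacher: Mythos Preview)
Your route is correct but genuinely different from the paper's, and somewhat more laborious. The paper does \emph{not} repeat the heavy/light case split from the $\beta<3$ argument. Instead it exploits a feature specific to $\beta>3$: for any $\alpha\in(1/(\beta-1),\,1/2)$ one has $\Pr{w_{\max}\ge n^{\alpha}}=o(1)$, so it suffices to bound $\Expected{K_4\mid w_{\max}\le n^{\alpha}}$. With this conditioning there are no ``heavy'' vertices at all, and a single integral with upper limit $n^{\alpha}$ (your ``bounded'' term, but with $n^{\alpha}$ in place of $\sqrt{n/\lambda}$) already gives $o(1)$; Markov plus the tail bound on $w_{\max}$ then yield $\omega(G)\le 3$ a.a.s. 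This sidesteps the mixed terms entirely.

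Your approach, by contrast, bounds $\Expected{K_k}$ unconditionally, which is slightly stronger (the paper's proof as written really only gives $\Pr{K_4\ge 1}=o(1)$, not the unconditional expectation). The price is that the mixed terms $P_{k,m}$ for $m\in\{1,2\}$ are not as automatic as you suggest. If ``one such vertex'' is an arbitrary light vertex, the star bound can be too weak: for $k=4$, $m=2$ and $\beta$ just above $3$, centering the star at the \emph{larger} light vertex leaves the edges from the smaller light vertex to the heavy ones unused, and the resulting contribution to $\Expected{K_4}$ is $n^{5-3\beta/2+o(1)}$, which is \emph{not} $o(1)$ for $\beta<10/3$. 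The fix is to always center the star at the \emph{minimum-weight} light vertex (equivalently, to follow the paper's $\beta<3$ treatment of $m=2$, which uses all light-to-heavy edges rather than a one-sided star); with that choice the integrals do fall into your two regimes and give $o(1)$. So the sketch is salvageable, but ``same case analysis as in the $\beta<3$ proof'' hides a non-trivial verification that you should make explicit.
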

\begin{proof}
  We use a similar strategy as in the above paragraph. In analogy to
  \cref{eq:maxweight}, we now have \begin{align*} \Pr{w_{max} \ge
      n^{\alpha}} \le \Theta(1) n^{1 - (1-\beta)\alpha},
\end{align*} which is $o(1)$ if $\alpha > \frac{1}{\beta-1}$. For some $\alpha$ in the range $1/2 > \alpha > \frac{1}{\beta-1}$, we get \begin{align*}
    \begin{split}
        &\Pr{U_k \text{ is clique } \mid w_k \le n^\alpha} \le \\ &\hspace{1cm} \Theta(1) \int_{w_0}^{n^\alpha} \cdots \int_{w_0}^{n^\alpha}  \prod_{1 \le i < j \le k}\left( \frac{\kappa_{ij}}{n} \right)^{1- \varepsilon} w_1^{-\beta} \cdots w_k^{-\beta} \d w_1 \ldots \d w_k 
    \end{split}\\
    &= \Theta(1) n^{-\binom{k}{2}(1-\varepsilon)} \left( \int_{w_0}^{n^\alpha} w^{(k-1)(1-\varepsilon)-\beta} \d w \right)^k\\
    &= \Theta(1) n^{\left(\alpha - \frac{1}{2} \right) k(k-1)(1-\varepsilon) + \alpha k (1 - \beta) } = o\left( n^{\alpha k (1 - \beta)} \right).
\end{align*} Accordingly, \begin{align*}
    \Expected{K_4 \mid w_{max} \le c \sqrt{n}} = \binom{n}{4} \Pr{U_k \text{ is clique } \mid w_k \le c\sqrt{n}} = \Theta(1) n^{4 + 4 \alpha (1-\beta)} = o(1).  
\end{align*} as $\alpha(1-\beta) < -1$. By Markov's inequality this implies \begin{align*}
    \Pr{K_4 \ge 1} \le \Expected{K_4 \mid w_{max} \le n^{\alpha}} + \Pr{w_{max} \ge n^\alpha} = o(1)
\end{align*} as desired.
\end{proof}

\section{Concentration Bounds}\label{sec:concentration-bounds}

We use the insights gained so far to obtain the following concentration bounds on the total number of cliques that hold for almost all parameter regimes and clique sizes we consider. 

\concentration*
\begin{proof}
  We start with the regimes where cliques dominantly form among vertices of weight $\sqrt{n}$. Recall that this covers case (ii), and case (i) if we additionally assume $k < \frac{2}{3-\beta}$. We write \begin{align*}
    K_k = K_k(M_\varepsilon^{(-)}(\sqrt{n})) + K_k(\overline{M_\varepsilon^{(-)}(\sqrt{n})}).
  \end{align*} Since we are in the regime where cliques form dominantly in $M_\varepsilon(\sqrt{n})$, as shown in \Cref{sec:cliquesbyvertexweightslowdim} and \Cref{sec:clique-weight-high-dim}, there is some function $\varepsilon = o(1)$ such that \begin{align*}
    \Expected{K_k(M_\varepsilon^{(-)}(\sqrt{n}))} = (1-o(1))\Expected{K_k} \text{ and } \Expected{K_k(\overline{M_\varepsilon^{(-)}(\sqrt{n})})} = o(1) \Expected{K_k}.
  \end{align*} Furthermore, by \cref{lem:concentration-bound-low-weight} and Chebyshev's inequality, we have \begin{align*}
    \Pr{ \left| \frac{K_k(M_\varepsilon^{(-)}(\sqrt{n}))}{\Expected{K_k}} - 1 \right| \ge \delta } &\le (1+o(1))\delta^{-2} \sum_{\ell = 1}^k \mathcal{O}\left( \frac{\varepsilon^{1 - \beta} k^2}{n^{(3-\beta)/2}} \right)^{\ell}\\
    &= \mathcal{O} \left( \frac{\delta^{-2} \varepsilon^{1-\beta} k^2}{n^{(3-\beta)/2}} \right).
  \end{align*} If we choose an $\varepsilon = o(1)$ that decays sufficiently slowly, this tends to $0$ for every $\delta > 0$ due to our assumption on $k$. Furthermore, \begin{align*}
    K_k(\overline{M_\varepsilon^{(-)}(\sqrt{n})})/\Expected{K_k} \rightarrow_p 0
  \end{align*} by Markov's inequality and the proof of this case is finished.

  On the other hand, if cliques dominantly form among low-weight vertices (this is the case for case (iii), and case (1) for $k > \frac{2}{3-\beta}$), write \begin{align*}
    K_k = K_k(M_\varepsilon^{(+)}(w)) + K_k(\overline{M_\varepsilon^{(+)}(w)})
  \end{align*} for some $w = e^{\Theta(1)d} k^{\frac{1}{2-\beta}}$ and note that this covers covers the cases (iii) and (iv). By \cref{lem:generalvarbound}, \begin{align*}
    \Pr{ \left| \frac{K_k(M_\varepsilon^{(+)}(w))}{\Expected{K_k}} - 1 \right| \ge \delta } \le (1+o(1))\delta^{-2} \frac{\mathcal{O}(w/\varepsilon)^k}{\Expected{K_k}}.
  \end{align*} Note that due to $k = o\left( \frac{\log(n)}{\log\log(n) + d} \right) = n^{o(1)}$, and $w = n^{\Theta(1)d / \log(n)} k ^{\frac{1}{2-\beta}}$, we have $\mathcal{O}(w)^k = n^{o(1)}$, and further $\Expected{K_k} \ge n^c$ for some constant $c > 0$ (cf. \Cref{thm:cliquenumberlowerbound}). Accordingly, \begin{align*}
    \Pr{ \left| \frac{K_k(M_\varepsilon^{(+)}(w))}{\Expected{K_k}} - 1 \right| \ge \delta } \le (1+o(1))\delta^{-2} \varepsilon^{-k} n^{-c + o(1)}.
  \end{align*} If we choose an $\varepsilon = o(1)$ that decays sufficiently slowly, the above term is $o(1)$. Moreover -- as above -- for $\varepsilon = o(1)$, we have \begin{align*}
    \Expected{K_k(M_\varepsilon^{(+)}(w))} = (1-o(1))\Expected{K_k} \text{ and } \Expected{K_k(\overline{M_\varepsilon^{(+)}(w)})} = o(1) \Expected{K_k}
  \end{align*}  and the rest of the proof is analogous to the previous case.
\end{proof}

\section{Asymptotic Equivalence with IRGs}\label{sec:asympeq}
We continue by studying the infinite-dimensional limit of our
model, i.e., the case where $n$ is fixed and $d$ goes to infinity. We
prove that in this situation, the GIRG model becomes in a strong sense
equivalent to the non-geometric IRG model. That is, we prove that the
total variation distance of the distribution over all possible graphs
with $n$ vertices goes to $0$ as $d \rightarrow \infty$. We prove the
following

\asymptoticchunglueq*

We split the proof of this theorem by considering the case of an
$L_p$-norm for $1 \le p < \infty$ and the case of $L_\infty$-norm
separately. Our investigations in \cref{apx:torusvscube} further show
why RGGs on the torus become equivalent to non-geometric models as $d$
tends to infinity and why this is not the case if we use the hypercube
instead as previously observed in
\cite{Erba_Ariosto_Gherardi_Rotondo_2020,dc-r-02}.

\subsection{Equivalence for $L_p$-norms with $1 \le p < \infty$}
Our argument builds upon a multivariate central-limit-theorem similar
to the one used by Devroye et al. who establish a similar statement
for SRGGs~\cite{Devroye_Gyoergy_Lugosi_Udina_2011} .

Before starting the proof, we introduce some necessary auxiliary
statements. Our argumentation builds upon the following Berry-Esseen
theorem introduced in~\cite{Raic_2019}.

\begin{theorem}[Theorem 1.1 in \cite{Raic_2019}]\label{thm:centrallimit}
    Let $Z_1, \ldots, Z_d$ be independent zero-mean random variables taking values in $\mathbb{R}^m$. Let further $Z \coloneqq \sum_{i=1}^d Z_i$ and assume that the covariance matrix of $Z$ is the identity matrix. Let $X \in \mathbb{R}^m$ be a random variable following the standard $m$-variate normal distribution $\mathcal{N}(0, I)$. Then for any convex set $A \subseteq \mathbb{R}^m$, we have
    \begin{equation*}
        \abs{\Pr{Z\in A} - \Pr{X \in A} } \le (42d^{1/4} + 16) \sum_{i=1}^d\Expected{\norm{Z_i}^3}
    \end{equation*}
    where $\norm{Z_i}$ is the $L_2$-norm of $Z_i$.
\end{theorem}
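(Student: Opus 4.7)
The plan is to establish the bound via Stein's method of multivariate normal approximation, combined with Gaussian smoothing of the indicator of the convex set $A$.

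First, I would reduce the task to estimating $\bigl|\Expected{h(Z)} - \Expected{h(X)}\bigr|$ for sufficiently smooth test functions $h$. For each such $h$, the Stein equation
$$\Delta f(x) - \langle x, \nabla f(x)\rangle = h(x) - \Expected{h(X)}$$
has an explicit solution via the Ornstein--Uhlenbeck semigroup, and the derivatives $\|D^k f\|_\infty$ admit standard bounds in terms of the smoothness of $h$. Using the independence of the $Z_i$ together with $\Expected{Z_i}=0$ and identity covariance of $Z$, a Lindeberg-style replacement followed by a third-order Taylor expansion of $f$ around $Z - Z_i$ produces cancellations in the first- and second-order terms, leaving the estimate
$$\bigl|\Expected{f(Z)} - \Expected{f(X)}\bigr| \;\le\; c \, \|D^3 f\|_\infty \sum_{i=1}^d \Expected{\|Z_i\|^3}.$$

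Second, since $h = \mathbf{1}_A$ is not smooth, I would approximate it by the Gaussian convolution $h_\varepsilon = \mathbf{1}_A * \phi_\varepsilon$. Standard semigroup estimates give $\|D^3 f_\varepsilon\|_\infty = O(\varepsilon^{-1})$ for the Stein solution associated to $h_\varepsilon$. The smoothing error $\bigl|\Expected{h_\varepsilon(Y)} - \Pr{Y \in A}\bigr|$ is controlled by the $Y$-measure of the $\varepsilon$-tubular neighborhood of $\partial A$. Here the convexity of $A$ is decisive: Ball's sharp bound on the Gaussian surface area of convex sets controls this neighborhood measure under the Gaussian target by a quantity of order $\varepsilon$ times a polynomial factor in the ambient dimension. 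Optimizing $\varepsilon$ to balance the smoothing error against the Stein bound produces the announced coefficient $42\, d^{1/4} + 16$.

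The main obstacle is that Ball's perimeter estimate is inherently a Gaussian statement, whereas we also need an analogous tubular bound under the distribution of $Z$ itself. This is addressed by a bootstrapping argument: once a preliminary bound on $\bigl|\Pr{Z \in B} - \Pr{X \in B}\bigr|$ is available over a suitable class of convex sets $B$ (including the $\varepsilon$-neighborhoods of $\partial A$, which remain essentially convex), the perimeter estimate can be transferred from $X$ to $Z$ at the cost of a controllable constant. Iterating this step, combined with sharp quantitative bounds on Ornstein--Uhlenbeck derivatives of $h_\varepsilon$ and careful tracking of numerical constants throughout the Taylor remainder and the smoothing trade-off, produces the explicit values $42$ and $16$ in the final inequality.
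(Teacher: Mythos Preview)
The paper does not prove this theorem at all: it is quoted verbatim as Theorem~1.1 of Rai\v{c}~\cite{Raic_2019} and used purely as a black box in the proof of \cref{thm:chunglueq}. So there is nothing in the paper to compare your argument against.

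That said, your outline is essentially the strategy Rai\v{c} uses in the cited reference: Stein's method via the Ornstein--Uhlenbeck semigroup, Gaussian smoothing of the indicator $\mathbf{1}_A$, Ball's Gaussian perimeter bound for convex sets to control the tubular neighborhood of $\partial A$, and an optimization in the smoothing parameter. One point worth flagging: in Rai\v{c}'s original statement the factor $42(\cdot)^{1/4}+16$ depends on the ambient dimension of the target space (here $m$), not on the number of summands; the appearance of $d^{1/4}$ in the paper's restatement is a notational slip that happens to be harmless in the application because the authors only need the right-hand side to be $o_d(1)$. Your sketch inherits this slip, and your explanation of where the exponent $1/4$ comes from (Ball's surface-area bound scaling like $m^{1/4}$) would in fact produce $m^{1/4}$, not $d^{1/4}$.
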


This illustrates that for $d \rightarrow \infty$, the (random) distance between two vertices behaves like a Gaussian random variable. Throughout this section, we use the following notation. For any $u,v \in V$, we define $\Delta_{(u,v)} \in \mathbb{R}^d$ as the component-wise distance of $u$ and $v$, i.e., $\Delta_{(u,v), i} \coloneqq  \abs{\F{x}_{ui} - \F{x}_{vi}}_C = \min\{\abs{\F{x}_{ui} - \F{x}_{vi}}, 1 - \abs{\F{x}_{ui} - \F{x}_{vi}}\}$. Recall that $t_{uv}$ is the connection threshold of the vertices $u,v$, i.e., $u,v$ are adjacent if and only if their distance is at most $t_{uv}$. We may express \begin{align*}
    \Pr{u \sim v} = \Pr{\norm{\Delta_{(u,v)}}_p \le t_{uv}} = \Pr{\sum_{i=1}^d \Delta_{(u,v),i}^p \le t_{uv}^p}
\end{align*}
and we further note that $\Delta_{(u,v), i}$ and $\Delta_{(u,v), j}$ are i.i.d. random variables. Define \begin{align*}
    \mu \coloneqq \Expected{\Delta_{(u,v), i}^p} \hspace{.5cm} \text{and} \hspace{.5cm} \sigma^2 \coloneqq \Var{\Delta_{(u,v), i}^p}. 
\end{align*} and let the random variable $Z_{(u,v),i}$ be defined as \begin{align*}
    Z_{(u,v),i} \coloneqq \frac{\Delta_{(u,v),i}^p- \mu}{\sqrt{d}\sigma}.
\end{align*}
Now define $Z_{(u,v)} \coloneqq \sum_{i=1}^d Z_{(u,v),i}$ and observe that this allows us to express \begin{align*}
    \Pr{u \sim v} = \Pr{\sum_{i=1}^d \Delta_{(u,v),i}^p \le t_{uv}^p} = \Pr{Z_{(u,v)} \le \frac{t_{uv}^p - d \mu}{\sqrt{d}\sigma}}.
\end{align*} 
Working with $Z_{(u,v)}$ instead of $\Delta_{(u,v)}$ has the advantage that we have $\Expected{Z_{(u,v)}} = 0$ and \begin{align*}
    \Var{Z_{(u,v)}} = \sum_{i=1}^d\Var{Z_{(u,v),i}} = \frac{1}{d\sigma^2} \sum_{i=1}^d\Var{\Delta_{(u,v),i}^p} = 1.
\end{align*} These properties are useful when applying \cref{thm:centrallimit}. Now recall that $t_{uv}$ is defined so that the marginal connection probability $\Pr{u \sim v}$ is equal to $\min\left\{1,\frac{\lambda w_uw_v}{n}\right\}$, which is required in order to ensure that $\Expected{\deg(v)} \propto w_v$ for all $v$. We use this to establish the following lemma describing the asymptotic behavior of the threshold $t_{uv}$.

\begin{restatable}[]{lemma}{limesthreshold}\label{lem:limes}
  Let $G = \GIRG(n, \beta, w_0, d) = (V, E)$ be a standard GIRG with
  $L_p$-norm for $p \in [1, \infty)$.  Denote by $\Phi$ the cumulative
  density function of the standard Gaussian distribution, i.e.,
  $\Phi(x) = \sqrt{2\pi} \int_{-\infty}^x e^{-t^2/2} dt$.  Then for
  any $u,v \in V$ with $u \neq v$ and $\frac{\lambda w_uw_v}{n} < 1$,
  we have
  \begin{equation*}
    \lim_{d\rightarrow \infty} \frac{t_{uv}^p - d\mu}{\sqrt{d}\sigma} = \Phi^{-1}\left(\frac{\lambda w_uw_v}{n}\right).
  \end{equation*}
\end{restatable}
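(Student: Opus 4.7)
My plan is to combine the defining property of $t_{uv}$ — namely that it makes the marginal connection probability equal $\min\{1,\lambda w_u w_v/n\}$ — with the Berry-Esseen bound already quoted as \cref{thm:centrallimit}, specialised to the univariate case $m=1$. Set $\alpha_d := (t_{uv}^p - d\mu)/(\sqrt{d}\sigma)$; the lemma is then equivalent to $\alpha_d \to \Phi^{-1}(\lambda w_u w_v/n)$. Because $p<\infty$ and by the identity already noted in the text,
\[
\Pr{u \sim v} \;=\; \Pr{\sum_{i=1}^d \Delta_{(u,v),i}^p \le t_{uv}^p} \;=\; \Pr{Z_{(u,v)} \le \alpha_d},
\]
and, using the hypothesis $\lambda w_u w_v/n < 1$ to discard the ``min'' in the definition of the marginal probability, the left-hand side equals exactly $\lambda w_u w_v / n$.

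To show $\alpha_d$ has the claimed limit, I would apply \cref{thm:centrallimit} to the i.i.d.\ summands $Z_{(u,v),i}$ with the convex half-line $A=(-\infty,\alpha_d]$. Since each $\Delta_{(u,v),i}$ lives on $[0,1/2]$, the centred variable $\Delta_{(u,v),i}^p - \mu$ is uniformly bounded by a constant depending only on $p$, so $|Z_{(u,v),i}| \le c/\sqrt{d}$ and therefore $\sum_{i=1}^d \Expected{|Z_{(u,v),i}|^3} = O(d^{-1/2})$. Multiplying by the prefactor $42d^{1/4}+16$ yields
\[
\bigl|\Pr{Z_{(u,v)} \le \alpha_d} - \Phi(\alpha_d)\bigr| \;=\; O\bigl(d^{-1/4}\bigr),
\]
so $\Phi(\alpha_d) = \lambda w_u w_v/n + O(d^{-1/4})$.

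Since $\lambda w_u w_v / n$ lies in the open interval $(0,1)$, the right-hand side is eventually trapped in a compact subinterval of $(0,1)$, which forces $\alpha_d$ itself to stay bounded, and continuity of $\Phi^{-1}$ on $(0,1)$ then gives $\alpha_d \to \Phi^{-1}(\lambda w_u w_v/n)$, which is the stated limit. The only piece of bookkeeping worth flagging — and really the only place where something could conceivably go wrong — is verifying that $\sigma$, the standard deviation of $\Delta^p$ where $\Delta$ is uniform on $[0,1/2]$, is a strictly positive constant so that the normalisations are well defined; but this is immediate since $\Delta^p$ is a non-constant bounded random variable.
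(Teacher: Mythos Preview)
Your proof is correct and follows essentially the same route as the paper: apply the Berry--Esseen bound of \cref{thm:centrallimit} in the univariate case to conclude $\Phi(\alpha_d)\to \lambda w_uw_v/n$, then invert via continuity of $\Phi^{-1}$. Your final step (directly pushing the limit through $\Phi^{-1}$) is arguably a shade cleaner than the paper's, which instead compares the two CDF values $\Pr{Z\le \alpha_d}$ and $\Pr{Z\le \Phi^{-1}(\lambda w_uw_v/n)}$ and appeals to uniform convergence of $\Pr{Z\le\cdot}$ to $\Phi$; but the underlying idea is identical.
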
 
\begin{proof}
  Let $u,v$ be fixed. In the remainder of this proof, we abbreviate
  $Z_{(u,v)}$ with $Z$, and $Z_{(u,v),i}$ with $Z_i$. For every
  $c\in\mathbb{R}$, define the set
  $A_c = \left\{ x \in \mathbb{R} \mid x \le c \right\}$. Let further
  $X \sim \mathcal{N}(0, 1)$ be a standard Gaussian random
  variable. We get from \cref{thm:centrallimit}
  that \begin{align}\label{eq:clim}
         \abs{\Pr{Z \le c } - \Pr{X \le c }} &\le (42d^{1/4} + 16)\sum_{i=1}^d \Expected{\abs{Z_i}^3} \nonumber \\
                                             &= \frac{42d^{1/4} + 16}{d^{3/2}\sigma^3} \sum_{i=1}^d \Expected{\abs{\Delta_{(u,v), i}^p - \mu}^3} \nonumber \\
                                             &\le \frac{42d^{1/4} + 16}{d^{1/2}\sigma^3}\\
                                             &= o_d(1)
    \end{align} because $\Delta_{(u,v), i}^p - \mu \in [-1, 1]$,
    which shows that $Z$ converges to a standard Gaussian random variable as $d \rightarrow \infty$. In particular, this statement is true for $c = \Phi^{-1}\left(\frac{\lambda w_uw_v}{n}\right)$. At the same time, \begin{align*}
        \Pr{X \le \Phi^{-1}\left(\frac{\lambda w_uw_v}{n}\right)}  = \frac{\lambda w_uw_v}{n} = \Pr{Z \le \frac{t_{uv}^p - d \mu}{\sqrt{d}\sigma}},
    \end{align*} where the second step follows by the definition of $t_{uv}$ and $Z$. Hence, by (\ref{eq:clim}), \begin{align*}
        \lim_{d \rightarrow \infty }\abs{\Pr{Z \le \Phi^{-1}\left(\frac{\lambda w_uw_v}{n}\right) } - \Pr{Z \le \frac{t_{uv}^p - d \mu}{\sqrt{d}\sigma}}} = 0.
    \end{align*} 
    Since the function $f(c) = \Pr{Z \le c}$ converges to the cumulative density function $\Phi$ of the standard Gaussian distribution and since this function is continuous and strictly monotonically increasing, we infer that \begin{align*}
        \lim_{d\rightarrow \infty} \frac{t_{uv}^p - d \mu}{\sqrt{d}\sigma} = \Phi^{-1}\left(\frac{\lambda w_uw_v}{n}\right).
    \end{align*}
\end{proof}
With this, we prove the main theorem of this section. 
\begin{proof}[Proof of \cref{thm:chunglueq} for $p \in [1, \infty)$]
  As $n$ is fixed, the set $\mathcal{G}(n)$ is finite and so it
  suffices to show that for all $H \in \mathcal{G}(n)$, we have
\begin{equation}\label{eq:limes}
  \lim_{d \rightarrow \infty}\Pr{G_\text{GIRG} = H} = \Pr{G_\text{IRG} = H}.
\end{equation}
First of all, we note that for any $u,v\in V$ with
$\frac{\lambda w_uw_v}{n} \ge 1$, $u$ and $v$ are guaranteed to be
connected in both $G_\text{GIRG}$ and $G_\text{IRG}$. Hence, for every
$H \in \mathcal{G}(n)$ in which $u$ and $v$ are not connected, we get
$\Pr{G_\text{GIRG} = H} = \Pr{G_\text{IRG} = H} = 0$. For this reason,
it suffices to show \cref{eq:limes} for all $H \in \mathcal{G}(n)$ in
which all $u,v\in V$ with $u\neq v$ and
$\frac{\lambda w_uw_v}{n} \ge 1$ are connected.

Let $H$ be an arbitrary but fixed such graph. We define the set
$\mathcal{Q} = \{ (u,v) \mid 1 \le u < v \le n, \frac{\lambda w_u
  w_v}{n} < 1 \}$, which contains all pairs of vertices that are not
connected with probability $1$. For any event $\Event{}$, we denote by
$\mathds{1}(\Event{})$ the random variable that is $1$ if $\Event{}$
occurs and $0$ otherwise. Similarly, we denote by $H_{(u,v)}$ an
indicator variable that is $1$ if the edge $\{u,v\}$ is present in $H$
and $0$ otherwise. Furthermore, for every $(u,v) \in \mathcal{Q}$, we
define $\mathcal{N}_{(u,v)}$ to be an independent standard Gaussian
random variable. Then, \begin{equation*} \Pr{G_\text{IRG} = H} =
  \Pr{\bigcap_{(u,v)\in\mathcal{Q}} \left(
      \mathds{1}\left(\mathcal{N}_{(u,v)} \le \Phi^{-1}\left(
          \frac{\lambda w_uw_v}{n} \right) \right) = H_{(u,v)} \right)
  }.
\end{equation*}
Furthermore, recall the definition of $Z_{(u,v)}$ and observe \begin{equation*}
    \Pr{G_\text{GIRG} = H} = \Pr{\bigcap_{(u,v)\in\mathcal{Q}} \left( \mathds{1}\left( Z_{(u,v)} \le \frac{t_{uv}^p - d\mu}{\sqrt{d}\sigma} \right) = H_{(u,v)} \right) }.
\end{equation*} In addition, we define the random graph $\tilde{G}$ in which all $u\neq v\in V$ with $\frac{\lambda w_uw_v}{n} \ge 1$ are guaranteed to be connected, and in which for every $(u,v) \in \mathcal{Q}$, the edge $\{u,v\}$ is present if and only if $Z_{(u,v)} \le \Phi^{-1}\left(\frac{\lambda w_uw_v}{n}\right)$. Accordingly, \begin{equation*}
    \Pr{\tilde{G} = H} = \Pr{\bigcap_{(u,v)\in\mathcal{Q}} \left( \mathds{1}\left( Z_{(u,v)} \le \Phi^{-1}\left( \frac{\lambda  w_uw_v}{n} \right) \right) = H_{(u,v)} \right) }.
\end{equation*}
From \cref{lem:limes}, we get that \begin{align*}
    \lim_{d \rightarrow \infty} \frac{t_{uv}^p - d\mu}{\sqrt{d}\sigma} = \Phi^{-1}\left(\frac{\lambda w_uw_v}{N}\right)
\end{align*} and so, \begin{equation*}
    \lim_{d\rightarrow \infty} \abs{\Pr{G_\text{GIRG} = H} - \Pr{\tilde{G} = H} } = 0.
\end{equation*} It therefore only remains to show $\lim_{d \rightarrow \infty} \Pr{\tilde{G} = H} = \Pr{G_\text{IRG} = H}$.

For this, we let $m = |\mathcal{Q}|$ and we define the random vector $Z_i \in \mathbb{R}^{m}$ that has the random variables $Z_{(u,v),i}$ as its components for all $(u,v) \in \mathcal{Q}$. We further define $Z \coloneqq \sum_{i = 1}^d Z_i$.
We use \cref{thm:centrallimit} and define the set $A \subseteq
\mathbb{R}^{m}$ such that
\begin{align*}
  x \in A \Leftrightarrow \forall (u,v) \in
  \mathcal{Q}: \begin{cases}
                 x_{(u,v)} \le \Phi^{-1}\left(\frac{\lambda  w_uw_v}{n} \right) & \text{if } H_{(u,v)} = 1 \\
                 x_{(u,v)} > \Phi^{-1}\left( \frac{\lambda w_uw_v}{n}\right) & \text{otherwise.}
               \end{cases} 
\end{align*}
It is easy to observe that $A$ is convex. We get
$\Pr{\tilde{G} = H} = \Pr{Z \in A}$ and
$\Pr{G_\text{IRG} = H} = \Pr{X \in A}$ where $X$ is a random variable
from the standard $m$-variate normal distribution.

We further note that for all $(u,v) \in \mathcal{Q}$, the random variables $\Delta_{(u,v),1}, \ldots, \Delta_{(u,v),d}$ are independent, which implies that $Z_1, \ldots, Z_d$ are independent as well. Furthermore, they have expectation $0$ and for all $1 \le i, j \le d, (u,v),(s,t) \in \mathcal{Q}$ with $(u,v) \neq (s, t)$, the random variables $\Delta_{(u,v),i}$ and $\Delta_{st,j}$ are independent, even if $i = j$ and $\{u,v\} \cap \{s,t\} \neq \emptyset$ (because the torus is a homogeneous space). This implies that also $Z_{(u,v),i}$ and $Z_{(s,t),j}$ as well as $Z_{(u,v)}$ and $Z_{(s,t)}$ are independent. Hence, $\text{Cov}\left[ Z_{(u,v)},Z_{(s,t)} \right] = 0$. Accordingly, the covariance matrix of $Z$ is the identity matrix. Thus, \cref{thm:centrallimit} implies \begin{align*}
    \abs{\Pr{Z \in A}\! -\! \Pr{X \in A}} &\le (42d^{1/4} + 16)\sum_{i=1}^{d}\Expected{\norm{Z_i}^3}\\
    &= (42d^{1/4} + 16)\sum_{i=1}^{d} \Expected{\left(\sum_{(u,v)\in \mathcal{Q}} Z_{(u,v),i}^2\right)^{3/2}}  \\
    &= (42d^{1/4} + 16)\sum_{i=1}^{d}\frac{1}{d^{3/2}\sigma^3} \Expected{\left( \sum_{(u,v)\in \mathcal{Q}}\!\!\!\left(\Delta_{(u,v),i}^p - \mu\right)^2\right)^{3/2}}\\
    &\le (42d^{1/4} + 16)\sum_{i=1}^{d} \frac{m^{3/2}}{d^{3/2}\sigma^3}\\
    &= (42d^{1/4} + 16) \frac{dm^{3/2}}{d^{3/2}\sigma^3}\\
    &= \frac{m^{3/2}}{\sigma^3} \left(\frac{42}{d^{1/4}} + \frac{16}{d^{1/2}}\right) = o_d(1),
\end{align*}
as $\Delta_{(u,v), i}^p - \mu \in [-1, 1]$ and $m \le
\binom{n}{2}$. This shows \begin{align*} \lim_{d \rightarrow \infty}
  \Pr{\tilde{G} = H} = \Pr{G_\text{IRG} = H},
\end{align*} as desired.
\end{proof} 

As mentioned before, the above result helps in getting an intuition
for how the choice of the underlying ground space of geometric random
graphs affects the impact of an increasing dimensionality.  Recall
that RGGs on the hypercube do \emph{not} converge to Erdős–Rényi
graphs as $n$ is fixed and $d \rightarrow \infty$~\cite{dc-r-02,
  Erba_Ariosto_Gherardi_Rotondo_2020}.  However, our results imply
that they do when choosing the torus as ground space.  These apparent
disagreements are despite the fact that we apply the central limit
theorem similarly.

As discussed before, the above proof relies on the fact that, for
independent zero-mean variables $Z_1, \dots, Z_d$, the covariance
matrix of the random vector $Z = \sum_{i=1}^d Z_i$ is the identity
matrix.  This is due to the fact that the torus is a homogeneous
space, implying that the probability measure of a ball of radius $r$
is the same, regardless of where this ball is centered.  This makes
the random variables $Z_{(u,v)}$ and $Z_{(u, s)}$ independent.  As a
result their covariance is $0$ although both ``depend'' on the
position of $u$.

For the hypercube, this is not the case.  Although the distance of two
vertices can analogously be defined as a sum of independent, zero-mean
random vectors over all dimensions just like we do above, (the only
difference being that $\Delta_{(u,v),i}$ is now the distance between
$u, v$ in dimension $i$ in the hypercube, leading to different values
of $\mu$ and $\sigma^2$) the random variables $Z_{(u,v)}$ and
$Z_{(u, s)}$ do \emph{not} have a covariance of $0$.

In fact, one can verify that for every $1 \le i \le d$, there is a
slightly positive covariance between $\Delta_{(u,v),i}$ and
$\Delta_{(u,s),i}$ (equal to $1/180$).  This transfers to a covariance
between $Z_{(u, v)}$ and $Z_{(u, s)}$, which stays constant as $d$
grows, since
\begin{align*}
  \text{Cov}[Z_{(u,v)}, Z_{(u,s)}] &= \Expected{Z_{(u,v)} \cdot Z_{(u,s)}}\\
                                   &= \sum_{i=1}^d\sum_{j=1}^d \Expected{Z_{(u,v),i} \cdot Z_{(u,s),j}}\\
                                   &= \frac{1}{d\sigma^2} \sum_{i=1}^d \text{Cov}[\Delta_{(u,v),i}^p\Delta_{(u,s),i}^p]\\
                                   &=
                                     \frac{\text{Cov}[\Delta_{(u,v),1}^p\Delta_{(u,s),1}^p]}{\sigma^2},
\end{align*}
where we used that $\Expected{Z_{(u,v),i} \cdot Z_{(u,s),j}} = 0$ if
$i \neq j$.  Accordingly, the covariance matrix $\Sigma$ of $Z$ is not
the identity matrix. Nevertheless, our proof from the previous section
still applies if we replace $Z$ by $Y = \sum_{i=1}^d \Sigma^{-1} Z_i$.
Now $Y$ is the sum of independent random vectors and has the identity
matrix as its covariance matrix, so \cref{thm:centrallimit} remains
applicable.  Furthermore, $\Expected{||\Sigma^{-1} Z_i ||^3}$ is still
proportional to $d^{-3/2}$ and thus remains bounded such that $Y$
converges to a standard $m$-dimensional normal vector. This implies
that $Z$ converges to a random vector from $\mathcal{N}(0, \Sigma)$
showing that RGGs on the hypercube converge to a non-geometric model
where the probability that any fixed graph is sampled can be described
-- like above -- as the probability that
$Z \sim \mathcal{N}(0, \Sigma)$ falls into the convex set $A$. In this
model, however, the edges are not drawn independently, as $\Sigma$ is
not the identity matrix.  In fact, for any three vertices $s, u, v$,
the components $Z_{(u,v)}$ and $Z_{(u,s)}$ are slightly positively
correlated, so there is a slightly higher probability that $s, u, v$
form a triangle than in a corresponding Erdős–Rényi graph.  This leads
to a higher tendency to form cliques, which is in accordance with the
observations from Erba et
al.~\cite{Erba_Ariosto_Gherardi_Rotondo_2020}.

\subsection{Asymptotic Equivalence for $L_\infty$-norm}

In this section, we prove that our model also loses its geometry if $L_\infty$-norm is used. We use a different technique to prove this theorem, as the $L_\infty$-distance between two vertices is not a sum of independent random variables anymore and central limit theorems do not apply anymore. Instead our argument builds upon the bounds we establish in \cref{sec:highdimbounds}.

\begin{proof}[Proof of \cref{thm:chunglueq} for $L_\infty$-norm]
  We show that for all $H \in
  \mathcal{G}(n)$,
  \begin{align}\label{eq:toshow}
    \lim_{d \rightarrow \infty} \Pr{G_{\text{GIRG}} = H} = \Pr{G_{\text{IRG}} = H}.
  \end{align}
  We start by establishing a way to compute $\Pr{G = H}$ for any
  random variable $G$ representing a distribution over all graphs in
  $\mathcal{G}(n)$. For this, we denote by $E(H)$ the set of edges of
  a graph $H = (V, E) \in \mathcal{G}(n)$. We further let
  $\binom{V}{2}$ be the set of all possible edges on the vertex set
  $V$. Now, for any $H \in \mathcal{G}$, we have \begin{align*} \Pr{G
      = H} = \Pr{E(G) \supseteq E(H)} - \sum_{E(H) \subset \mathcal{A}
      \subseteq \binom{V}{2} } \Pr{E(G) = \mathcal{A}}.
    \end{align*} That is, we may express the probability that $G$ is sampled as the probability that a supergraph of $G$ is sampled minus the probability that a any proper supergraph of $G$ is sampled. Now, for any $E(H) \subset \mathcal{A} \subseteq \binom{V}{2}$, the probability $\Pr{E(G) = \mathcal{A}}$ is the probability that $G$ is a specific graph with at least $|E(H)| + 1$ edges. Now, we may repeatedly substitute terms of the form $\Pr{E(G) = \mathcal{A}}$ in the same way until we have an (alternating) sum consisting only of terms that have the form $\Pr{E(G) \supseteq \mathcal{A}}$ for some $E(H) \subset \mathcal{A} \subseteq \binom{V}{2}$. 
    That is, we may calculate the probability $\Pr{G = H}$ even if we only know $\Pr{E(G) \supseteq \mathcal{A}}$ for any $\mathcal{A} \subseteq \binom{V}{2}$.

    As $n$ is fixed, in order to prove our statement in
    \cref{eq:toshow}, it suffices to prove that, for each
    $\mathcal{A} \subseteq \binom{V}{2}$, we have \begin{align*}
      \lim_{d \rightarrow \infty} \Pr{E(G_{\text{GIRG}}) \supseteq
        \mathcal{A}} = \Pr{E(G_{\text{IRG}}) \supseteq \mathcal{A}} =
      \prod_{\{i,j\} \in \mathcal{A}} \frac{\kappa_{ij}}{n}.
    \end{align*} Using \cref{thm:chung_lu_eq}, we get that \begin{align*}
        \Pr{E(G_{\text{GIRG}}) \supseteq \mathcal{A}} = \prod_{\{i,j\} \in \mathcal{A}} \left( \frac{\kappa_{ij}}{n} \right)^{1 \mp \mathcal{O}_d\left( \frac{\ln(n)}{d} \right) }.
   \end{align*} For $d \rightarrow \infty$, this clearly converges to $\Pr{E(G_{\text{IRG}}) \supseteq \mathcal{A}}$ and the proof is finished.
\end{proof}


\bibliographystyle{siamplain}
\bibliography{bibliography.bib}
\end{document}